\DeclareMathAlphabet{\mathfrak}{U}{jkpmia}{m}{it}
\SetMathAlphabet{\mathfrak}{bold}{U}{jkpmia}{bx}{it}
\colorlet{shadecolor}{orange!15}
\newcommand{\Yes}{{\textsf{Yes}}}
\newcommand{\No}{{\textsf{No}}}
\newcommand{\rd}{{\sf reldist}}
\newcommand{\reldist}{{\sf reldist}}
\newcommand{\dist}{{\sf dist}}
\newcommand{\dtv}{\mathrm{d}_{\mathrm{TV}}}
\newcommand{\absdist}{\Delta}
\newcommand{\zos}{\{0,1,*\}}
\newcommand{\SAMP}{\mathrm{SAMP}}
\newcommand{\MQ}{\mathrm{MQ}}
\newcommand\Algphase[1]{%
\vskip0.05in
\Statex\hspace*{-\algorithmicindent}\textbf{#1}%
\vskip0.05in
}
\newcommand{\UNATE}{\mathsf{Unate}}
\def\Edge{\textsf{Edge}}
\def\KnownN{\hyperref[algo:Known-N]{Unateness-Tester-Known-N}}
\def\BiasedTest{\hyperref[algo: BiasedTest]{BiasedTest}}
\def\UnbiasedTest{\hyperref[algo: UnbiasedTest]{UnbiasedTest}}
\def\IterativeBias{\hyperref[algo: IterativeBias]{IterativeBias}}
\def\fixed{\mathsf{Fixed}}
\def\unfixed{\mathsf{Unfixed}}
\newcommand{\Dyes}{\calD_{\mathrm{yes}}}
\newcommand{\Dno}{\calD_{\mathrm{no}}}
\def\TT{{\bT}}
\def\ff{{\boldf}}
\def\Dy{\mathcal{D}_{\text{yes}}}
\def\Dn{\mathcal{D}_{\text{no}}}
\def\bGamma{\boldsymbol{\Gamma}}
\def\HH{{\bH}}
\def\hh{{\bh}} 
\def\Ey{\mathcal{E}_{\text{yes}}}
\def\En{\mathcal{E}_{\text{no}}}
\title{Relative-error unateness testing}
\author{
    Xi Chen \\ Columbia University
    \and Diptaksho Palit\thanks{D.P.\ was supported by the U.S. National Science Foundation under Grant No.\ 2022446.}\\ Boston University
    \and Kabir Peshawaria \\ Boston University
    \and William Pires \\ Columbia University
    \and Rocco A.\ Servedio\thanks{R.A.S.\ was supported by NSF Awards CCF-2211238 and CCF-2106429.}\\  Columbia University
    \and Yiding Zhang \\ Boston University
}
\begin{document}

\maketitle

\begin{abstract}
The model of \emph{relative-error property testing} of Boolean functions has been the subject of significant recent research effort \cite{CDHLNSY2024,rel-error-conj-DL,rel-error-junta}.
In this paper we consider the problem of relative-error testing an unknown and arbitrary $f: \zo^n \to \zo$ for the property of being a \emph{unate} function, i.e.~a function that is either monotone non-increasing or monotone non-decreasing in each of the $n$ input variables.

Our first result is a one-sided non-adaptive algorithm for this problem that makes $\tilde{O}(\log(N)/\eps)$ samples and queries, where $N=|f^{-1}(1)|$ is the number of satisfying assignments of the function that is being tested and the value of $N$ is given as an input parameter to the algorithm.
Building on this algorithm, we next give a one-sided adaptive algorithm for this problem that does not need to be given the value of $N$ and with high probability makes $\tilde{O}(\log(N)/\eps)$ samples and queries.

We also give lower bounds for both adaptive and non-adaptive two-sided algorithms that are given the value of $N$ up to a constant multiplicative factor. In the non-adaptive case, our lower bounds essentially match the complexity of the algorithm that we provide.

\end{abstract}

\pagenumbering{gobble}

\newpage

\pagenumbering{arabic}


\newpage

\section{Introduction}

A Boolean function $f: \zo^n \to \zo$ is said to be \emph{unate} if for each of the $n$ individual input variables, $f$ is either monotone non-decreasing or monotone non-increasing in that variable. 
(An alternate equivalent characterization is that $f$ is unate if there is an $n$-bit string $s \in \zo^n$ such that $g(x)=f(x \oplus s)$ is monotone, where $\oplus$ denotes bitwise exclusive-OR.)
Together with monotonicity, unateness has been intensively investigated from the vantage point of \emph{property testing}.  Indeed, unateness testing was already studied in the influential early work of Goldreich et al.~that set the stage for much contemporary research on Boolean function property testing \cite{GGLRS}, and it has since been studied in many other papers including \cite{CS13a,BMPR16,KhotShinkar16,BCPRS17,CWX17stoc,CWX17focs,LW18,CW19,BCPRS20,CDLNS24}.

For the case of Boolean-valued functions $f: \zo^n \to \zo$ over the Boolean hypercube, the state-of the-art upper and lower bounds on the query complexity of testing unateness in the standard property testing model\footnote{Recall that in the standard model, an $\eps$-testing algorithm is given query access to an unknown and arbitrary $f: \zo^n \to \zo$. The algorithm must accept with high probability if $f$ has the property of interest and must reject with high probability if for every function $g$ with the property, $f$ and $g$ disagree on at least an $\eps$-fraction of all points in $\zo^n$.} were given in \cite{BCPRS20,CWX17stoc,CW19}.
In \cite{BCPRS20} Baleshzar et al.~gave a $\tilde{O}(n/\eps)$-query one-sided\footnote{In the standard Boolean function property testing model, an algorithm has \emph{one-sided error} if it accepts with probability 1 whenever the function being tested has the desired property.  A testing algorithm which need not satisfy this guarantee is said to have \emph{two-sided error}.}, non-adaptive\footnote{A testing algorithm in the standard model is \emph{non-adaptive} if its choice of the $(i+1)$-st point to query does not depend on the results of the first $i$ queries, i.e.~such an algorithm may ``make all of its queries at once.''} $\eps$-tester for unateness of $n$-variable Boolean functions.
In \cite{CWX17stoc}, Chen, Waingarten and Xie gave a $\tilde{\Omega}(n)$ lower bound for any one-sided, non-adaptive algorithm that tests unateness to a certain constant error parameter $\eps>0$, and additionally gave a  $\tilde{\Omega}(n^{2/3})$ query lower bound for any two-sided adaptive algorithm.
In \cite{CW19} Chen and Waingarten gave a one-sided, adaptive algorithm that $\eps$-tests an arbitrary Boolean function for unateness using $\tilde{O}(n^{2/3}/\eps^2)$ queries.  Taken together these results provide essentially matching upper and lower bounds both for adaptive and non-adaptive unateness testing.

Given that unateness testing is quite well-understood in the standard property testing model, it is natural that researchers have turned to study unateness testing in more challenging extensions of the basic model.
For example, the works \cite{LW18,CDLNS24} have studied the problem of unateness in the \emph{tolerant} testing model.  

\medskip

\noindent {\bf This work: Relative-error unateness testing}.
In this paper, we consider the problem of unateness testing in a different extension of the standard Boolean function property testing model, namely the recently-introduced model of \emph{relative-error} property testing.
The relative-error model was introduced by Chen et al.~\cite{CDHLNSY2024}, motivated by the fact that the standard Boolean function property testing model is not well suited for testing \emph{sparse} Boolean functions (functions with very few satisfying assignments); this is simply because any function $f: \zo^n \to \zo$ with less than an $\eps$ fraction of satisfying assignments will trivially be $\eps$-close, in the standard model, to the constant-0 function.
To circumvent this, in the \emph{relative-error} Boolean function property testing model introduced by \cite{CDHLNSY2024} the distance between the function $f: \zo^n \to \zo$ that is being tested and a function $g: \zo^n \to \zo$ is defined to be
\[
\reldist(f,g) := {\frac {|f^{-1}(1) \hspace{0.05cm} \triangle \hspace{0.05cm} g^{-1}(1)|} {|f^{-1}(1)|}}.
\]
Hence relative distance is measured ``at the scale'' of the function $f$ that is being tested, i.e.~$|f^{-1}(1)|$, rather than at the ``absolute scale'' of $2^n =$ $ |\zo^n|$ that is used in the standard model; note that the relative-error way of measuring the distance between $f$ and $g$ makes sense even for very sparse functions $f$.
The model also allows the testing algorithm to obtain i.i.d.~uniform elements of $f^{-1}(1)$ by calling a ``random sample'' oracle; this is because if, as in the standard model, only black-box queries were allowed, then for very sparse $f$ a very large number of queries would be required to obtain any information about $f$ at all.  See \Cref{sec:rel-error} for a more detailed description of the relative-error model.

The original work of Chen et al.~\cite{CDHLNSY2024} showed, roughly speaking, that relative-error testing can never be easier than standard-model testing, and that for some (artificial) classes of functions it can be much harder, requiring $2^{\Omega(n)}$ queries even though these classes can be tested with $O(1)$ queries in the standard model (see \cite{CDHLNSY2024} for details).  It is natural then to ask, for various properties of interest that are known to have efficient standard-model testing algorithms, how hard they are to test in the relative-error model.
Several recent papers have shown that for some highly structured ``syntactic'' properties, such as being a Boolean conjunction or decision list \cite{rel-error-conj-DL} or being a $k$-junta \cite{rel-error-junta}, the query complexity of relative-error testing is essentially no larger than the complexity of standard-model testing.  A more subtle and interesting phenomenon, though, was evident in the results obtained for relative-error monotonicity testing in the original \cite{CDHLNSY2024} paper; we turn to this next.

The main results of \cite{CDHLNSY2024} for relative-error monotonicity testing were upper and lower bounds on the query complexity\footnote{For simplicity of exposition, we refer to the total number of random samples plus black-box queries made by a relative-error testing algorithm as its ``query complexity.''} of relative-error monotonicity testing that \emph{depended on the sparsity $N=|f^{-1}(1)|$ of the function $f$ being tested.}
The main positive result of \cite{CDHLNSY2024} is a one-sided adaptive algorithm which is an $\eps$-relative-error tester for monotonicity, and with high probability makes at most $O(\log(N)/\eps)$ queries, even when the value of $N$ is not known to the testing algorithm. Note that while this bound is $O(n/\eps)$ if $N=2^{\Theta(n)}$, it can be much smaller if $N$ is small.
On the lower bound side, \cite{CDHLNSY2024} showed that there is a constant $\eps=\Theta(1)$ such that for any constant $c<1$, any two-sided non-adaptive\footnote{See \Cref{sec:rel-error} for the formal definition of non-adaptive algorithms under the relative-error model.} $\eps$-relative-error tester for monotonicity must make $\tilde{\Omega}(\log N)$ queries even if $N$ is given to the algorithm provided that $N<2^{cn}$, and that any two-sided adaptive algorithm must make $\tilde{\Omega}((\log N)^{2/3})$ queries.  
Thus, unlike the ``syntactic'' properties of being a conjunction, decision list, or junta, for the property of monotonicity the sparsity of the function plays an essential role in the relative-error testing complexity.

In summary, given the relative-error testing results for monotonicity and other properties described above, the important role that unateness has played in standard-model property testing, and the fact that we understand standard-model unateness testing quite well (as described earlier), it is natural to consider the complexity of \emph{relative-error unateness testing}.  This is the subject of the current paper; we turn to describe our results below.

\subsection{Our results}

\noindent {\bf Upper bounds.}
We first give a one-sided, non-adaptive algorithm which takes as input the value of $N=|f^{-1}(1)|$:

\begin{theorem} [One-sided non-adaptive algorithm, known $N$, informal version of \Cref{thm:known-N}] \label{thm:firstpositive}
There is a one-sided, non-adaptive algorithm, \hyperref[algo:Known-N]{Unateness-Tester-Known-$N$}, for $\eps$-relative-error testing of unateness of an unknown and arbitrary function $f: \zo^n \to \zo$.  The \hyperref[algo:Known-N]{Unateness-Tester-Known-$N$} algorithm is given as input the value of $N$, and makes $\tilde{O}(\log(N)/\epsilon)$ samples and queries.
\end{theorem}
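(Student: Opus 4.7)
The plan is to adapt the edge-sampling template of the standard-model one-sided non-adaptive unateness tester of Baleshzar et al.~\cite{BCPRS20} to the relative-error setting. Draw $M = \tilde{O}(\log N / \eps)$ samples $x^{(1)},\ldots,x^{(M)}$ from $f^{-1}(1)$ using the sample oracle; for each $t$, pick a coordinate $i^{(t)} \in [n]$ (from a distribution that may depend on $x^{(t)}$) and query $f(x^{(t)} \oplus e_{i^{(t)}})$. When $f(x^{(t)}) \neq f(x^{(t)} \oplus e_{i^{(t)}})$, record a directional label for coordinate $i^{(t)}$: an ``increasing'' label if $x^{(t)}_{i^{(t)}} = 1$, and a ``decreasing'' label if $x^{(t)}_{i^{(t)}} = 0$. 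The tester rejects iff some coordinate accrues both labels. One-sided completeness is immediate: if $f$ is unate with orientation vector $s$, then only the label consistent with $s_i$ can ever be recorded for coordinate $i$, so no conflict is possible.

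For soundness I would proceed in two steps. First, a structural lemma: if $f$ is $\eps$-far from unate in relative distance, then there is a collection of $\Omega(\eps N)$ ``matched violation pairs'' --- pairs of violating edges along a common coordinate, one increasing and one decreasing, whose $f^{-1}(1)$-endpoints are pairwise distinct. This should follow from a Hall-style matching argument as in \cite{BCPRS20}: repairing $f$ to a unate function requires flipping at least one satisfying endpoint from each such pair, so matching size lower bounds the relative distance. Second, a sampling analysis: given such a matching, show that $M = \tilde{O}(\log N/\eps)$ anchored edge-samples cover both sides of some matched pair with constant probability, via a second-moment computation over the matched pairs.

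The main obstacle is obtaining the $\log N$ rather than $n$ factor, which requires a careful choice of the coordinate distribution. If $i^{(t)}$ were drawn uniformly in $[n]$, the per-sample probability of landing on a ``useful'' coordinate is only $O(\log N / n)$ under the unate structure, forcing $M \geq \Omega(n/\eps)$. The $\log N$ savings come from the structural fact that unateness with $|f^{-1}(1)|=N$ forces $f^{-1}(1)$ to lie inside a Hamming ball of radius $\log_2 N$ around the bit-wise complement $\bar{s}$ of the orientation vector: the monotone version $g(y) = f(y \oplus s)$ is an up-set of size $N$, so every element has weight $\geq n - \log_2 N$. Consequently, for any $x \in f^{-1}(1)$, flipping any of the $\geq n - \log_2 N$ coordinates where $x_i = \bar{s}_i$ keeps the $f$-value at $1$, so only the at-most-$\log_2 N$ remaining coordinates can witness a violation. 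To exploit this, I would bias the coordinate distribution using a short preprocessing phase that estimates $\bar{s}$ crudely from a first batch of $\tilde{O}(\log N / \eps)$ samples (setting $\bar{s}_i$ to the majority value of $x^{(t)}_i$ over the samples), and then draw $i^{(t)}$ uniformly from the $\approx \log_2 N$ coordinates where $x^{(t)}_i$ disagrees with the estimated $\bar{s}_i$. Arguing that this biased sampler still extracts $\Omega(\eps N)$-many matched violations in the sampling analysis --- in particular handling the case where $f$ is $\eps$-far from unate and thus the ball-concentration structure may be broken --- is where the bulk of the technical work lies.
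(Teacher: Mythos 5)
Your high-level plan shares important ingredients with the paper's (the diameter/Hamming-ball observation for unate sparse functions, and the idea of using a batch of samples to estimate a bias vector and then bias the coordinate distribution), but the algorithm you propose has a fatal bug and is missing the mechanism that the paper uses to actually detect violations.

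The bug: your labeling rule makes the label type at coordinate $i$ a \emph{deterministic function of $\hat{\bar{s}}_i$}, so no coordinate can ever accrue both labels and the tester never rejects. Concretely, you only sample $i^{(t)}$ from $\{i : x^{(t)}_i \neq \hat{\bar{s}}_i\}$; so if $\hat{\bar{s}}_i=1$ you always have $x^{(t)}_i=0$ at the moment you probe direction $i$, hence the only label you can ever write down there is ``decreasing,'' and symmetrically for $\hat{\bar{s}}_i=0$. Since $\hat{\bar{s}}$ is fixed after preprocessing, the conflict you propose to detect is structurally impossible. The issue is not a sampling-efficiency gap to be closed by a second-moment argument --- the event ``coordinate $i$ gets both labels'' has probability exactly $0$ under your scheme. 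What is missing is a way to produce a witness in the \emph{agreeing} direction once you see a violation in the disagreeing direction. The paper's \hyperref[algo: ConfirmDirection]{ConfirmDirection} subroutine does precisely this: when a wrong-direction bichromatic edge is found along a fixed coordinate $i$, it takes $O(1)$ fresh samples $\bz$ and checks whether $\{\bz,\bz^{(i)}\}$ is a right-direction bichromatic edge, which succeeds with constant probability by \Cref{lem:BU}. (This can be made non-adaptive by always running it, as noted in \Cref{rem:confirm-dir-nonadaptive}.) Your plan has no analogue of this step.

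Two further places where you are short of a proof. First, the paper does not run a single unified edge-sampler; it partitions coordinates into those where $\tilde{\bd}_i \in \{0,1\}$ and those where $\tilde{\bd}_i = *$, and runs \hyperref[algo: BiasedTest]{BiasedTest} on the former and a Levin-style work-investment procedure (\hyperref[algo: UnbiasedTest]{UnbiasedTest}) on the latter; the case split is driven by \Cref{thm:CS16}/\Cref{clm:h_EQ}, which is close to the ``matching'' lemma you invoke, but the two index sets need genuinely different sampling analyses. Your scheme treats all coordinates the same and even assigns a definite $\hat{\bar{s}}_i\in\{0,1\}$ to the near-balanced coordinates, where your sampler again cannot find both labels. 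Second, you gesture at the difficulty of establishing soundness ``when the ball-concentration structure may be broken'' but do not engage with how the paper handles it: Phase~1 explicitly rejects if any sample is at Hamming distance more than $2\log N$ from the anchor $\ba$ (one-sided by \Cref{lem:small-diameter}), which lets the analysis pass to the truncation $f_{2\log N, \ba}$; and the estimation of $\tilde{\bd}$ with only $O(\log N)$ samples (even when $n \gg N$) crucially relies on the fact that this truncation has at most $O(N\log N)$ nontrivial coordinates (\Cref{claim:trunc-fns-nontrivial-coord-bound}, used in \Cref{claim:Phase-1-4-known-N}), a point your outline does not address.
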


Our second and main algorithm does not need to be given the value of $N$:

\begin{theorem} [One-sided adaptive algorithm, unknown $N$, informal version of \Cref{thm:unknown-N}] \label{thm:secondpositive}
There is a one-sided, adaptive algorithm, \hyperref[algo:unknown]{Unateness-Tester-Unknown-$N$}, for $\eps$-relative-error testing of unateness of an unknown and arbitrary function $f: \zo^n \to \zo$.  With high probability the \hyperref[algo:unknown]{Unateness-Tester-Unknown-$N$} algorithm makes $\tilde{O}(\log(N)/\epsilon)$ samples and queries (and is not given as input the value of $N$).
\end{theorem}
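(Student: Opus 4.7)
The plan is to reduce Theorem 1.2 to Theorem 1.1 via a doubling schedule on the unknown value of $N$, using \KnownN as a subroutine. The outer algorithm maintains a growing sequence of guesses $\hat{N}_i$ and at each iteration invokes \KnownN with parameter $\hat{N}_i$; whenever \KnownN rejects, the outer algorithm rejects. One-sided correctness is inherited directly: since \KnownN never rejects a unate $f$ (it is one-sided and only rejects on the basis of a certificate of non-unateness), neither does the outer algorithm. Soundness follows because if $f$ is $\eps$-far from unate in relative distance, then by Theorem 1.1 any invocation with $\hat{N}_i \geq N$ rejects with high probability, so as long as the outer loop reaches such an iteration before terminating, the algorithm rejects with high probability.

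For the query bound $\tilde{O}(\log N/\eps)$, the naive geometric schedule $\hat{N}_i = 2^i$ uses $\log N$ iterations at per-iteration cost $\tilde{O}(i/\eps)$, which sums to $\tilde{O}((\log N)^2/\eps)$. I would tighten this either by (a) using a doubly-exponential schedule $\hat{N}_i = 2^{2^i}$, so that only $O(\log\log N)$ iterations suffice and the per-iteration budgets $\tilde{O}(2^i/\eps)$ telescope geometrically to $\tilde{O}(\log N/\eps)$; or (b) maintaining a sample pool $S$ shared across iterations and feeding it to successive invocations of \KnownN, so that the total sample count is dominated by the size of $S$ at the terminating iteration.

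The main obstacle is the termination rule. For unate $f$ no iteration of the outer loop rejects, so termination must be driven by a separate signal. A natural candidate is a statistic on the accumulated sample pool, for instance the presence of a duplicate satisfying assignment, which with high probability certifies that $N = O(|S|^2)$. The subtle point is that a duplicate typically appears only once $|S|$ reaches $\Omega(\sqrt{N})$, which can exceed the $\tilde{O}(\log N/\eps)$ budget. I expect that the correct termination rule will need to be tied to the internal structure of \KnownN, for example stopping once the subroutine has ``covered'' the observed samples in a way that lets us conclude $\hat{N}_i \geq N$ with high confidence, rather than relying on a purely collision-based statistic. Carefully verifying that this rule aligns with the doubling schedule, fires within the $\tilde{O}(\log N/\eps)$ budget for every input $f$, and simultaneously preserves both one-sided correctness and soundness is where I expect the proof to demand the most care.
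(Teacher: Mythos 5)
Your proposal takes a genuinely different route from the paper's, and unfortunately both of the gaps you flag, plus one you don't, are serious enough that the reduction as stated does not work.

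First, a gap you did not notice: invoking \KnownN\ with an \emph{underestimate} $\hat N < N$ breaks one-sided correctness. In Phase~1 and Phase~2 of \KnownN, the algorithm rejects outright upon seeing any sample $x$ with $\absdist(x, \ba) > 2\log \hat N$; that rejection is justified precisely because, for the \emph{true} $N$, such a pair certifies non-unateness via \Cref{lem:small-diameter}. With $\hat N < N$, a unate $f$ can legitimately have satisfying assignments at Hamming distance up to $2\log N > 2\log\hat N$, and the subroutine will wrongly reject. So one-sided error is not ``inherited directly'' from \KnownN; you would have to suppress those rejections until you are confident $\hat N \ge N$, which is circular since that is exactly what the stopping rule needs to determine.

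Second, you correctly identify the stopping rule as the crux and correctly rule out a collision-based statistic, but you do not resolve it, and the resolution is where the paper's argument actually lives. The paper does \emph{not} run \KnownN\ as a black box over a schedule of guesses $\hat N_i$. Instead it replaces the two steps of \KnownN\ that depend on $N$ with new primitives that work without knowing $N$. (i) \hyperref[algo: CheckSamples]{CheckSamples} is an $O(\log(1/\delta))$-query test which, given $a, z \in f^{-1}(1)$, rejects with high probability whenever $\absdist(a,z) > 2N$, by querying a random coordinate $\bi \sim a \Delta z$ and checking whether $f(a^{(\bi)}) = f(z^{(\bi)}) = 0$ --- the point being that if $|a \Delta z| > 2N$ then at most $N$ of those $2|a\Delta z|$ neighbours can be satisfying assignments, so the check fires with probability at least $1/2$. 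This gives a constant-cost replacement for the $2\log N$-distance filter, and crucially it is still a genuine certificate of non-unateness, so it preserves one-sided error. (ii) \hyperref[algo: IterativeBias]{IterativeBias} replaces the $O(\log N)$-sample bias estimation with a loop over geometrically growing sample sizes $\ell$, where the termination signal is \emph{statistical convergence}: two independent empirical estimates $\bp_i, \bp_i'$ agreeing to within $1/20$ for all $i$. The analysis then leverages \Cref{claim:trunc-fns-nontrivial-coord-bound} (after truncation to radius $2N$, only $O(N^2)$ coordinates are nontrivial) to argue the loop halts after $O(\log N)$ samples with high probability. So the ``doubling'' idea does appear, but on the \emph{sample size} inside a single run, with a convergence-based stopping rule that is provably within budget, rather than on the guessed value of $N$ with an unresolved stopping rule and a subroutine whose one-sided guarantee degrades on underestimates.
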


\noindent {\bf Lower bounds.}  Our first lower bound holds for non-adaptive relative-error unateness testing algorithms, and applies even when the value of $N=|f^{-1}(1)|$ is given up to a constant multiplicative factor.  This result is as follows:

\begin{theorem}[Non-adaptive lower bound] \label{thm:firstlowerbound}
For any constant $\alpha_0<1$, there exists a constant $\eps_0>0$ such that any two-sided, non-adaptive algorithm for testing whether a function $f$ with $|f^{-1}(1)|=\Theta(N)$ for some given parameter $\smash{N\le 2^{\alpha_0 n}}$ is unate or has relative~distance~at least $\eps_0$ from unate must make $\tilde{\Omega}(\log N)$ queries.
\end{theorem}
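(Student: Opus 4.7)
The plan is to apply Yao's minimax principle. I would construct two distributions $\Dyes$ and $\Dno$ over functions $f: \zo^n \to \zo$ with $|f^{-1}(1)| = \Theta(N)$, such that every $f$ in the support of $\Dyes$ is unate, with high probability $f \sim \Dno$ has relative distance at least $\eps_0$ from every unate function, and no deterministic non-adaptive algorithm with sample-plus-query budget $o(\log N)$ (up to polylog factors) can distinguish the transcript distributions induced by $\Dyes$ and $\Dno$. An averaging argument then gives the claimed bound against randomized two-sided non-adaptive testers.

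To make $\Dyes$ and $\Dno$ concrete, I would embed a standard-model hard instance for unateness into a small subcube of $\zo^n$. Set $M := \lfloor \log_2 N \rfloor$; since $N \leq 2^{\alpha_0 n}$ with $\alpha_0 < 1$, at least a $(1-\alpha_0)$-fraction of the coordinates can be ``wasted.'' Partitioning $[n] = R \sqcup I$ with $|R| = M$, both $\Dyes$ and $\Dno$ are supported on functions of the form
\[
f(x) \;:=\; g(x_R)\cdot \mathbf{1}\bigl[x_I = 0^{n-M}\bigr],
\]
where $g: \zo^M \to \zo$ is drawn from a hard standard-model distribution for unateness testing on $\zo^M$. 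Under $\Dyes$, $g$ is a random unate function with $|g^{-1}(1)| = \Theta(2^M)$; under $\Dno$, $g$ is sampled so as to be $\Omega(1)$-far (in the standard distance on $\zo^M$) from every unate function, still with $|g^{-1}(1)| = \Theta(2^M)$.

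Three properties of this embedding then require verification. First, $|f^{-1}(1)| = |g^{-1}(1)| = \Theta(N)$. Second, $f$ is monotone non-increasing in every coordinate $i \in I$, so $f$ is unate if and only if $g$ is unate. Third, the distance is preserved up to constants: for any unate $h: \zo^n \to \zo$, the restriction of $h$ to the subcube $\{x : x_I = 0^{n-M}\}$ is a unate function on $\zo^M$, so if $g$ is at standard distance at least $\delta$ from every unate function on $\zo^M$, then $f$ and $h$ must disagree on at least $\delta \cdot 2^M = \Omega(\delta N)$ points inside the subcube, and hence $\reldist(f, h) = \Omega(\delta)$. Finally, oracle access reduces cleanly: every sample from $f^{-1}(1)$ is a uniform draw from $g^{-1}(1)$ and, because $g^{-1}(1)$ has density $\Theta(1)$ in $\zo^M$, can be produced by $O(\log q)$ oblivious uniform queries to $g$ (at least one of which lands in $g^{-1}(1)$ with overwhelming probability); queries to $f$ outside the subcube return $0$ deterministically, while those inside become queries to $g$. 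Thus an $o(\log N)$-query relative-error non-adaptive tester on $\zo^n$ would yield a standard-model non-adaptive unateness tester on $\zo^M$ whose query complexity is $o(M)$ up to a polylogarithmic factor.

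The main obstacle is furnishing the required standard-model lower bound: any two-sided non-adaptive $\Omega(1)$-tester for unateness on $\zo^M$ must use $\tilde{\Omega}(M)$ queries. The one-sided $\tilde{\Omega}(M)$ bound of Chen-Waingarten-Xie \cite{CWX17stoc} is thematically close but requires an upgrade to the two-sided regime. I would aim for this by building the $\Dno$ distribution so that each function $g$ in its support plants, along some privileged coordinate $i^*$, both a $0 \to 1$ edge violation and a $1 \to 0$ edge violation in expectation; this rules out either orientation of $i^*$ and forces $g$ to be far from unate rather than merely far from monotone. The remaining TV-distance bound between the transcripts produced by $\Dyes$ and $\Dno$ would then follow the template of the monotonicity lower bound of \cite{CDHLNSY2024}, with modifications to accommodate the bi-directional planted violations and the interleaved samples-plus-queries access of the relative-error model.
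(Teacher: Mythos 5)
Your proposal has a genuine gap, and it is the one you yourself flag as ``the main obstacle'': the reduction requires a $\tilde{\Omega}(M)$ lower bound against \emph{two-sided non-adaptive} standard-model unateness testers on $\zo^M$. No such bound is known. The strongest known two-sided bound for standard-model unateness, from \cite{CWX17stoc}, is $\tilde{\Omega}(M^{2/3})$ (for adaptive, hence also non-adaptive, algorithms); the $\tilde{\Omega}(M)$ bound from that paper is only against \emph{one-sided} non-adaptive testers, and a one-sided assumption cannot be imported into a reduction that must handle two-sided error. Plugging in the best available two-sided bound, your reduction would only recover $\tilde{\Omega}((\log N)^{2/3})$, which is the \emph{adaptive} lower bound (\Cref{main:adaptivelowerbound}), not \Cref{thm:firstlowerbound}. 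Indeed, your embedding framework is precisely what the paper uses to prove \Cref{main:adaptivelowerbound}; the non-adaptive bound requires a different and more direct argument.

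The paper's actual proof of \Cref{thm:firstlowerbound} does \emph{not} pass through a standard-model reduction. Instead it works directly in the relative-error model with two-layer functions (supported on the two middle layers $3n/4$, $3n/4+1$) of $\zo^n$, reusing the $\Dyes$ distribution of \cite{CDHLNSY2024} and modifying only the $\Dno$ distribution: each multiplexer block $\hh_i$ is, with probability $1/2$, a random dictator $x_{\bk}$ and, with probability $1/2$, a random anti-dictator $\overline{x_{\bk}}$. This plants both monotone and anti-monotone bichromatic edges in each coordinate direction, and a careful concentration argument (\Cref{lem:nonmono}) shows $\ff \sim \Dno$ is $\Omega(1)$-far from unate in relative distance with constant probability. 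The indistinguishability analysis is then essentially that of \cite{CDHLNSY2024} with one localized change (see \Cref{eq:newratio}). Note that this two-layer construction does not yield a constant-distance standard-model lower bound --- $|f^{-1}(1)|/2^n$ is exponentially small --- so it genuinely lives in the relative-error world; this is why your plan to ``follow the template of \cite{CDHLNSY2024}'' to supply the missing standard-model ingredient is circular, as that template produces a relative-error lower bound, not a standard-model one. Your intuition about planting bi-directional edge violations to rule out both orientations is exactly right, but it must be executed directly on $\zo^n$ in the relative-error setting, not on a small subcube via a standard-model reduction.
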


\Cref{thm:firstlowerbound} shows that the complexity of the \hyperref[algo:Known-N]{Unateness-Tester-Known-$N$} algorithm is essentially best possible, even if two-sided error is allowed.

For our second lower bound, as we explain in \Cref{sec:techniques-lower}, an easy argument based on prior results shows that $\tilde{\Omega}((\log N)^{2/3})$ queries are needed for adaptive algorithms:

\begin{theorem}[Adaptive lower bound] \label{main:adaptivelowerbound}
For any constant $\alpha_0<1$, there exists a constant $\eps_0>0$ such that any two-sided, adaptive algorithm for testing whether a function $f$ with $|f^{-1}(1)|=\Theta(N)$ for some given parameter $\smash{N\le 2^{\alpha_0 n}}$ is unate or has relative~distance~at least $\eps_0$ from unate must make $\tilde{\Omega}((\log N)^{2/3})$ queries.
\end{theorem}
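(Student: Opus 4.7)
The plan is to reduce from the $\tilde{\Omega}(m^{2/3})$ adaptive standard-model unateness lower bound of Chen, Waingarten, and Xie \cite{CWX17stoc} for $m$-variable Boolean functions. That result is witnessed by a pair of distributions $\mathcal{G}_{\text{yes}},\mathcal{G}_{\text{no}}$ over functions $g:\zo^m\to\zo$ that are indistinguishable by $o(m^{2/3}/\mathrm{polylog}\,m)$ adaptive membership queries, with $g$ unate under $\mathcal{G}_{\text{yes}}$ and $g$ at $\Theta(1)$ standard-model distance from unate under $\mathcal{G}_{\text{no}}$; moreover the hard instances can be arranged so that $|g^{-1}(1)| = \Theta(2^m)$ in both cases.

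Setting $m := \lceil \log N \rceil \le n$ (using $N \le 2^{\alpha_0 n}$ with $\alpha_0 < 1$), I would lift each $g \sim \mathcal{G}_\star$ to a function $f:\zo^n\to\zo$ by zero-padding,
\[
f(x_1,\ldots,x_n) \;:=\; g(x_1,\ldots,x_m)\cdot \mathbbm{1}\!\left[x_{m+1}=\cdots=x_n=0\right],
\]
and take $\mathcal{D}_{\text{yes}},\mathcal{D}_{\text{no}}$ to be the induced distributions over $f$. Three facts should be verified: (i)~$|f^{-1}(1)|=|g^{-1}(1)|=\Theta(2^m)=\Theta(N)$; (ii)~$f$ is unate whenever $g$ is, since the last $n-m$ coordinates of $f$ are monotone non-increasing; and (iii)~if $g$ is standard-model $\eps$-far from unate on $\zo^m$, then for any unate $h:\zo^n\to\zo$ the restriction of $h$ to the sub-cube $\{x : x_{m+1}=\cdots=x_n=0\}$ is unate on $\zo^m$, so $f$ and $h$ disagree at $\ge \eps\,2^m$ points in that sub-cube; dividing by $|f^{-1}(1)|=\Theta(2^m)$ yields relative distance at least some $\eps_0 := \Theta(\eps)$.

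The remaining step is a simulation: given only adaptive MQ access to $g$, any $q$-query adaptive relative-error tester for $f$ is simulated by a standard-model tester using $O(q)$ queries to $g$ in expectation. A membership query to $f$ at $(y_1,\ldots,y_n)$ returns $0$ unless $y_{m+1}=\cdots=y_n=0$, in which case one MQ to $g$ at $(y_1,\ldots,y_m)$ suffices; a uniform sample from $f^{-1}(1)$ is obtained by rejection sampling on $\zo^m$, which succeeds in $O(1)$ expected trials thanks to the balanced-ness from (i), with a standard Chernoff/truncation argument bounding the total simulation cost with high probability. Plugging $q = o((\log N)^{2/3}/\mathrm{polylog})$ into the simulation yields $o(m^{2/3}/\mathrm{polylog})$ adaptive queries to $g$, contradicting \cite{CWX17stoc}. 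The only point I expect to require care is verifying the approximate balanced-ness of the \cite{CWX17stoc} hard distributions; if $|g^{-1}(1)|=\alpha 2^m$ for some constant $\alpha \in (0,1]$, choosing $m$ so that $\alpha 2^m = \Theta(N)$ still gives $m = \log N + O(1)$ and preserves efficient rejection sampling, so everything else reduces to routine bookkeeping.
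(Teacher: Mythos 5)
Your proposal is correct and takes essentially the same route as the paper: both reduce to the $\tilde{\Omega}(m^{2/3})$ adaptive standard-model lower bound of \cite{CWX17stoc} by zero-padding the hard instances into a subcube of dimension $m \approx \log N$, simulating $\SAMP(f)$ via rejection sampling over that subcube, and relying on the concentration of $|g^{-1}(1)|$ in the \cite{CWX17stoc} distributions to justify the $\Theta(N)$ sparsity guarantee. The one verification you flag as needing care (approximate balancedness of the hard distributions) is exactly the point the paper also calls out, and your explicit argument for item (iii)---that restricting a unate $h$ to the subcube yields a unate function on $\zo^m$, so the standard-model distance of $g$ from unate transfers to a relative-error distance of $f$ from unate---fills in a detail the paper leaves implicit.
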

  
\subsection{Discussion}

Taken together, our positive and negative results give a fairly complete picture of the abilities and limitations of relative-error algorithms for testing unateness.
It is interesting to compare our results for relative-error unateness testing with the known results for unateness testing in the standard model \cite{CS16,CWX17stoc,CW19}.  Recall that in the standard model, essentially tight upper and lower bounds are known both for one-sided non-adaptive algorithms ($\tilde{\Theta}(n)$ queries) and for two-sided adaptive algorithms ($\tilde{\Theta}(n^{2/3})$ queries). In contrast, in the relative-error setting we give essentially tight upper and lower bounds of $\tilde{\Theta}(\log N)$ queries for one-sided non-adaptive algorithms (in fact our lower bound holds even for two-sided non-adaptive algorithms), but our upper bound for two-sided adaptive algorithms is $\tilde{O}(\log N)$ queries whereas our lower bound is $\tilde{\Omega}((\log N)^{2/3})$ queries.
Roughly speaking, the reason for this disparity with what can be proved in the standard model is as follows: The $\tilde{O}(n^{2/3})$-query standard-model unateness testing algorithm of \cite{CW19} crucially employs a novel use of binary search over long random paths in the Boolean hypercube $\zo^n$.  As demonstrated by our lower bound constructions, though, in the relative-error setting ``all of the action'' may take place over just two adjacent layers of the hypercube, which precludes an analysis using long paths in the hypercube. Thus, it is not clear how to use techniques inspired by \cite{CW19} to obtain an improved algorithmic result in the relative-error setting.  An interesting direction for future work is to close the gap left open by our results for adaptive relative-error unateness testing:  is the right answer closer to $(\log N)^{2/3},$ or to $\log N$?

It is also interesting to consider our result for relative-error unateness testing in the context of the \cite{CDHLNSY2024} results for relative-error monotonicity testing.  In both cases, the query complexity of relative-error testing depends on the sparsity $N=|f^{-1}(1)|$ of the function being tested; as discussed earlier, this is provably different from what happens for other properties such as conjunctions, decision lists, and juntas \cite{rel-error-conj-DL,rel-error-junta}. We note that those other properties are most naturally defined using a ``syntactic'' description of how the relevant functions can be represented in various models, while in contrast monotonicity and unateness are most naturally viewed as ``semantic'' properties which are defined in terms of the behavior  of the function's output values rather than in terms of a particular syntactic representation.  An interesting possible direction for future research would be to study other ``semantic'' Boolean function properties:  will it turn out that for them too, the complexity of relative-error testing is determined by the sparsity $N=|f^{-1}(1)|$?  Some possible properties to investigate in this context are \emph{union-closedness} and \emph{intersectingness}, which have been studied in the standard property testing model in recent work \cite{CDLNS23a,HP24}.


\section{Techniques}

\subsection{Upper bounds} \label{sec:techniques-upper}
We present two unateness testers in the relative-error model.
The first tester needs to be given the sparsity $N$ as an input, while the second does not.
As a starting point, we first introduce two related procedures from prior work that our algorithms will build on.
In each case, the brief description given below is a simplified version of the actual procedure given in the relevant paper.

\begin{enumerate}
    \item {\bf Relative-Error Monotonicity Testing, from \cite{CDHLNSY2024}} 
    This procedure repeatedly draws samples from the $\SAMP(f)$ oracle.
    For each sample $\bz \sim f^{-1}(1)$, it draws an index $\bi$ uniformly from $\{ i \in [n] \mid \bz_i = 0\}$, and queries $f(\bz \oplus e_{\bi})$.
    If the result is $0$, the procedure rejects.
    This test is one-sided, non-adaptive, and has query complexity $O(\log(N)/\eps)$; it is the inspiration for the \BiasedTest~subroutine in our paper.

    \item {\bf Standard-Model Unateness Testing, from \cite{BCPRS20}} 
    The idea of this procedure is to find both a strictly monotone and a strictly anti-monotone edge along the same coordinate $i$, which directly witnesses a  violation of unateness.
    The procedure can be thought of as interpolating between two extreme cases.
    First, when violations are concentrated entirely along one coordinate, we choose to sample many coordinates.
    However, for each coordinate that is sampled, we check only a few edges in that dimension.
    Second, when violations are spread equally across all coordinates, we choose to sample very few coordinates but for each chosen coordinate, we check more edges in that dimension.
    This test is one-sided, non-adaptive, and has query complexity $\tilde{O}(n/\eps)$; it is the inspiration for the \UnbiasedTest~subroutine in our paper.
\end{enumerate}

Of course, if $n = O(\log N)$, then we could directly use the standard-model unateness tester.
However, the interesting case for relative-error testing is when $n \gg \log N$.
Recall that a function $f$ is unate if and only if it is monotone along some direction $d \in \zo^n$.
To make use of the relative-error monotonicity tester, we would need to provide some direction $\tilde{d}$.
Unfortunately, it is not clear how to choose a suitable direction $\tilde{d}$;
indeed, if the relative-error monotonicity tester reports that the function $f$ is not monotone along some specific direction $\tilde{d}$, we cannot obviously conclude that $f$ is not unate --- perhaps $f$ is unate by virtue of being monotone along some other direction $d \neq \tilde{d}.$
Ideally, we would like to identify a direction $\tilde{d} \in \zo^n$ such that \emph{if $f$ is unate, then (with high probability) it is monotone along direction $\tilde{d}$.}

This goal is too optimistic because we cannot be confident in the setting of all $n$ coordinates.
However, we \emph{can} confidently set many coordinates of $\tilde{d}$, as we now explain.
For any unate function $f$, if the value of the $i$-th coordinate of a uniform sample drawn from $f^{-1}(1)$ is $1$ (resp.\ $0$) with probability $p_i > 0.5$, then $f$ must be monotone (resp.\ anti-monotone) along dimension $i$.
This notion is formalized in \Cref{cor:unate_Some_Empty_Edge}.
Thus, a natural approach is to draw some samples from $f^{-1}(1)$, use them to obtain an estimate $\widehat{p_i}$ of the above probabilities $p_i$ for each coordinate $i \in [n]$, and choose directions for the subset of coordinates $i$ where $\hat{p_i}$ is either close to 0 or close to 1 (see Phase~1 of the \KnownN\ algorithm).
We call such coordinates \emph{fixed} (since we fix the direction along these coordinates).
For these fixed coordinates, if we see a bichromatic edge along a coordinate $i$ that ``runs counter to $\widehat{p_i}$'' (for example, an anti-monotone edge along a coordinate $i$ such that $\widehat{p_i}$ is close to 1), we can be ``quite confident'' that $f$ is not unate.

At a high level, the strategy of our algorithm is to run (a variant of) the relative-error monotonicity tester along the fixed coordinates and to run (a variant of) the standard-model unateness tester along the unfixed coordinates (see Phase~3 of the \KnownN\ algorithm). 
Both testers are slightly modified to search for violations only on the coordinates provided.\footnote{This modification is easy, because both subroutines are edge-testers; we provide the relevant set of coordinates as input to the procedure, and we have the procedure only perform edge-tests along the coordinates provided in the input set.}

For this to succeed, there are two technical challenges that must be overcome.
First, in order to fix $n$ directions such that with constant probability we fix all of them correctly, $\Omega(\log n)$ samples are required.
For very sparse functions, such that $\log N = o(\log n)$, this is an unacceptably high number of samples.
Second, our modified standard-model unateness tester has query complexity which is quasilinear in the number of unfixed coordinates.
Thus, we must ensure that we fix all but $O(\log N)$ coordinates.

 \paragraph{The known-$N$ algorithm.}

When the sparsity $N$ is known, we solve both these issues by using an \emph{explicit truncation}.
As we argue in \Cref{lem:small-diameter}, if a function $f: \zo^n \to \zo$ is unate, then all $x, y \in f^{-1}(1)$ have Hamming distance at most $2 \log N$ from each other.
So prior to running the relative-error monotonicity and standard-model unateness testers, our procedure draws $O(\log N)$ samples from $f^{-1}(1)$ and rejects if any of the  received samples are too far (Hamming distance more than $2 \log N$) from the first sample $a \in \zo^n$ (see Phase~1 of the \KnownN\ algorithm).
Conditioned on not rejecting, we can now assume that we are estimating directions for a function $f$ such that all points in $f^{-1}(1)$ lie within a Hamming ball of radius $2 \log N$ around $a.$

We prove that for these \emph{truncated functions}, there can be at most $O(\log N)$ coordinates whose true bias $p_i$ (as defined above) is close to $0.5$.
In order to claim that $O(\log N)$ samples suffices to estimate the directions of the $n - O(\log N)$ remaining coordinates, our analysis distinguishes between \emph{trivial} and \emph{non-trivial} coordinates.
Trivial coordinates are ones for which $p_i$ is either exactly 0 or exactly 1.
Notably, our estimates for trivial coordinates cannot be wrong.
For truncated functions, all but $\poly(N)$ coordinates are trivial, so $O(\log N)$ samples suffices to correctly estimate the orientations of the non-trivial coordinates.

 \paragraph{The unknown-$N$ algorithm.}

When the sparsity $N$ is unknown, we have to do  more work.  For one thing, since $N$ is unknown we cannot simply draw $O(\log N)$ samples and use them (as was described earlier for Phase~1 of the \KnownN\ algorithm) to restrict our attention to a truncation of $f$ to a Hamming ball of radius $2\log N$. 
Despite this, it turns out that we are able to effectively work with a truncation to a Hamming ball of radius $O(N)$. 
This is accomplished using a procedure, which we call \hyperref[algo: CheckSamples]{CheckSamples} (see Phase~0 of the \hyperref[algo:unknown]{Unateness-Tester-Unknown-$N$} algorithm) which has only \emph{constant} query complexity, yet allows us to effectively ensure that we are working with a radius-$O(N)$ truncation of $f$. 
Since there are only $\poly(N)$ coordinates where such a truncation is non-trivial, it is possible to get good estimates $\widehat{p}_i$ for these coordinates using only $O(\log N)$ samples. 

While at this point only $O(\log N)$ samples would suffice to get good estimates $\widehat{p_i}$ for the required coordinates $i$, we still do not know $N$ so a naive approach will not work.  Instead, to estimate the $p_i$'s, we use a delicate iterative procedure called \hyperref[algo: IterativeBias]{IterativeBias} (see Phase~1 of \hyperref[algo:unknown]{Unateness-Tester-Unknown-$N$}). This procedure draws larger and larger sets of samples, until the estimates for the $p_i$'s converge. We are able to argue that this process halts after using roughly $\log N$ samples. Furthermore, we are able to argue that when this process halts, our estimates are ``good'' and moreover only $O(\log N)$ coordinates have estimates $\widehat{p_i}$ that are close to $0.5$.  

With these estimates in hand, we are able to get a better truncation of $f$. In particular, the next stage of our algorithm (see the \hyperref[alg: Preprocessing]{Preprocessing} procedure carried out in Phase~2 of \hyperref[algo:unknown]{Unateness-Tester-Unknown-$N$}) enables us to effectively restrict $f$ to a Hamming Ball of radius $O(\log(N))$. We can now run our modified standard-model tester and relative error monotonicity tester on this restriction (see Phase~3 of the \KnownN\ algorithm), while staying within our $O(\log(N))$ query complexity budget. 

We remark that the above discussion glosses over some subtleties which arise in our actual algorithms and analyses. One of these is the following: in order to obtain algorithms with the one-sided error guarantees that we achieve, it is not okay to reject if we see a violation of monotonicity along a fixed coordinate of the ``bias vector'' $\tilde{d}$ that our algorithm constructs.  This is because each fixed coordinate $i$ of $\tilde{d}$ is fixed  because of ``statistical evidence'' that the function is monotone (or anti-monotone) in direction $i$, but this statistical evidence does not constitute an incontrovertible witness of monotonicity (or anti-monotonicity) in that direction --- for a one-sided error testing algorithm, it is not enough to be ``quite confident'' as mentioned earlier.  So when the algorithm sees a violation of monotonicity (or anti-monotonicity) along a fixed coordinate of $\tilde{d}$, it must do some more work to actually obtain a violating edge for the other direction to establish for sure that the function is neither monotone nor anti-monotone along coordinate $i$. (This is why our algorithm uses the \hyperref[algo: ConfirmDirection]{ConfirmDirection}  procedure in various places.)

\paragraph{Overview of both algorithms.}
We close this subsection with a more detailed overview of the structure that is common to both our known-$N$ and unknown-$N$ algorithms.
\begin{itemize}
    \item In Phase~1, the algorithm derives a bias vector $\tilde{\bd}\in \zos^{n}$ based on the estimates $\{\hat{p_i}\}_{i\in [n]}$, while ensuring (with high probability) that the number of unfixed coordinates is at most $O(\log N)$ and for each fixed coordinate $i$, the true bias $p_i$ is indeed close to the value of $\tilde{\bd}_i\in \zo$. 
    \item In Phase~2, the algorithm determines the value of the radius parameter $\bM$, which informs the query complexity of \BiasedTest, and guarantees (with high probability) that $\bM=O(\log N)$.
    \item In Phase~3, the algorithm runs \BiasedTest\ and \UnbiasedTest\ given the parameters $\tilde{\bd}$ and $\bM$, where \BiasedTest\ takes as input $\bM$ and the fixed coordinates of $\tilde{\bd}$, and \UnbiasedTest\ takes as input the unfixed coordinates of $\tilde{\bd}$.
\end{itemize}

\subsection{Lower bounds} \label{sec:techniques-lower}

Our non-adaptive lower bound, \Cref{thm:firstlowerbound}, is obtained via a modification of the lower bound arguments from \cite{CDHLNSY2024}.
As in that work, our lower bound construction uses functions which are only nontrivial on two adjacent layers of the Boolean hypercube; we call these \emph{two-layer} functions
  (see \Cref{sec:two-layer}).
More precisely, a function $f$ is a two-layer function if it is only nontrivial
  on points in layers $3n/4$ and $3n/4+1$: every point with weight $<3n/4$ is
  set to $0$ and every point with weight $>3n/4+1$ is set to $1$.
(We use the constant $3/4$ just to make the presentation more concrete;
  it can be replaced by any constant strictly between $1/2$ and $1$).
  
Our ``yes-functions'' are the same as the ``yes-functions'' of \cite{CDHLNSY2024}, which are monotone (and hence unate) functions.  
However, we modify the ``no-functions'' of \cite{CDHLNSY2024} since the analysis of \cite{CDHLNSY2024} does not establish that the ``no-functions'' that were used in that work, to give lower bounds on relative-error monotonicity testing, are far from unate.
In \Cref{sec:lower} we give a detailed description of a new distribution of ``no-functions'' and prove that they are indeed suitably far from unate.  Given these no-functions, our lower bounds follow from relatively simple modifications of the \cite{CDHLNSY2024} lower bound arguments; we describe the necessary modifications in \Cref{sec:lower}.

Our adaptive lower bound (\Cref{main:adaptivelowerbound}) 
 is obtained by a simple reduction to the $\tilde{\Omega}(d^{2/3})$-query standard-model two-sided adaptive unateness testing lower bound given by \cite{CWX17stoc} for functions $f: \zo^d \to \zo$.  This is done by ``scaling down'' both the yes- and no- distributions used in that lower bound so that the functions involved are contained in a subcube of $\zo^n$ of dimension $d \approx \log N$; this can be done by simply defining the functions over $\zo^n$ to output 0 if any of the last $n-d$ input coordinates take value 1.  For such functions, the uniform random satisfying assignments that a relative-error testing algorithm may ask for can easily be obtained by a standard-model algorithm for unateness testing over $\zo^d$, by simply drawing uniform random assignments from $\{0,1\}^{d} \circ 0^{n-d}$ until a satisfying assignment is received.  It follows that the $\tilde{\Omega}(d^{2/3})$ standard-model lower bound of \cite{CWX17stoc} directly yields a $\tilde{\Omega}((\log N)^{2/3})$ lower bound for two-sided adaptive unateness testing with relative error.
(We mention that it is easily verified that in the \cite{CWX17stoc} construction, for both the distribution $\Dyes$ of yes-functions and the distribution $\Dno$ of no-functions, with very high probability the number of satisfying assignments of a random function drawn from either distribution is tightly concentrated around the same particular value.  This is why the lower bound of \Cref{main:adaptivelowerbound} holds even when $N$ is given to the testing algorithm.)

We remark that an alternate route to obtaining a $\tilde{\Omega}((\log N)^{2/3})$-query two-sided adaptive lower bound for relative-error unateness testing is to proceed in analogy with our proof of \Cref{thm:firstlowerbound}, i.e. to modify the ``no-functions'' used in the two-sided adaptive lower bound for relative-error monotonicity testing of \cite{CDHLNSY2024}.  Given suitably modified ``no-functions'' that are far from unate in relative error, the $\tilde{\Omega}((\log N)^{2/3})$ lower bound argument of \cite{CDHLNSY2024} for monotonicity adapts to give a $\tilde{\Omega}((\log N)^{2/3})$ lower bound for two-sided adaptive relative-error unateness testing, matching the parameters of \Cref{main:adaptivelowerbound}.  Since this construction and argument have many more details than the simple reduction sketched in the previous paragraph, though, we omit them.


\section{Preliminaries} \label{sec:prelim}

\subsection{Notation}

\paragraph{Strings.} For $x,b \in \zo^n$ we write $x \oplus b$ to denote the bitwise-XOR of $x$ and $b$, i.e.~the string in $\zo^n$ that has $x_i \oplus b_i$ as its $i$-th coordinate.  We also write $x^{(i)}$ to denote $x \oplus e_i$ for convenience. For $b \in \{0,1\}$ and $i \in [n]$ we write $x^{i \leftarrow b}$ to denote the $n$-bit string obtained from $x$ by fixing the $i$-th bit to $b$.
Given a partial string $d \in \zos^n$, we write $\fixed(d) := \{ i \in [n] \mid d_i \in \zo \}$ and $\unfixed(d) := \{ i \in [n] \mid d_i = *\}$ to denote the set of indices $i$ where $d_i$ is fixed or unfixed respectively. 
For partial strings $x, y \in \zos^n$, we denote their symmetric difference as
\[
    x \Delta y := \big\{ i \in \fixed(x) \cap \fixed(y) \mid x_i \neq y_i \big\}.
\]

\paragraph{Edges.} Given $i \in [n]$, the set of edges in the $i$-th direction is the set of (unordered) pairs 
$$\Edge_i=\big\{ \{x^{i \leftarrow 0},x^{i \leftarrow 1} \}:  x \in \{0,1\}^n\big\}.$$
We say an edge $\{x,y\}\in \Edge_i$ is \emph{monochromatic} if $f(x)=f(y)$ and \emph{bichromatic} otherwise. 

Given $f : \zo^n \to \zo$ and $i \in [n]$ we say an edge $\{x,y\} \in \Edge_i$ is \emph{strictly $1$-monotone} in $f$ if $f(x)=x_i$ and $ f(y)=y_i,$ and we say it is  \emph{strictly $0$-monotone} in $f$ if $f(x)=1-x_i$ and $f(y)=1-y_i.$
We let $\Edge^b_i(f)$  denote the set of all edges in $\Edge_i$ that are strictly $b$-monotone in $f$:
\begin{align*}
\Edge_{i}^b(f)&= \big\{ \set{x,x^{(i)}} \mid \set{x,x^{(i)}} \text{ is strictly $b$-monotone} \big\}.
\end{align*}

\paragraph{Distances.} 
We use $\absdist(x,y)$ to denote the Hamming distance between two strings $x,y \in \zos^n$, where $\absdist(x, y) = |x \Delta y|$.
We also use $\absdist(f, g)$ to denote the Hamming distance between two Boolean functions $f, g : \zo^n \to \zo$.
\[
    \absdist(f, g) := \Big| \big\{ x \in \zo^n \mid f(x) \neq g(x) \big\} \Big|.
\]
We write $\dist(f, g)$ to denote the normalized Hamming distance between two Boolean functions $f, g: \zo^n \to \zo$,
\[
    \dist(f,g) := \frac{\absdist(f, g)}{2^n}.
\]
For a class ${\cal C}$ of Boolean functions, we write $\dist(f, {\cal C})$ to denote $\min_{g \in {\cal C}}\hspace{0.05cm}\dist(f,g)$, the minimum distance between $f$ and any function $g \in {\cal C}.$
We sometimes say that \emph{$f$ is $\eps$-far from $g$ (from ${\cal C}$, respectively)} if $\dist(f,g) > \eps$ ($\dist(f,{\cal C})>\eps$, respectively).
In the standard model of property testing, an $\eps$-tester for a property is required to determine whether $f \in \calC$ or $\dist(f, \calC)\geq \epsilon$ (i.e., accept with probability $\geq 2/3$ if $f\in \calC$ and reject with probability $\geq 2/3$ if $\dist(f, \calP)\geq \epsilon$).

\subsection{Relative-error testing of Boolean functions} \label{sec:rel-error}

\noindent {\bf The relative-error property testing model.} As defined in \cite{CDHLNSY2024}, a \emph{relative-error} testing algorithm for a class ${\cal C}$ of Boolean functions has oracle access to $\MQ(f)$ and also has access to a $\SAMP(f)$ oracle which, when called, returns a uniform random element $\bx \sim f^{-1}(1)$.
A relative-error testing algorithm for ${\cal C}$ must output ``yes'' with high probability (say at least 2/3; this can be easily amplified) if $f \in {\cal C}$ and must output ``no'' with high probability (again, say at least 2/3) if $\reldist(f,{\cal C}) \geq \eps$, where
$$\reldist(f,{\cal C}):=\min_{g \in {\cal C}}\hspace{0.05cm}\reldist(f,g)\ \quad\text{and}\ \quad 
\reldist(f,g) := {\frac {|f^{-1}(1) \ \Delta \ g^{-1}(1)|}{|f^{-1}(1)|}}.
$$

We say that a relative-error testing algorithm is ``non-adaptive'' if after receiving the results of all of its calls to the sampling oracle, it makes one parallel round of queries to the black-box oracle (so these queries can depend on the result of the calls to the sampling oracle, but the choice of the $i$-th query point for the black-box oracle does not depend on the responses received to queries $1,\ldots, i-1$).

We sometimes say that \emph{$f$ is $\eps$-far from $g$ (from ${\cal C}$, respectively) in relative distance} if $\reldist(f,g) > \eps$ ($\reldist(f,{\cal C})>\eps$, respectively).
\noindent Throughout this writeup we denote
\begin{equation} \label{eq:N}
N := \big|f^{-1}(1)\big|
\end{equation}
where $f: \zo^n \to \zo$ is the function that is being tested for unateness.

We will use  the following simple ``approximate triangle inequality'' for relative distance:

\begin{lemma}[Approximate triangle inequality for relative distance, Lemma~5 of \cite{rel-error-conj-DL}]\label{lem: approx triangle ineq}
    Let $f,g,h:\{0,1\}^n \to \{0,1\}$ be such that $\reldist(f,g)\leq \epsilon$ and $\reldist(g,h)\leq \epsilon'$. Then $\reldist(f,h) \leq \epsilon+(1+\epsilon)\epsilon'$.
\end{lemma}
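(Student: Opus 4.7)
The plan is to reduce the claim to the standard (unnormalized) triangle inequality for symmetric difference, then bound $|g^{-1}(1)|$ in terms of $|f^{-1}(1)|$ using the first hypothesis. Write $F:=f^{-1}(1)$, $G:=g^{-1}(1)$, $H:=h^{-1}(1)$; the hypotheses translate to $|F\triangle G|\le \epsilon |F|$ and $|G\triangle H|\le \epsilon'|G|$, and the conclusion is $|F\triangle H|\le \big(\epsilon+(1+\epsilon)\epsilon'\big)|F|$.

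First I would invoke the elementary set identity $F\triangle H \subseteq (F\triangle G)\cup (G\triangle H)$ (which holds because any element of $F\triangle H$ must disagree with $G$ on at least one of the two sides), giving
\[
|F\triangle H|\ \le\ |F\triangle G|+|G\triangle H|\ \le\ \epsilon |F|+\epsilon'|G|.
\]
Next I would bound $|G|$ in terms of $|F|$. Since $G\subseteq F\cup (F\triangle G)$ we have $|G|\le |F|+|F\triangle G|\le (1+\epsilon)|F|$. Substituting,
\[
|F\triangle H|\ \le\ \epsilon|F|+\epsilon'(1+\epsilon)|F|\ =\ \bigl(\epsilon+(1+\epsilon)\epsilon'\bigr)|F|,
\]
and dividing through by $|F|=|f^{-1}(1)|$ yields $\reldist(f,h)\le \epsilon+(1+\epsilon)\epsilon'$, as required. (If $|F|=0$ then $\reldist(f,\cdot)$ is vacuous, so we may assume $|F|>0$.)

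There is no real obstacle here; the only subtlety worth flagging is the asymmetry in the definition of $\reldist$, which normalizes by $|f^{-1}(1)|$ on the left. This forces the factor $(1+\epsilon)$ to appear on the $\epsilon'$ term rather than on the $\epsilon$ term, because we can only cheaply compare $|G|$ to $|F|$ (not $|G|$ to $|H|$) from the given hypotheses. Both set-theoretic inclusions used above are routine, so I would present them in one or two lines each without further elaboration.
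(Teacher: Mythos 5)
Your proof is correct, and it is the natural argument: decompose $F\triangle H\subseteq (F\triangle G)\cup(G\triangle H)$, then convert the $|G|$ normalization into an $|F|$ normalization via $|G|\le(1+\epsilon)|F|$. Note that the paper does not actually prove this lemma; it cites it as Lemma~5 of \cite{rel-error-conj-DL}, so there is no in-paper proof to compare against, but your self-contained argument is exactly the standard one and the asymmetry you flag (why $(1+\epsilon)$ multiplies $\epsilon'$ rather than $\epsilon$) is the right thing to highlight.
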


\subsection{Monotonicity and unateness}

For $i \in [n]$, we say a function $f:\{0,1\}^n \rightarrow \{0,1\}$ is \emph{$1$-monotone} in the $i$-th direction if every $\{x^{i \leftarrow 0},x^{i \leftarrow 1}\} \in \Edge_i$ satisfies $f(x^{i \leftarrow 0}) \le f(x^{i \leftarrow 1})$.
Similarly, we say a function $f$ is $0$-monotone in the $i$-th direction if every $\{x^{i \leftarrow 0},x^{i \leftarrow 1}\} \in \Edge_i$ satisfies $f(x^{i \leftarrow 0}) \ge f(x^{i \leftarrow 1})$.
We say $f$ is \emph{unate} if it is either $1$-monotone or $0$-monotone in the $i$-th direction for every $i\in [n]$.

\begin{definition}\label{def:dirVector}
Given $f:\{0,1\}^n\rightarrow \{0,1\}$, we define the \emph{bias vector $d^f \in \zos^n$ of $f$} as follows:
$$d^f_i=\begin{cases}
        1 &  \text{if } \Prx_{\bz \sim f^{-1}(1)}[\bz_i=1] > 0.6\\
        0 & \text{if } \Prx_{\bz \sim f^{-1}(1)}[\bz_i=0] > 0.6 \\
        * & \text{otherwise }
    \end{cases},\quad\text{for each $i\in [n]$.}$$
We will often consider vectors $\tilde{d} \in \{0,1,*\}^n$ 
  as estimations of $d^f$ and 
  compare the coordinates of such vectors to those of $d^f$. 
We say that $\tilde{d}\in \{0,1,*\}^n$ is \emph{consistent} with $d^f\in \{0,1,*\}^n$ if $\tilde{d}_i=d^f_i$ for every $i\in [n]$ with  $\smash{\tilde{d}}_i\ne *$. 
\end{definition}

The following is an immediate consequence of the definition of ${d}^f$:

\begin{observation}\label{lem:directions_probabilities}
    Let $f:\zo^n \to \zo$ be a function with bias vector $d^f\in \zos^n$. For any $i \in [n]$ with $d^f_i \neq *$, we have
    $$\Prx_{\bz \sim f^{-1}(1)}\big[\bz_i=d^f_i\big] \geq 1/2.$$
\end{observation}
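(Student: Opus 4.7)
The plan is to simply unwind the definition of the bias vector $d^f$ from \Cref{def:dirVector}; the observation is an immediate consequence and there is essentially no nontrivial step to carry out. Concretely, I would do a case analysis on the value of $d^f_i \in \{0,1\}$ (since the hypothesis excludes $d^f_i = *$):

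\textbf{Case 1:} $d^f_i = 1$. By the first branch of the definition of the bias vector, this case occurs precisely when $\Prx_{\bz \sim f^{-1}(1)}[\bz_i = 1] > 0.6$. Thus $\Prx_{\bz \sim f^{-1}(1)}[\bz_i = d^f_i] = \Prx_{\bz \sim f^{-1}(1)}[\bz_i = 1] > 0.6 \ge 1/2$.

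\textbf{Case 2:} $d^f_i = 0$. By the second branch of the definition, this case occurs precisely when $\Prx_{\bz \sim f^{-1}(1)}[\bz_i = 0] > 0.6$. Thus $\Prx_{\bz \sim f^{-1}(1)}[\bz_i = d^f_i] = \Prx_{\bz \sim f^{-1}(1)}[\bz_i = 0] > 0.6 \ge 1/2$.

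Combining the two cases yields the bound (with some slack to spare, since we actually get $> 0.6$ rather than just $\ge 1/2$). There is no obstacle here: the statement is essentially a weakening of the defining inequality $> 0.6$ to $\ge 1/2$, so the only ``work'' is noting that the hypothesis $d^f_i \ne *$ forces us into one of the first two branches of the piecewise definition. I would write the proof as a one-sentence observation in the paper, or omit it entirely as immediate from \Cref{def:dirVector}.
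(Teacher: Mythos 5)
Your proof is correct and matches the paper's intent exactly: the paper states this as an ``immediate consequence of the definition'' of $d^f$ and gives no further argument, and your case analysis on $d^f_i \in \{0,1\}$ unwinding the $>0.6$ thresholds is precisely the one-line justification intended.
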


We also have the following useful lemma, which roughly shows that it is easy to find a bichromatic edge in a biased direction given a uniformly random preimage of $f^{-1}(1)$:
\begin{lemma} \label{lem:BU}
    Let $f: \zo^n \to \zo$ be a function with bias vector $d^f \in \zos^n$. For any $i \in [n]$ with $d^f_i \neq *$ we have that
\begin{align}
\Prx_{\bz \sim f^{-1}(1)}\big[f(\bz^{(i)})=1-d^f_i] &\geq 1/5 \quad \text{and} \label{eq:rutabaga}\\
\Prx_{\bz \sim f^{-1}(1)}\big[\{\bz^{i \leftarrow 0}, \bz^{i \leftarrow 1}\} \in \Edge_{i}^{d_i^f}(f)] & \geq 1/5.
\label{eq:persimmon}
\end{align}
\end{lemma}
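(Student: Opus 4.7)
The plan is to prove both inequalities simultaneously by first reducing to a single combinatorial bound on the set of strictly $d_i^f$-monotone edges through position $i$, and then observing that statement \eqref{eq:persimmon} is actually strictly stronger than statement \eqref{eq:rutabaga}. By symmetry I would assume without loss of generality that $d_i^f = 1$, so that $|A_1|/N > 0.6$ where $A_1 := \{z \in f^{-1}(1) : z_i = 1\}$ and $A_0 := \{z \in f^{-1}(1) : z_i = 0\}$, with $|A_0|/N < 0.4$.

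For \eqref{eq:persimmon}, I would argue that the event $\{\bz^{i\leftarrow 0}, \bz^{i\leftarrow 1}\} \in \Edge_i^1(f)$ can occur only when $\bz \in A_1$ (since the endpoint of a strictly $1$-monotone edge that lies in $f^{-1}(1)$ is necessarily the one with $i$-th coordinate $1$) and additionally $f(\bz^{(i)}) = 0$. Letting $B := \{z \in A_1 : f(z^{(i)}) = 0\}$, the probability in \eqref{eq:persimmon} is exactly $|B|/N$. To lower bound $|B|$, I would consider the map $z \mapsto z^{(i)}$ restricted to $A_1 \setminus B = \{z \in A_1 : f(z^{(i)}) = 1\}$; this map is injective and lands inside $A_0$ (the image has $i$-th coordinate $0$ and is in $f^{-1}(1)$), so $|A_1 \setminus B| \le |A_0|$. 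Thus $|B| \ge |A_1| - |A_0| > 0.6N - 0.4N = N/5$, giving \eqref{eq:persimmon}.

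For \eqref{eq:rutabaga}, I would just note that whenever the event in \eqref{eq:persimmon} holds, we have $f(\bz^{(i)}) = 0 = 1 - d_i^f$, so the event in \eqref{eq:rutabaga} also holds; hence \eqref{eq:rutabaga} follows immediately from \eqref{eq:persimmon}.

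There is no substantive obstacle here: the only subtle point is correctly identifying which endpoint of a strictly $1$-monotone edge can be the sampled $\bz$ (i.e.\ recognizing that $\bz$ must have $\bz_i = 1$ for the edge to be a bichromatic strictly $1$-monotone edge witnessed by $\bz \in f^{-1}(1)$). Once that observation is in place, the whole argument is just the double-counting step $|A_1 \setminus B| \le |A_0|$ and arithmetic with the constants from Definition~\ref{def:dirVector}.
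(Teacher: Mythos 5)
Your proof is correct and follows essentially the same route as the paper: assume WLOG $d_i^f=1$, count $z \in f^{-1}(1)$ with $z_i=1$ and $f(z^{(i)})=0$ via the injection $z\mapsto z^{(i)}$ into $A_0$, and conclude $|B|\ge |A_1|-|A_0| > 0.2N$. Your extra observation that the event in \eqref{eq:persimmon} implies the event in \eqref{eq:rutabaga} (since $\bz \in f^{-1}(1)$ forces $\bz_i = d_i^f$ for the edge to be strictly $d_i^f$-monotone) is also implicit in the paper, which simply notes that every $z$ in the counted set witnesses both events at once.
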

\begin{proof}
    Without loss of generality assume $d_i^f=1$ (the case of $d_i^f=0$ is identical). For $b \in \zo$, let $Z^b=\{z \in f^{-1}(1) : z_i =1\}$. Since $d^f_i=1$, we must have that  $|Z^1| > 0.6N$, so $|Z^0| < 0.4N$,
    and hence we have that
    $$\abs{\{z \in Z^1 : z^{(i)} \in Z^0\}} \leq 0.4N.$$
    Hence, there are at least $0.2N$ many $z \in f^{-1}(1)$ with $z_i=1, f(z^{(i)})=0$. Whenever $\bz \sim f^{-1}(1)$ is such a $z$, we have $f(\bz^{(i)})=0$, which gives us \Cref{eq:rutabaga}; and we moreover have $\{\bz^{i \leftarrow 0}, \bz^{i \leftarrow 1}\} \in \Edge_{i}^{1}(f)$, which implies
    $$\Prx_{\bz \sim f^{-1}(1)}[\{\bz, \bz^{(i)}\}\in \Edge_{i}^{1}(f)] \geq 1/5,$$
    giving us \Cref{eq:persimmon}.
\end{proof}

Our algorithms will reject if they find an $i \in [n]$ along with a pair of edges $e \in \Edge_i^0(f)$ and $e'\in \Edge_i^1(f)$. It is easy to see that when $f$ is unate, this can never happen: 
\begin{observation}\label{cor:unate_Some_Empty_Edge}
If $f: \zo^n \to \zo$ is unate, then for any $i \in [n]$ we have
    $$\Edge_{i}^{1-d^f_i}(f)=\emptyset.$$
\end{observation}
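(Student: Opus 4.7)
The plan is a short case analysis on the value of $d^f_i$. The expression $1-d^f_i$ is only meaningful when $d^f_i\in \zo$, so I would focus on $d^f_i=1$ (the case $d^f_i=0$ is handled by a symmetric argument with the roles of $0$ and $1$ swapped). Thus the goal becomes: assuming $f$ is unate and $d^f_i=1$, show that $\Edge_i^0(f)=\emptyset$.

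Since $f$ is unate, along direction $i$ it is either $1$-monotone or $0$-monotone. The easy subcase is that $f$ is $1$-monotone along direction $i$: by definition every edge $\{x^{i \leftarrow 0},x^{i \leftarrow 1}\}$ satisfies $f(x^{i \leftarrow 0})\le f(x^{i \leftarrow 1})$, which directly rules out the existence of a strictly $0$-monotone edge in direction $i$ (such an edge would demand $f(x^{i \leftarrow 0})=1>0=f(x^{i \leftarrow 1})$). Hence $\Edge_i^0(f)=\emptyset$, as desired.

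The remaining subcase is that $f$ is $0$-monotone along direction $i$; here I would derive a contradiction with $d^f_i=1$. Concretely, $0$-monotonicity along direction $i$ guarantees that for any $z\in f^{-1}(1)$ with $z_i=1$, one also has $z^{(i)}\in f^{-1}(1)$. Thus the map $z\mapsto z^{(i)}$ is an injection from $\{z\in f^{-1}(1):z_i=1\}$ into $\{z\in f^{-1}(1):z_i=0\}$, giving $\Prx_{\bz\sim f^{-1}(1)}[\bz_i=1]\le 1/2$. This contradicts $d^f_i=1$, which by \Cref{def:dirVector} requires $\Prx_{\bz\sim f^{-1}(1)}[\bz_i=1]>0.6$.

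I do not anticipate any real obstacle; the whole argument just matches the ``bias direction'' of $f^{-1}(1)$ along coordinate $i$ with the ``monotone direction'' that unateness must pick. The only conceptual point worth highlighting is that anti-monotonicity along coordinate $i$ forces a bias-breaking injection between the two halves of $f^{-1}(1)$ split by $\bz_i$, which is enough to rule out the coordinate being heavily biased in the opposite direction.
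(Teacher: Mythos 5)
Your proof is correct, and since the paper states this Observation without proof (treating it as immediate), your argument supplies exactly the intended reasoning: unateness makes $f$ either $1$-monotone or $0$-monotone along coordinate $i$, and the injection $z\mapsto z^{(i)}$ from $\{z\in f^{-1}(1):z_i=1\}$ into $\{z\in f^{-1}(1):z_i=0\}$ (or vice versa) shows that the monotone direction must agree with the bias $d^f_i$ given the $0.6$ threshold in \Cref{def:dirVector}. One minor stylistic note: you could equivalently invoke \Cref{lem:directions_probabilities} to get $\Prx_{\bz\sim f^{-1}(1)}[\bz_i=d^f_i]\geq 1/2$ directly rather than re-deriving the threshold from the definition, but your route from the definition is equally valid.
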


We will need the following theorem which relates the distance to unateness and the number of edges violating $b$-monotonicity:
\begin{theorem}[Theorem 1.3 of \cite{CS16}]\label{thm:CS16} 
If $f:\zo^n \to \zo$
 is $\epsilon$-far from unate, then   we have  $$\sum_{i\in [n]} \min\left(\Big|\Edge_{i}^{0}(f),\Edge_{i}^{1}(f)\Big|\right) \geq {\frac {\eps} {8}} \cdot 2^{n}. $$ 
\end{theorem}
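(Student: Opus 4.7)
The plan is to reduce the unateness distance bound to the standard monotonicity distance bound via an XOR shift of the input coordinates, exploiting the fact that $f$ is unate precisely when $f(\cdot\oplus s)$ is monotone for some $s\in\zo^n$.

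To start, I would choose for each $i\in[n]$ the bit $s_i\in\zo$ that selects the ``minority'' direction of violation along coordinate $i$: set $s_i=0$ if $|\Edge_i^0(f)|\le |\Edge_i^1(f)|$ and $s_i=1$ otherwise, so that $|\Edge_i^{s_i}(f)|=\min(|\Edge_i^0(f)|,|\Edge_i^1(f)|)$. Define $\tilde{f}(x):=f(x\oplus s)$. A direct unpacking of the definition of strict $b$-monotonicity shows that $\{x,x^{(i)}\}\in \Edge_i^0(\tilde{f})$ iff the image edge $\{x\oplus s,(x\oplus s)^{(i)}\}$ lies in $\Edge_i^{s_i}(f)$, yielding the identity $|\Edge_i^0(\tilde{f})|=\min(|\Edge_i^0(f)|,|\Edge_i^1(f)|)$ for every $i$. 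I would also note that the XOR shift by $s$ is a distance-preserving bijection on $\zo^n$ that carries monotone functions to unate functions, so every monotone approximator $g$ of $\tilde{f}$ gives a unate approximator $g(\cdot\oplus s)$ of $f$ at the same Hamming distance. Hence $\dist(f,\mathrm{Unate})\le \dist(\tilde{f},\mathrm{Monotone})$, and the hypothesis forces $\tilde{f}$ to be $\eps$-far from monotone.

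Next I would invoke the standard sort-based monotonicity bound: for every $h:\zo^n\to\zo$,
\[
\sum_{i\in [n]} |\Edge_i^0(h)|\;\ge\;\dist(h,\mathrm{Monotone})\cdot 2^{n-1}.
\]
The proof proceeds by sequentially applying the coordinate-$i$ monotone-sort operator $S_i$: $S_i$ flips exactly $2|\Edge_i^0(\cdot)|$ values (one at each endpoint of each strictly $0$-monotone edge in direction $i$), and by the classical two-dimensional monotone-rearrangement lemma it never increases $|\Edge_j^0(\cdot)|$ for any $j\ne i$. Thus after $n$ sorts we obtain a monotone function within Hamming distance $2\sum_i|\Edge_i^0(h)|$ of $h$. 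Applying this to $h=\tilde{f}$, together with the previous paragraph, gives
\[
\sum_{i\in [n]}\min\bigl(|\Edge_i^0(f)|,|\Edge_i^1(f)|\bigr) \;=\; \sum_{i\in [n]}|\Edge_i^0(\tilde{f})| \;\ge\;\eps\cdot 2^{n-1}\;\ge\;\frac{\eps}{8}\cdot 2^n,
\]
as required (in fact with room to spare on the constant).

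The step I expect to be most delicate is the monotone-rearrangement ingredient—namely, the claim that applying a sort in direction $j$ never increases $|\Edge_i^0(\cdot)|$ for any $i\ne j$. This is a well-known but nontrivial combinatorial fact, verified by a case analysis on each $(i,j)$-indexed four-point face of the hypercube. It is the only place where the combinatorial structure of the cube is genuinely used, and without it the chain of reductions collapses.
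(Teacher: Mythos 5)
The paper does not prove this theorem; it imports it as Theorem~1.3 of \cite{CS16} and uses it as a black box, so there is no internal proof to compare against. Judged on its own, your argument is correct, and in fact gives the sharper bound $\eps\cdot 2^{n-1}$ rather than $(\eps/8)\cdot 2^n$. Every step you claim holds: the XOR map $x\mapsto x\oplus s$ sends $\Edge_i$ to $\Edge_i$ and exchanges strictly $0$-monotone with strictly $1$-monotone edges in direction $i$ precisely when $s_i=1$, so choosing $s_i$ to select the minority gives $|\Edge_i^0(\tilde f)|=\min\bigl(|\Edge_i^0(f)|,|\Edge_i^1(f)|\bigr)$ for $\tilde f(x)=f(x\oplus s)$; the shift is a distance-preserving involution carrying monotone functions to unate ones, so $\dist(\tilde f,\mathrm{Monotone})\ge \dist(f,\mathrm{Unate})\ge\eps$; and the sorting/shifting argument of Goldreich et al.\ bounds the Hamming cost of monotonizing any $h$ by $2\sum_i|\Edge_i^0(h)|$, which is the standard lemma you correctly flag as the one genuinely combinatorial ingredient. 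This is a clean and self-contained derivation of the cited theorem, with a better constant than the form quoted in the paper.
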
 

Recalling that $N=f^{-1}(1)$, where $f$ is the function being tested for unateness we have:

\begin{corollary}\label{corollary:CS16} 

If $f:\zo^n \to \zo$ is $\epsilon$-far from unate in relative distance, then we have  $$\sum_{i\in [n]} \left|\Edge_{i}^{1-d_i}(f)\right| \geq {\frac {\eps N} {8}},\quad\text{for every vector $d\in \{0,1\}^n$.}$$ 
\end{corollary}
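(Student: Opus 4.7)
\textbf{Proof plan for \Cref{corollary:CS16}.} The plan is to reduce the statement to \Cref{thm:CS16} by rescaling the distance parameter and then observe that $|\Edge_i^{1-d_i}(f)|$ dominates $\min(|\Edge_i^0(f)|,|\Edge_i^1(f)|)$ trivially.

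First, I would translate the relative-distance hypothesis into a standard-distance hypothesis. By definition, if $f$ is $\eps$-far from unate in relative distance, then every unate $g$ satisfies $|f^{-1}(1)\triangle g^{-1}(1)|>\eps N$. Dividing by $2^n$, this means $\dist(f,g)>\eps N/2^n$ for every unate $g$, so $f$ is $(\eps N/2^n)$-far from unate in the standard (normalized Hamming) sense. Note that since $N\leq 2^n$, the parameter $\eps'=\eps N/2^n$ is at most $\eps$, so \Cref{thm:CS16} applies with this smaller parameter, yielding
\[
\sum_{i\in[n]} \min\!\left(|\Edge_i^0(f)|,\,|\Edge_i^1(f)|\right) \;\geq\; \frac{\eps'}{8}\cdot 2^n \;=\; \frac{\eps N}{8}.
\]

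Next, I would finish by the trivial observation that for any fixed vector $d\in\{0,1\}^n$ and any coordinate $i\in[n]$, the quantity $|\Edge_i^{1-d_i}(f)|$ is one of the two terms $\{|\Edge_i^0(f)|,|\Edge_i^1(f)|\}$, so in particular $|\Edge_i^{1-d_i}(f)|\geq \min(|\Edge_i^0(f)|,|\Edge_i^1(f)|)$. Summing this inequality over $i$ and combining with the bound from the previous step gives
\[
\sum_{i\in[n]} |\Edge_i^{1-d_i}(f)| \;\geq\; \sum_{i\in[n]} \min\!\left(|\Edge_i^0(f)|,|\Edge_i^1(f)|\right) \;\geq\; \frac{\eps N}{8},
\]
uniformly over all $d\in\{0,1\}^n$, which is the conclusion.

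There is no real obstacle here: the only thing to be careful about is the correct conversion factor between the two notions of distance (a factor of $N/2^n$), which cancels exactly against the $2^n$ on the right-hand side of \Cref{thm:CS16} to produce the $N$ in the corollary. The rest of the argument is a one-line comparison of a specific term to the minimum of two terms.
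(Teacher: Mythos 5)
Your proof is correct and is essentially the only sensible argument: rescale the relative-distance hypothesis into the standard Hamming-distance hypothesis by the factor $N/2^n$, invoke \Cref{thm:CS16} at the rescaled parameter so that the $2^n$ cancels, and then observe $|\Edge_i^{1-d_i}(f)|\geq\min(|\Edge_i^0(f)|,|\Edge_i^1(f)|)$ termwise. The paper does not spell out a proof for this corollary, but it uses exactly this rescaling step in the proof of \Cref{clm:h_EQ}, so you have matched the intended argument. (Your aside that $\eps' \le \eps$ is harmless but unnecessary — you are applying \Cref{thm:CS16} at the distance parameter you actually established, not at a smaller one for convenience.)
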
 

We will also use the following simple lemma, which shows that if $f$ is unate then the distance between any two points in $f^{-1}(1)$ is at most $O(\log N)$:
\begin{lemma} \label{lem:small-diameter}
    If $f:\zo^n \to \zo$ is unate, then $\absdist(z,z')\leq 2 \log N $
    for all $z,z' \in f^{-1}(1)$.
\end{lemma}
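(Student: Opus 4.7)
The plan is to reduce immediately to the monotone case and then to exploit the fact that the upper shadow of any satisfying assignment of a monotone function is itself a set of satisfying assignments.

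First, I would invoke the equivalent characterization of unateness stated in the introduction: if $f$ is unate, then there is an $s\in\zo^n$ such that $g(x):=f(x\oplus s)$ is monotone. Since bitwise XOR with a fixed string is a Hamming-distance-preserving bijection on $\zo^n$, we have $|g^{-1}(1)|=|f^{-1}(1)|=N$, and for any $z,z'\in f^{-1}(1)$ the images $z\oplus s,\,z'\oplus s\in g^{-1}(1)$ satisfy $\absdist(z\oplus s,\,z'\oplus s)=\absdist(z,z')$. So it is enough to prove the bound under the assumption that $f$ itself is monotone.

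So now assume $f$ is monotone and fix $z\in f^{-1}(1)$. By monotonicity every $y\in\zo^n$ with $y\geq z$ (coordinatewise) also lies in $f^{-1}(1)$, and there are exactly $2^{n-\mathrm{wt}(z)}$ such $y$. Hence $2^{n-\mathrm{wt}(z)}\leq N$, which rearranges to
\[
n-\mathrm{wt}(z)\ \leq\ \log N.
\]
In other words, every $z\in f^{-1}(1)$ has at most $\log N$ coordinates equal to $0$; the same bound holds for any $z'\in f^{-1}(1)$.

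To finish, I would observe that the Hamming distance $\absdist(z,z')$ is the number of coordinates $i$ with $z_i\neq z'_i$, and any such coordinate either has $z_i=0$ (a zero-coordinate of $z$) or $z'_i=0$ (a zero-coordinate of $z'$). Thus
\[
\absdist(z,z')\ \leq\ \big|\{i: z_i=0\}\big|+\big|\{i: z'_i=0\}\big|\ \leq\ \log N+\log N\ =\ 2\log N,
\]
which is the desired bound. There is no real obstacle in this argument; the only thing to be careful about is the initial reduction to the monotone case, since the unate-to-monotone bijection must preserve both $N$ and pairwise Hamming distances, which the XOR-with-$s$ map does.
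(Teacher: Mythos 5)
Your proof is correct and takes essentially the same approach as the paper's: both reduce to the monotone case via the XOR characterization of unateness and then bound the number of zeros of any satisfying assignment by $\log N$ via counting the upper shadow. The only cosmetic difference is that the paper argues by contradiction while you argue directly.
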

\begin{proof}
 Assume for contradiction there exists $z,z' \in f^{-1}(1)$ with $\absdist(z,z') > 2\log(N)$. Since $f$ is unate, there exists $r \in \zo^n$ such that $g(x)=f(x \oplus r)$ is monotone. 
 
 Let $x=z\oplus r$, $x'=z' \oplus r$. We have $g(x)=g(x')$. 
 Because $\absdist(z,z') = \absdist(x,x') > 2\log(N)$, either $\absdist(x,1^n)>\log(N)$ or $\absdist (x',1^n)>\log(N)$. 
 Without loss of generality, assume $\absdist(x,1^n) > \log(N)$. 
 Since $g$ is monotone, every point $y \succeq x$ must have $g(y)=1$. 
 But there are more than $2^{\log(N)}=N$ such points $y$, contradicting the fact that $|g^{-1}(1)|=|f^{-1}(1)|=N$. 
\end{proof}

\subsection{Truncated functions}

Given a positive integer $r$ and $\tilde{d}\in \{0,1,*\}^n$, we say a function $h:\{0,1\}^n\rightarrow \{0,1\}$ is $(r,\tilde{d})$-\emph{truncated} if every $x\in h^{-1}(1)$ satisfies $\absdist(x, \tilde{d}) \leq r$.

Given $M$ and $\tilde{d}$, an $(r,\tilde{d})$-truncated function $h= f_{r, \tilde{d}}$ is naturally obtained from $f$, the function being tested, as follows:
\begin{equation}\label{eq:bounded}
f_{r,\tilde{d}}(x):=\begin{cases}
        f(x) & \text{if $\absdist(x, \tilde{d}) \leq r$,} \\
        0 & \text{otherwise}.
    \end{cases}
\end{equation}

\begin{observation}
    \label{clm:unatefn-equals-truncation}
    Directly from \Cref{lem:small-diameter}, we observe that for unate function $f: \zo^n \to \zo$, input $a \in f^{-1}(1)$, and radius $r \geq 2\log N$ where $N := |f^{-1}(1)|$, the function $f$ is equivalent to its $(r, a)$-truncation, i.e. $f \equiv f_{r, a}$.
\end{observation}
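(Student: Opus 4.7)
The plan is to prove the equivalence $f \equiv f_{r,a}$ pointwise, splitting into two cases based on whether a point $x \in \{0,1\}^n$ lies within Hamming distance $r$ of $a$ or not. Recall from the definition in \eqref{eq:bounded} that $f_{r,a}(x) = f(x)$ whenever $\absdist(x,a) \le r$, and $f_{r,a}(x) = 0$ otherwise, so the only way for $f$ and $f_{r,a}$ to disagree on a point $x$ is if $\absdist(x,a) > r$ and $f(x) = 1$.

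First I would handle the easy case: if $x$ satisfies $\absdist(x,a) \le r$, then by definition $f_{r,a}(x) = f(x)$, so there is nothing to check. The only nontrivial case is when $\absdist(x,a) > r$, where I need to show $f(x) = 0$ (since $f_{r,a}(x) = 0$ by definition in this regime). I would prove this by contradiction: suppose for contradiction that $f(x) = 1$. Then both $x$ and $a$ belong to $f^{-1}(1)$, so I can apply \Cref{lem:small-diameter} to conclude $\absdist(x, a) \le 2\log N$. Combined with the hypothesis $r \ge 2\log N$, this gives $\absdist(x,a) \le 2\log N \le r$, contradicting our assumption $\absdist(x,a) > r$. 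Hence $f(x) = 0 = f_{r,a}(x)$, completing the case.

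There is no real obstacle here — the entire argument is a direct application of \Cref{lem:small-diameter}, which is exactly what the statement of the observation advertises. The only mild subtlety is being careful that $a \in f^{-1}(1)$ is required to invoke \Cref{lem:small-diameter} (since that lemma requires both points to be satisfying assignments of $f$), but this is explicitly assumed in the statement of the observation.
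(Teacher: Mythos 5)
Your proof is correct and is exactly the argument the paper has in mind; the paper simply asserts the observation follows ``directly from'' \Cref{lem:small-diameter} without spelling out the casework, and your unpacking of it is the intended one.
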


\begin{lemma} \label{lem:loss-from-truncation}
    \label{lem:close_trunc_biases_are_similar_to_fn}
    For any Boolean function $f: \zo^n \to \zo$ with radius $0 \leq r \leq n$ and input $a \in \zo^n$ such that the truncation $h := f_{r, a}$ satisfies $\rd(f, h) \leq 0.05$, the following is true for all indices $i \in [n]$:
    $$\left| \Prx_{\bz \sim f^{-1}(1)}[\bz_i=1] - \Prx_{\bz \sim h^{-1}(1)}[\bz_i=1]\right| \leq 0.05.$$
\end{lemma}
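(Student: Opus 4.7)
The plan is to directly compare the two probabilities by counting satisfying assignments. Let $N=|f^{-1}(1)|$ and $M=|h^{-1}(1)|$. Since $h=f_{r,a}$ is obtained from $f$ by zeroing out values outside the Hamming ball, we have $h^{-1}(1)\subseteq f^{-1}(1)$, which immediately gives
\[
\rd(f,h) = \frac{|f^{-1}(1)\,\triangle\,h^{-1}(1)|}{N} = \frac{N-M}{N}.
\]
Thus the hypothesis says $\delta_2:=N-M\leq 0.05\,N$. I would then fix a coordinate $i$ and set $A:=|\{z\in f^{-1}(1):z_i=1\}|$ and $B:=|\{z\in h^{-1}(1):z_i=1\}|$, so that $p_i:=\Pr_{\bz\sim f^{-1}(1)}[\bz_i=1]=A/N$ and $q_i:=\Pr_{\bz\sim h^{-1}(1)}[\bz_i=1]=B/M$. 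Let $\delta_1:=A-B\geq 0$; since $h^{-1}(1)$ is obtained from $f^{-1}(1)$ by removing exactly $\delta_2$ points, of which $\delta_1$ have $i$-th bit equal to $1$, we get the key constraint $\delta_1\leq \delta_2$.

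Next I would carry out the direct algebraic simplification
\[
p_i-q_i \;=\; \frac{A}{N}-\frac{B}{M} \;=\; \frac{AM-BN}{NM} \;=\; \frac{(B+\delta_1)(N-\delta_2)-BN}{NM} \;=\; \frac{\delta_1 M - B\delta_2}{NM},
\]
and analyze the two cases based on its sign. In the case $p_i\geq q_i$ (i.e.\ $\delta_1 M\geq B\delta_2$), I drop the nonnegative subtracted term $B\delta_2/(NM)$ to bound
\[
p_i-q_i \;\leq\; \frac{\delta_1 M}{NM} \;=\; \frac{\delta_1}{N} \;\leq\; \frac{\delta_2}{N} \;\leq\; 0.05.
\]
In the case $q_i\geq p_i$ I similarly drop $\delta_1 M/(NM)$ and use the trivial bound $B\leq M$ to get
\[
q_i-p_i \;=\; \frac{B\delta_2-\delta_1 M}{NM} \;\leq\; \frac{B\delta_2}{NM} \;\leq\; \frac{M\delta_2}{NM} \;=\; \frac{\delta_2}{N} \;\leq\; 0.05.
\]
Combining the two cases proves the claim.

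The only place that requires care is the constraint $\delta_1\leq \delta_2$, which is a simple but essential consequence of the fact that the $\delta_1$ removed points with $z_i=1$ form a subset of the $\delta_2$ total removed points; without this observation one would get the looser bound $\delta_2/M\approx \delta_2/(0.95\,N)$, which is slightly larger than $0.05$. Given this observation the proof is just the short case analysis above, so I would not expect any real obstacle. Finally, since the bound is symmetric in $z_i=0$ vs.\ $z_i=1$, only one of the two probability statements needs to be proved.
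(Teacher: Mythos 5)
Your proof is correct, but takes a more computational route than the paper. Where you fix a coordinate $i$, write $p_i - q_i = (\delta_1 M - B\delta_2)/(NM)$, observe the key inequality $\delta_1 \le \delta_2$ (the removed points with $z_i=1$ are a subset of all removed points), and finish with a sign-based case analysis, the paper instead proves the more general claim that whenever $h^{-1}(1) \subseteq f^{-1}(1)$, the total variation distance between the uniform distributions on $f^{-1}(1)$ and $h^{-1}(1)$ equals $\rd(f,h)$ exactly. The lemma then follows immediately because $|\Pr_{\bz \sim f^{-1}(1)}[\bz_i=1] - \Pr_{\bz \sim h^{-1}(1)}[\bz_i=1]|$ is bounded by the total variation distance, which is at most $0.05$. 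The paper's approach is cleaner and slightly more informative (it packages the observation as a reusable statement about distributions rather than a coordinate-by-coordinate bound); yours is more elementary and makes explicit exactly why the extra factor of $1/M$ versus $1/N$ (your "$\delta_2/M \approx \delta_2/(0.95N)$" worry) does not actually enter — the constraint $\delta_1 \le \delta_2$ and the case split avoid it. Your final remark about symmetry is not really needed, since the lemma statement only asks about the event $\bz_i = 1$; but the rest of the argument stands on its own.
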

\begin{proof}
    We instead prove a stronger claim.
    For functions $f, h$ where $h^{-1}(1) \subseteq f^{-1}(1)$ and any parameter $\eps \in (0,1)$, if $\rd(f, h) = \eps$ then writing $\bu_f$ ($\bu_h$, respectively) for a random variable that is uniform random over $f^{-1}(1)$ (over $h^{-1}(1)$, respectively), we have $\dtv(\bu_f,\bu_h) = \eps$, where $\dtv$ denotes the total variation distance.
    
    By definition of $\rd(f, h)$ and using $h^{-1}(1) \subseteq f^{-1}(1)$, we see that $|h^{-1}(1)| = (1 - \eps)N$. So
    \begin{align*}
        \dtv(\bu_f,\bu_h)
        &= \frac{1}{2} \left( \sum_{x \in f^{-1}(1) \setminus h^{-1}(1)} \Prx_{\bz \sim f^{-1}(1)}[\bz = x] + \sum_{x \in h^{-1}(1)} \left(\Prx_{\bz \sim h^{-1}(1)}[\bz = x] - \Prx_{\bz \sim f^{-1}(1)}[\bz = x]\right) \right) \\
        &= \frac{1}{2} \left( \frac{N - |h^{-1}(1)|}{N} + 1 - \frac{|h^{-1}(1)|}{N} \right) \\
        &= \frac{1}{2} \left( \eps + 1 - (1-\eps) \right) \\
        &= \eps. \qedhere
    \end{align*}
\end{proof}

\begin{definition}
\label{defn:nontrivial-coordinates}
    Fix Boolean function $f: \zo^n \to \zo$.
    We say that $i \in [n]$ is a \emph{nontrivial coordinate} if
    $$\Prx_{\bz \sim f^{-1}(1)}[\bz_i = 1] \not \in \zo.$$
\end{definition}

\begin{claim}\label{claim:trunc-fns-nontrivial-coord-bound}
Fix Boolean function $f: \zo^n \to \zo$, input $a \in f^{-1}(1)$, and radius $r \geq 2 \log N$ where $N := |f^{-1}(1)|$. 
We have $f_{r,a}^{-1}(1) \subseteq f^{-1}(1)$ and the number of nontrivial coordinates of $f_{r, a}$ is at most $rN$.
\end{claim}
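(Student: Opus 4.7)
The first containment is essentially immediate from the definition in~\eqref{eq:bounded}: for every $x \in \zo^n$ we have $f_{r,a}(x) \in \{0, f(x)\}$, so any $x$ with $f_{r,a}(x)=1$ must also satisfy $f(x)=1$. I would state this in one line and move on. This also gives $|f_{r,a}^{-1}(1)|\le N$, which I will reuse in the second half.

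For the bound on the number of nontrivial coordinates, the key observation is that the ``reference point'' $a$ itself lies in $f_{r,a}^{-1}(1)$: by hypothesis $a \in f^{-1}(1)$, and $\absdist(a,a)=0\le r$, so $f_{r,a}(a)=f(a)=1$. Now if $i \in [n]$ is a nontrivial coordinate of $f_{r,a}$ in the sense of \Cref{defn:nontrivial-coordinates}, then $\Prx_{\bz \sim f_{r,a}^{-1}(1)}[\bz_i=1] \notin \{0,1\}$, so in particular $f_{r,a}^{-1}(1)$ contains both a point with $i$-th bit equal to $a_i$ (namely $a$) and a point with $i$-th bit not equal to $a_i$. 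Call the latter point $x^{(i)}$ (witnessing that $i$ is nontrivial).

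Let $I \subseteq [n]$ be the set of nontrivial coordinates. By the preceding paragraph, $I \subseteq \bigcup_{x \in f_{r,a}^{-1}(1)} \{i \in [n] : x_i \ne a_i\}$, so
\[
|I| \;\le\; \sum_{x \in f_{r,a}^{-1}(1)} \bigl|\{i \in [n] : x_i \ne a_i\}\bigr| \;=\; \sum_{x \in f_{r,a}^{-1}(1)} \absdist(x,a).
\]
Every $x \in f_{r,a}^{-1}(1)$ satisfies $\absdist(x,a)\le r$ by the definition of the truncation, and $|f_{r,a}^{-1}(1)|\le N$ by the first part, so the right-hand side is at most $r \cdot N$, as required.

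I do not anticipate any real obstacle here: both parts are direct consequences of the definition of $f_{r,a}$, and the hypothesis $r \ge 2\log N$ is not even used in the argument (it is presumably stated to match the regime in which this claim is subsequently invoked, e.g.\ alongside \Cref{clm:unatefn-equals-truncation}). The only ``choice'' in the write-up is the union-bound step relating nontrivial coordinates to Hamming deviations from $a$, and the use of $a \in f_{r,a}^{-1}(1)$ to make that reduction valid.
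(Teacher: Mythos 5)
Your proof is correct and follows essentially the same route as the paper's: both reduce ``nontrivial coordinate of $f_{r,a}$'' to ``witnessed by some $z\in f_{r,a}^{-1}(1)$ with $z_i\neq a_i$,'' and then bound the count by $(\text{at most }N\text{ points})\times(\text{at most }r\text{ deviations each})$. The only cosmetic difference is that you invoke $a\in f_{r,a}^{-1}(1)$ as a witness for the $a_i$-valued side of a nontrivial coordinate, which is harmless but not actually needed (if every $z\in f_{r,a}^{-1}(1)$ had $z_i=a_i$, the coordinate would already be trivial, so the nontriviality assumption alone produces a point with $z_i\neq a_i$); your observation that the hypothesis $r\ge 2\log N$ is never used matches the paper as well.
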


\begin{proof}
The statement that $\smash{f_{r,a}^{-1}(1) \subseteq f^{-1}(1)}$ follows directly from the definition of truncation.

Given $z \in f_{r,a}^{-1}(1)$, there are at most $r$ coordinates where $z_i \neq a_i$. Furthermore, since $\smash{f_{r,a}^{-1}(1) \subseteq f^{-1}(1)}$ we have $\smash{|f_{r,a}^{-1}(1)| \leq N}$. 
So there can be at most $rN$ coordinates $i \in [n]$ for which there exists some $z \in f_{N,a}^{-1}(1)$ with $z_i \neq a_i$.
\end{proof}

\subsection{Setup for Algorithms and Analysis}

We present a key lemma that is central to the analysis of our main algorithms.
The lemma roughly states that given any function $h: \zo^n \to \zo$ that is far from unate (in $\rd$), there are many violations on at least one of two index sets which partition $[n]$.

\begin{claim}\label{clm:h_EQ}
     Let $\tilde{d}$ be an arbitrary vector in $\zos^n$ and let $h: \zo^n \to \zo$ be an arbitrary function. If $\reldist(h,\UNATE) \geq \eps/2$, then at least one of \Cref{eq:many-viols-on-fixed-idxs} or \Cref{eq:many-viols-on-unfixed-idxs} holds:   
     \begin{align} \sum_{i \in \fixed(\tilde{d})} \Big|\Edge_{i}^{1-\tilde{d}_i}(h)\Big|  &\geq \frac{\eps}{32}\cdot |h^{-1}(1)|, \label{eq:many-viols-on-fixed-idxs}\\
     \sum_{i \in \unfixed(\tilde{d})} \min \left(\Big|\Edge_i^0(h)\Big|,\Big|\Edge_{i}^1(h)\Big|\right) &\geq \frac{\eps}{32}\cdot |h^{-1}(1)|. \label{eq:many-viols-on-unfixed-idxs}
     \end{align}

\end{claim}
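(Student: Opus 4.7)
The plan is to apply \Cref{corollary:CS16} to $h$ with a carefully chosen $d \in \{0,1\}^n$ that is consistent with $\tilde{d}$ on the fixed coordinates and chosen optimally on the unfixed coordinates, and then split the resulting sum according to the partition $[n] = \fixed(\tilde{d}) \cup \unfixed(\tilde{d})$.

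\textbf{Step 1: Choose an auxiliary direction vector.} Define $d \in \{0,1\}^n$ by setting $d_i = \tilde{d}_i$ for $i \in \fixed(\tilde{d})$, and for $i \in \unfixed(\tilde{d})$ choose $d_i \in \{0,1\}$ so that
\[
\big|\Edge_i^{1-d_i}(h)\big| = \min\!\left(\big|\Edge_i^0(h)\big|,\big|\Edge_i^1(h)\big|\right).
\]
(This is well-defined: $d_i = 0$ picks out $|\Edge_i^1(h)|$ and $d_i = 1$ picks out $|\Edge_i^0(h)|$, so one of the two choices attains the minimum.)

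\textbf{Step 2: Apply \Cref{corollary:CS16}.} Since $\reldist(h, \UNATE) \geq \eps/2$, \Cref{corollary:CS16} (applied with the parameter $\eps/2$ in place of $\eps$, and with $N$ replaced by $|h^{-1}(1)|$) gives
\[
\sum_{i \in [n]} \big|\Edge_i^{1-d_i}(h)\big| \;\geq\; \frac{\eps}{16}\cdot |h^{-1}(1)|.
\]

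\textbf{Step 3: Split and conclude.} Splitting the left-hand sum according to whether $i \in \fixed(\tilde{d})$ or $i \in \unfixed(\tilde{d})$, and using $d_i = \tilde{d}_i$ on the fixed part and the choice of $d_i$ from Step~1 on the unfixed part, we obtain
\[
\sum_{i \in \fixed(\tilde{d})} \big|\Edge_i^{1-\tilde{d}_i}(h)\big| \;+\; \sum_{i \in \unfixed(\tilde{d})} \min\!\left(\big|\Edge_i^0(h)\big|,\big|\Edge_i^1(h)\big|\right) \;\geq\; \frac{\eps}{16}\cdot |h^{-1}(1)|.
\]
If both \Cref{eq:many-viols-on-fixed-idxs} and \Cref{eq:many-viols-on-unfixed-idxs} were to fail, each summand above would be strictly less than $(\eps/32)|h^{-1}(1)|$, so their sum would be strictly less than $(\eps/16)|h^{-1}(1)|$, a contradiction. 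Hence at least one of \Cref{eq:many-viols-on-fixed-idxs} or \Cref{eq:many-viols-on-unfixed-idxs} must hold.

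There is no real obstacle here: the entire argument is a pigeonhole applied to the inequality supplied by \Cref{corollary:CS16}, with the only observation being that the corollary is strong enough to hold for \emph{every} $d \in \{0,1\}^n$ and so we are free to optimize the choice on the unfixed coordinates while fixing the value on the fixed coordinates to match $\tilde{d}$.
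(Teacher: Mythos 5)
Your proof is correct and follows essentially the same route as the paper's: invoke the \cite{CS16} edge-violation bound for $h$ (you use \Cref{corollary:CS16} with a cleverly chosen $d$, whereas the paper uses \Cref{thm:CS16} directly and implicitly upgrades $\min(|\Edge_i^0(h)|,|\Edge_i^1(h)|)$ to $|\Edge_i^{1-\tilde d_i}(h)|$ on the fixed coordinates), then split the sum over the partition $\fixed(\tilde d)\cup\unfixed(\tilde d)$ and apply pigeonhole. The only cosmetic difference is that your choice of $d$ makes the ``upgrade'' step on fixed coordinates explicit rather than implicit.
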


\begin{proof} 
Since $\reldist(h,\UNATE) \geq \eps/2$, we have $$\dist(h,\UNATE) \geq \frac{\epsilon  |h^{-1}(1)|}{ 2^{n+1}}.$$
By \Cref{thm:CS16}, 
$$\sum_{i\in [n]} \min \left(\Big|\Edge_{i}^{0}(h)\Big|,\Big|\Edge_{i}^{1}(h)\Big|\right) \geq \frac{\epsilon}{16}\cdot |h^{-1}(1)|.$$ 
Since $\fixed(\tilde{d})$ and $\unfixed(\tilde{d})$ partition $[n]$, it must be that at least one of \Cref{eq:many-viols-on-fixed-idxs,eq:many-viols-on-unfixed-idxs} holds.
\end{proof}

\begin{lemma}\label{lem:ftohtof}
    Assume $\rd(f,\UNATE) \geq \epsilon$. Let $M \geq 0$ and let $\tilde{d} \in \zos^n$.
    If $\rd(f,f_{M, \tilde{d}}) \leq \epsilon/10$ then at least one of 
\Cref{eq:many-viols-on-fixed-idxs_f_Md} or \Cref{eq:many-viols-on-unfixed-idxs_f} holds:   
     \begin{align}  
\sum_{i \in \fixed(\tilde{d})} \left| \Edge^{1-\tilde{d}_i}_i \left( f_{M, \tilde{d}} \right) \right|
    \ge \frac{\eps}{64} \cdot |f^{-1}(1)|,
    \label{eq:many-viols-on-fixed-idxs_f_Md}\\
    \sum_{i \in \unfixed(\tilde{d})}\min \left(\Big|\Edge_i^0(f)\Big|,\Big|\Edge_i^1(f)\Big|\right) \geq \frac{\eps}{64}\cdot |f^{-1}(1)|,
     \label{eq:many-viols-on-unfixed-idxs_f}
     \end{align}
\end{lemma}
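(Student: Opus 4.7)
The plan is to apply \Cref{clm:h_EQ} to $h := f_{M,\tilde{d}}$ and then translate each of its two disjuncts to the desired conclusion, relying crucially on a simple geometric observation about unfixed directions.

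First I would establish that $\reldist(h, \UNATE) \geq \eps/2$, so that \Cref{clm:h_EQ} can be applied with parameter $\eps$. Suppose for contradiction some unate $u$ satisfied $\reldist(h,u) < \eps/2$. Then by the approximate triangle inequality (\Cref{lem: approx triangle ineq}),
\[
\reldist(f,u) \le \reldist(f,h) + \bigl(1+\reldist(f,h)\bigr)\reldist(h,u) < \tfrac{\eps}{10} + \tfrac{11}{10}\cdot\tfrac{\eps}{2} = \tfrac{13\eps}{20} < \eps
\]
(assuming $\eps \le 1$), contradicting $\reldist(f,\UNATE)\ge \eps$. Since $h^{-1}(1)\subseteq f^{-1}(1)$ and $\reldist(f,h)\le \eps/10$, we also have $|h^{-1}(1)|\ge (1-\eps/10)|f^{-1}(1)| \ge \tfrac{9}{10}|f^{-1}(1)|$, so $\tfrac{\eps}{32}|h^{-1}(1)| \ge \tfrac{9\eps}{320}|f^{-1}(1)| \ge \tfrac{\eps}{64}|f^{-1}(1)|$.

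Next I would invoke \Cref{clm:h_EQ} on $h$, obtaining one of two disjuncts. If the fixed-coordinate disjunct holds, then $\sum_{i\in\fixed(\tilde{d})}|\Edge_i^{1-\tilde{d}_i}(h)|\ge \tfrac{\eps}{32}|h^{-1}(1)|\ge \tfrac{\eps}{64}|f^{-1}(1)|$, which is exactly \Cref{eq:many-viols-on-fixed-idxs_f_Md} since $h=f_{M,\tilde{d}}$. The interesting case is the unfixed-coordinate disjunct. Here the key claim is that for every $i \in \unfixed(\tilde{d})$ and every $b\in\zo$, we have $\Edge_i^b(h) \subseteq \Edge_i^b(f)$. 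The argument: for any edge $\{x,x^{(i)}\}\in\Edge_i$, the two endpoints differ only in coordinate $i$; since $i\notin\fixed(\tilde{d})$, the quantity $\absdist(\cdot,\tilde{d})$ (which only counts disagreements at fixed coordinates of $\tilde{d}$) is unchanged by flipping bit $i$, so $\absdist(x,\tilde{d})=\absdist(x^{(i)},\tilde{d})$. Consequently, either both endpoints lie within Hamming distance $M$ of $\tilde{d}$ (so $h$ agrees with $f$ at both) or both lie outside (so $h=0$ at both, making the edge monochromatic in $h$). Thus any edge strictly $b$-monotone in $h$ must be strictly $b$-monotone in $f$. Summing gives
\[
\sum_{i\in\unfixed(\tilde{d})}\min\bigl(|\Edge_i^0(f)|,|\Edge_i^1(f)|\bigr)\ \ge\ \sum_{i\in\unfixed(\tilde{d})}\min\bigl(|\Edge_i^0(h)|,|\Edge_i^1(h)|\bigr)\ \ge\ \tfrac{\eps}{32}|h^{-1}(1)|\ \ge\ \tfrac{\eps}{64}|f^{-1}(1)|,
\]
which is \Cref{eq:many-viols-on-unfixed-idxs_f}.

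I expect the main obstacle to be noticing the containment $\Edge_i^b(h)\subseteq\Edge_i^b(f)$ that drives the unfixed case. A naive attempt to compare $|\Edge_i^b(h)|$ to $|\Edge_i^b(f)|$ must contend with the fact that truncation can in general split a monochromatic $1$-edge of $f$ into a bichromatic edge of $h$, creating ``new'' strictly monotone edges in $h$ that were not present in $f$; this would blow up the comparison by as much as $n|S|\approx \eps n|f^{-1}(1)|/10$. The saving insight is that any such split requires the two endpoints to lie on opposite sides of the truncation boundary, and since flipping an unfixed coordinate preserves $\absdist(\cdot,\tilde{d})$, no such split can occur in an unfixed direction. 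This is exactly why the case split of \Cref{clm:h_EQ} aligns with that of the lemma: fixed coordinates stay with $h$, while unfixed coordinates can be safely passed back to $f$.
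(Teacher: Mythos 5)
Your proposal is correct and follows essentially the same route as the paper's proof: apply the approximate triangle inequality to get $\reldist(h,\UNATE)\ge\eps/2$ for $h=f_{M,\tilde d}$, invoke \Cref{clm:h_EQ}, pass the fixed-coordinate case through directly using a lower bound on $|h^{-1}(1)|$, and in the unfixed case use the observation that flipping an unfixed coordinate preserves $\absdist(\cdot,\tilde d)$ to get $\Edge_i^b(h)\subseteq\Edge_i^b(f)$. The paper uses the slightly looser bound $|h^{-1}(1)|\ge|f^{-1}(1)|/2$ where you use $(9/10)|f^{-1}(1)|$, but this is immaterial since both give the stated constant.
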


\begin{proof}
For brevity, denote $f_{M, \tilde{d}}$ by $h$. Since $\reldist(f,h) \leq \epsilon/10$, we must have $|h^{-1}(1)| \geq |f^{-1}(1)|/2$. Furthermore by \Cref{lem: approx triangle ineq}, we have $\reldist(h, \UNATE) \geq \epsilon/2$. So, by \Cref{clm:h_EQ} at least one of \Cref{eq:many-viols-on-fixed-idxs} or \Cref{eq:many-viols-on-unfixed-idxs} holds for $h$.

First, consider the case where \Cref{eq:many-viols-on-fixed-idxs} holds.
Then \Cref{eq:many-viols-on-fixed-idxs_f_Md} holds as follows.
$$\sum_{i \in \fixed(\tilde{d})} \Big|\Edge_i^{1-\tilde{d}_i}(h)\Big| \geq \frac{\epsilon}{32} \cdot |h^{-1}(1)| \geq \frac{\epsilon}{64} \cdot |f^{-1}(1)|.$$ 

Now, consider the case where \Cref{eq:many-viols-on-unfixed-idxs} holds. We will show \Cref{eq:many-viols-on-unfixed-idxs_f} holds for $f$. Fix $i \in \unfixed(\tilde{d})$ and $b \in \zo$. Consider any edge $\{x,x^{(i)}\} \in \Edge_i^b(h)$; we claim that $\{x,x^{(i)}\} \in \Edge_i^b(f)$. From this it follows that: 
$$\sum_{i \in \unfixed(\tilde{d})}\min \left(\Big|\Edge_i^0(f)\Big|,\Big|\Edge_i^1(f)\Big|\right) 
\geq  \sum_{i \in \unfixed(\tilde{d})}\min \left(\Big|\Edge_i^0(h)\Big|,\Big|\Edge_i^1(h)\Big|\right) 
\geq \frac{\epsilon}{64} \cdot |h^{-1}(1)|.$$ 

To argue that  $\{x,x^{(i)}\} \in \Edge_i^b(f)$, without loss of generality assume $b=0$. And let $\{x,x^{(i)}\} \in \Edge_i^b(h)$ where $x_i=0$, $h(x)=1$ and $h(x^{(i)})=0$. Since $i \in \unfixed(\tilde{d})$,  we must have that 
$\absdist(x, \tilde{d}) = \absdist(x^{(i)}, \tilde{d}).$ 
Since $h(x)=1$, these distances must be less than $M$. 
Hence $f(x)=h(x)=1$ and $f(x^{(i)})=h(x^{(i)})=0$, implying $\{x,x^{(i)}\} \in \Edge_i^0(f)$.
\end{proof}


\section{Useful Subroutines} \label{sec:useful-subroutines}

\def\nil{\textsf{nil}}

\subsection{Testing in the biased and unbiased cases}\label{sec:mono-unate}

\Cref{theo:mono} and \Cref{thm:correctnessOfUnbiasedTest} establish performance guarantees for some useful subroutines, \hyperref[algo: BiasedTest]{BiasedTest} and \hyperref[algo: UnbiasedTest]{UnbiasedTest}, which are given in \Cref{algo: BiasedTest} and \Cref{algo: UnbiasedTest} respectively in \Cref{appendix:testing}.
These subroutines are used both in \hyperref[algo:Known-N]{Unateness-Tester-Known-$N$} and in \hyperref[algo:unknown]{Unateness-Tester-Unknown-$N$}.
The proofs of \Cref{theo:mono} and \Cref{thm:correctnessOfUnbiasedTest} can be found in \Cref{appendix:testing}.

\begin{restatable}{theorem}{mono}\label{theo:mono}
\hyperref[algo: BiasedTest]{BiasedTest} takes as input oracles $\MQ(f)$ and $\SAMP(f)$ of some $f:\{0,1\}^n\rightarrow \{0,1\}$, a positive integer $M$, a distance parameter $\eps'$ and 
  a vector $\tilde{d}\in \{0,1,*\}^n$.
It makes $O(M/\eps')$ queries to $\MQ(f)$ and $\SAMP(f)$ and rejects only when it has found two edges in $\Edge_i$ that violate unateness of $f$ along some coordinate $i$ with $\tilde{d}_i \in \zo$. 
Furthermore, when $\tilde{d}$ is consistent with $d^f$ and  
$$
\sum_{i \in \fixed(\tilde{d})}\left| \Edge^{1-\tilde{d}_i}_i \left( f_{M, \tilde{d}} \right) \right|\ge \eps' |f^{-1}(1)|,
$$
\hyperref[algo: BiasedTest]{\text{BiasedTest}}$(\MQ(f),\SAMP(f),M,\eps',\tilde{d})$ rejects with probability at least $0.99$.
\end{restatable}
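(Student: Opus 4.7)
The plan is to prove the three claims — query complexity, one-sided guarantee, and soundness — in turn, with the soundness analysis being the main work.

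For query complexity and the one-sided guarantee: the subroutine consists of $T = \Theta(M/\eps')$ iterations, each drawing one sample from $\SAMP(f)$ and making a bounded number of queries to $\MQ(f)$, followed (on any candidate coordinate found) by a confirmation step using only $O(1)$ further queries; together these yield the $O(M/\eps')$ budget. The one-sided guarantee holds because the algorithm rejects only after it has explicitly produced both an edge in $\Edge_i^0(f)$ and an edge in $\Edge_i^1(f)$ for a common $i \in \fixed(\tilde{d})$, which directly certifies non-unateness along $i$.

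For soundness, I would show that each iteration finds an anti-$\tilde{d}_{\bi}$-monotone edge of $f$ along some fixed coordinate $\bi$ with probability at least $\Omega(\eps'/M)$. Let $V := \sum_{i \in \fixed(\tilde{d})} |\Edge_i^{1-\tilde{d}_i}(f_{M,\tilde{d}})|$; by hypothesis $V \ge \eps' |f^{-1}(1)|$. The iteration draws $\bz \sim f^{-1}(1)$, samples $\bi$ uniformly from $S(\bz) := \{i \in \fixed(\tilde{d}) : \bz_i \neq \tilde{d}_i\}$, and queries $f(\bz^{(\bi)})$. Three key observations drive the bound: (i) for every $\bz$ in the support of $f_{M,\tilde{d}}$, we have $|S(\bz)| \le M$; (ii) the positive endpoints of the edges counted by $V$ all lie in $f^{-1}(1)$, giving $V$ ``good'' (endpoint, direction) pairs in total; and (iii) because $\bi \in \fixed(\tilde{d})$ forces both endpoints of $\{\bz, \bz^{(\bi)}\}$ to lie at the same Hamming distance from $\tilde{d}$, the truncation $f_{M,\tilde{d}}$ agrees with $f$ on both endpoints, so any violation witnessed against $f_{M,\tilde{d}}$ is a genuine violation of $f$. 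A double-count then yields the $\Omega(\eps'/M)$ per-iteration probability, so $T = \Theta(M/\eps')$ independent iterations produce an anti-$\tilde{d}_i$-monotone edge of $f$ along some $i \in \fixed(\tilde{d})$ with high constant probability. For each such candidate $i$, consistency $\tilde{d}_i = d^f_i$ lets us invoke \Cref{lem:BU} (specifically \Cref{eq:persimmon}) to find a matching edge in $\Edge_i^{\tilde{d}_i}(f)$ with probability at least $1/5$ per fresh $\SAMP(f)$ call, boosted to overall success probability at least $0.99$ by a constant number of repetitions.

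The main technical hurdle is the $\Omega(\eps'/M)$ per-iteration bound, since the $V$ violating edges can be unevenly distributed over $f^{-1}(1)$ — some preimages could be positive endpoints of many violating edges while others of none. The argument survives this imbalance because the denominator $|S(\bz)|$ is uniformly upper-bounded by $M$ on the support of $f_{M,\tilde{d}}$, and preimages outside that support contribute zero good pairs and can be harmlessly discarded in the accounting.
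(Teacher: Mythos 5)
Your overall structure matches the paper's argument: a per-edge accounting to show each iteration hits a violating edge with probability $\Omega(\eps'/M)$, followed by a confirmation step driven by \Cref{lem:BU} / \Cref{lem:ConfirmDirection}. The query-complexity and one-sided-error portions are correct.

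However, observation~(iii) of your soundness argument contains a genuine error. You write that ``because $\bi \in \fixed(\tilde{d})$ forces both endpoints of $\{\bz, \bz^{(\bi)}\}$ to lie at the same Hamming distance from $\tilde{d}$,'' the truncation $f_{M,\tilde d}$ agrees with $f$ on both endpoints. This is false: for $i \in \fixed(\tilde d)$, flipping the $i$-th bit changes $\absdist(\cdot,\tilde d)$ by exactly $1$. (It is precisely the \emph{unfixed} case where the two endpoints are equidistant from $\tilde d$ --- the paper uses this in the proof of \Cref{lem:ftohtof}; you appear to have carried that observation over to the wrong case.) The conclusion you want is still true, but for a different reason: since $\bi$ is drawn from $\bz \,\Delta\, \tilde d$, the positive endpoint $\bz$ satisfies $\bz_{\bi}\neq\tilde d_{\bi}$, so flipping coordinate $\bi$ \emph{decreases} the distance, giving $\absdist(\bz^{(\bi)},\tilde d)=\absdist(\bz,\tilde d)-1\le M-1$. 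Hence $\bz^{(\bi)}$ is also inside the radius-$M$ ball around $\tilde d$, and $f$ and $f_{M,\tilde d}$ agree on both endpoints, so the query $f(\bz^{(\bi)})$ returns $0$ and the algorithm proceeds to confirmation. With that repair, your double-counting bound and the subsequent boosting go through exactly as in the paper's proof (which phrases the confirmation step via \Cref{lem:ConfirmDirection} rather than invoking \Cref{eq:persimmon} directly, but this is the same calculation).
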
 

\begin{restatable}{theorem}{unatethm}\label{thm:correctnessOfUnbiasedTest}
     \hyperref[algo: UnbiasedTest]{UnbiasedTest}
     takes as input oracles $\MQ(f)$ and $\SAMP(f)$ of some $f:\{0,1\}^n\rightarrow \{0,1\}$, a distance parameter $\eps'$ and a vector $\tilde{d} \in \{0,1,*\}.$  
It makes $\tilde{O}(\frac{|\unfixed(\tilde{d})|}{\epsilon'})$ queries to $\MQ(f)$ and $\SAMP(f)$ and rejects only
  when it has found two edges in $\Edge_i$ that violate the unateness of $f$ along some coordinate $i\in \unfixed(\tilde{d})$.
Furthermore, when 
$$\sum_{i \in \unfixed(\tilde{d})} \min\left(\left|\Edge_{i}^0(f)\right|
    , \left|\Edge_{i}^1(f)\right|\right) \geq \epsilon'|f^{-1}(1)|,$$  
    \hyperref[algo: UnbiasedTest]{UnbiasedTest}$(\MQ(f), \SAMP(f), \epsilon', \tilde{d})$ 
    rejects with probability at least $0.99$.
\end{restatable}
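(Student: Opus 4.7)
The plan is to adapt the $\tilde{O}(n/\eps)$-query standard-model unateness tester of Baleshzar et al.~\cite{BCPRS20} so that it only looks for violations among coordinates in $\unfixed(\tilde{d})$, and then to re-do the analysis with $|f^{-1}(1)|$ in place of $2^n$ since the assumed violation lower bound is measured at the scale of $|f^{-1}(1)|$. Concretely, let $U := \unfixed(\tilde{d})$ and $u := |U|$. The algorithm will proceed in $L := \lceil \log u\rceil + 1$ ``levels'' indexed by $\ell\in\{0,1,\dots,L\}$; at level $\ell$ it samples a uniformly random subset $\bS_\ell \subseteq U$ of size $\Theta(u/2^\ell)$, and for each coordinate $i\in \bS_\ell$ it performs $\Theta(2^\ell \log(u)/\eps')$ edge tests in direction $i$. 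Each edge test draws $\bz\sim\SAMP(f)$, queries $f(\bz^{(i)})$, and records whether the resulting edge $\{\bz,\bz^{(i)}\}$ lies in $\Edge_i^0(f)$ or $\Edge_i^1(f)$. The algorithm rejects if and only if, for some $i\in U$, both an edge in $\Edge_i^0(f)$ and an edge in $\Edge_i^1(f)$ have been observed.

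\emph{One-sidedness} is immediate: the rejection condition requires two actual edges $e\in \Edge_i^0(f)$, $e'\in \Edge_i^1(f)$ in the same direction $i\in U$, which is an incontrovertible witness that $f$ is neither $0$-monotone nor $1$-monotone along $i$, hence not unate. \emph{Query complexity} is also easy to bound: each level makes $\Theta(u/2^\ell)\cdot \Theta(2^\ell\log(u)/\eps')=\tilde{O}(u/\eps')$ edge tests, and summing over $L=O(\log u)$ levels gives $\tilde{O}(u/\eps')$ total queries, as required.

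The main work is the \emph{completeness} analysis. For each $i\in U$ let $\phi_i := \min(|\Edge_i^0(f)|,|\Edge_i^1(f)|)$ and, using $\bz\sim\SAMP(f)$, note that for an appropriately chosen coordinate $i$ the single-shot probability that one edge test in direction $i$ lands in the rarer of $\Edge_i^0(f),\Edge_i^1(f)$ is at least $\phi_i/|f^{-1}(1)|$ (since conditioning on an endpoint being in $f^{-1}(1)$ can only increase this). I would then classify coordinates $i\in U$ by their ``imbalance level'' $\ell_i := \lfloor \log_2(|f^{-1}(1)|/\phi_i)\rfloor$, partitioning $U$ into buckets $B_0,B_1,\dots,B_L$. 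Since $\sum_{i\in U}\phi_i \ge \eps' |f^{-1}(1)|$ by hypothesis, at least one bucket $B_{\ell^\star}$ contributes an $\Omega(1/\log u)$-fraction of this mass, so $|B_{\ell^\star}| \cdot 2^{-\ell^\star} = \tilde{\Omega}(\eps')$. At level $\ell^\star$ the sample $\bS_{\ell^\star}$ of size $\Theta(u/2^{\ell^\star})$ hits $B_{\ell^\star}$ with probability $\Omega(|B_{\ell^\star}|/2^{\ell^\star})$ per draw, and once a coordinate $i\in B_{\ell^\star}$ is hit, the $\Theta(2^{\ell^\star}\log(u)/\eps')$ edge tests in direction $i$ are enough (by a standard balls-in-bins / coupon-collector style argument) to see \emph{both} a $0$-monotone and a $1$-monotone edge with probability $\Omega(1)$, because each individual test independently produces such an edge with probability $\Omega(1/2^{\ell^\star})$. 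Multiplying these out and amplifying by the logarithmic slack in the constants yields rejection probability at least $0.99$.

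The main obstacle is the completeness step: one has to pick the constants in $|\bS_\ell|$ and in the per-coordinate edge-test count so that the ``correct level'' $\ell^\star$ is guaranteed to produce enough simultaneous observations of both edge types in the same direction, while still summing to $\tilde{O}(u/\eps')$ total queries. This is exactly the tradeoff identified in the intuition for \UnbiasedTest: for coordinates with $\phi_i$ close to $|f^{-1}(1)|$ (small $\ell^\star$) the algorithm must sample \emph{many} coordinates but few edges per coordinate, whereas for coordinates with tiny $\phi_i$ (large $\ell^\star$) it samples few coordinates but many edges per coordinate, and the geometric schedule over $\ell$ interpolates between these regimes. Carrying out this balancing carefully (and verifying that the relative-error normalization by $|f^{-1}(1)|$ does not introduce extra factors) completes the proof.
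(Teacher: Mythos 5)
Your overall plan mirrors the paper's \hyperref[algo: UnbiasedTest]{UnbiasedTest}: bucket the coordinates of $\unfixed(\tilde{d})$ by their imbalance $\phi_i$, run a geometric schedule of levels, and at each level trade off the number of coordinate draws against the number of edge tests per coordinate. But the $1/\eps'$ factor is allocated to the wrong resource, and this breaks the completeness argument you sketch. The paper uses $s_r = \Theta\bigl(\tfrac{u}{\eps' 2^r}\bigr)$ coordinate draws per round with only $\Theta(2^r)$ edge tests per drawn coordinate, so that Levin's work-investment inequality gives $\sum_r p_r s_r \ge \Omega\bigl(\tfrac{1}{\eps'}\sum_r |S_r|/2^r\bigr) = \Omega(1)$ and hence rejection probability $1-e^{-\Omega(1)}$. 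You instead put only $\Theta(u/2^\ell)$ coordinate draws per level and $\Theta(2^\ell \log(u)/\eps')$ edge tests per coordinate. Then at your chosen level $\ell^\star$, the expected number of draws landing in $B_{\ell^\star}$ is $\Theta(u/2^{\ell^\star})\cdot|B_{\ell^\star}|/u = \Theta(|B_{\ell^\star}|/2^{\ell^\star}) = \tilde{\Theta}(\eps')$, so the probability of \emph{any} hit is only $\tilde{O}(\eps')$ — far short of $0.99$ when $\eps'$ is small. The extra $\log(u)/\eps'$ in the per-coordinate edge tests cannot repair this, because it only boosts the conditional detection probability (already $\Omega(1)$ with $\Theta(2^{\ell^\star})$ tests) and does nothing for the hit probability. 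Your remark that ``amplifying by the logarithmic slack in the constants'' closes the gap is exactly where the argument fails: the slack you have is multiplicative in edge tests, not in coordinate draws.

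A second, smaller issue: you set $L = \lceil\log u\rceil + 1$, but the tail of coordinates with $\ell_i > L$ can carry up to $u\cdot 2^{-L-1} = \Theta(1)$ mass, which can dwarf $\eps'$. The paper takes $L = \lceil\log(u/\eps')\rceil + 2$ precisely so the tail mass is $\le \eps'/8$. To fix your argument you would either need to move the $1/\eps'$ factor into the coordinate sample sizes (as in the paper's $s_r$) and then sum the rejection exponent over all rounds rather than isolating one bucket, or alternatively prove that some \emph{shifted} level $\ell^\dagger \approx \ell^\star - \log(\log(u)/\eps')$ simultaneously gives $\Omega(1)$ hit probability and $\Omega(1)$ detection; your write-up analyzes only $\ell=\ell^\star$, where the hit probability is too small.
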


\subsection{ConfirmDirection}

The \hyperref[algo: ConfirmDirection]{ConfirmDirection} subroutine will be used both in \hyperref[algo: BiasedTest]{\text{BiasedTest}} and in a preprocessing routine that is employed in our ``unknown $N$'' algorithm in \Cref{sec:unknown-N}.
It is given as input a coordinate $i \in [n]$ and a proposed orientation $b \in \zo$; it makes constantly many oracle calls, and returns $\Yes$ if it succeeds in finding an edge in $\Edge_i^b$ with the proposed orientation.

\begin{algorithm}[H]
\caption{ConfirmDirection} \label{algo: ConfirmDirection}
\vspace{0.15cm}
 \textbf{Input: } $\MQ(f)$ and $\SAMP(f)$ of some function $f$,  $i\in [n]$ and $b \in \zo$. \\
  \textbf{Output: } $\Yes$ if the algorithm found an edge in $\Edge_i^b(f)$ and $\No$ otherwsise. \\
 \begin{tikzpicture}
\draw [thick,dash dot] (0,1) -- (16.5,1);
\end{tikzpicture}
\begin{algorithmic}[1]
\RepeatN{25}
\If{$\set{\bz, \bz^{(i)}} \in \Edge_i^b$}
\State Return $\Yes$.
\EndIf
 \End
 \State Return $\No$.
\end{algorithmic}
\end{algorithm}

\begin{lemma}\label{lem:ConfirmDirection}
\hyperref[algo: ConfirmDirection]{ConfirmDirection} takes as input $\MQ(f), \SAMP(f)$ for some function $f$ and two paramaters $i \in [n], b \in \zo$.  It makes a constant number of queries to $\SAMP(f)$ and $\MQ(f)$ and returns \Yes ~if and only if it found an edge in $\Edge_i^b(f)$. Furthermore if $d^{f}_i=b$, then \hyperref[algo: ConfirmDirection]{ConfirmDirection}($\MQ(f), \SAMP(f), i, b)$ returns \Yes~with probability at least $0.9$. 
\end{lemma}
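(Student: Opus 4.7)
The plan is to verify the three claims of the lemma in turn, with the probability bound being the main (though not particularly difficult) content.

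For the query complexity and the ``only if'' direction, I would simply observe that the loop runs $25$ times, each iteration samples $\bz \sim \SAMP(f)$ and makes one call to $\MQ(f)$ to evaluate $f(\bz^{(i)})$ (and knows $f(\bz)=1$), giving a total of $O(1)$ oracle calls. The algorithm only ever returns $\Yes$ inside the \textbf{if}-branch, which explicitly checks that $\{\bz,\bz^{(i)}\} \in \Edge_i^b(f)$; hence \Yes\ is returned only when a witness edge in $\Edge_i^b(f)$ has indeed been found.

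For the probabilistic guarantee, I would appeal directly to the second part of \Cref{lem:BU}. Assuming $d^f_i = b$, \Cref{eq:persimmon} yields
\[
\Prx_{\bz \sim f^{-1}(1)}\bigl[\{\bz^{i\leftarrow 0}, \bz^{i\leftarrow 1}\} \in \Edge_i^{b}(f)\bigr] \;\geq\; \frac{1}{5}.
\]
Thus in each of the $25$ iterations, the probability that the drawn sample $\bz$ witnesses such an edge is at least $1/5$, independently across iterations. The probability that \textit{no} iteration finds such an edge is therefore at most $(4/5)^{25}$, which is well below $0.1$, so \hyperref[algo: ConfirmDirection]{ConfirmDirection} returns \Yes\ with probability at least $0.9$, as claimed.

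The main (minor) obstacle is just ensuring the $(4/5)^{25}$ calculation fits the $0.9$ threshold, which is immediate; the proof is essentially a one-line invocation of \Cref{lem:BU} together with independence of the $25$ sampled trials.
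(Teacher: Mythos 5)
Your proposal is correct and matches the paper's proof almost verbatim: both observe the constant query count and the one-sided ``only if'' guarantee directly from the algorithm's description, then invoke \Cref{eq:persimmon} of \Cref{lem:BU} to get a per-iteration success probability of at least $1/5$, concluding with $1-(4/5)^{25}\ge 0.9$.
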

\begin{proof}
Clearly, the algorithm uses $50$ oracle calls and returns $\Yes$ if and only if it found an edge in $\Edge_i^b(f)$. It remains to prove this happens with high probability whenever $b=d^f_i$. By \Cref{lem:BU}, we have
$$\Prx_{\bz \sim f^{-1}(1)}\big[\set{\bz, \bz^{(i)}} \in \Edge_{i}^{d_i^f}(f)] \geq 1/5.$$

Hence, if $b=d^f_i$, each iteration of the for loop has a $1/5$ chance of returning $\Yes$. So, we have that \hyperref[algo: ConfirmDirection]{ConfirmDirection} returns $\Yes$ with probability at least $1-0.8^{25} \geq 0.9$ whenever $b=\tilde{d}_i$. 
\end{proof}

\begin{remark}\label{rem:confirm-dir-nonadaptive}
    \hyperref[algo: ConfirmDirection]{ConfirmDirection} can easily be made non-adaptive by first taking all samples, then making the corresponding membership queries, and finally checking the $\Yes$ condition.
\end{remark}

\subsection{Subroutine \hyperref[algo: CheckSamples]{CheckSamples}}\label{sec:analysis-CheckSamples}

The subroutine \hyperref[algo: CheckSamples]{CheckSamples} will be of critical importance for the tester that does not know the sparsity $N$ \textit{a priori}.
As discussed in the technical overview, critical parts of the analysis to come require considering the properties of a \emph{truncation} of the tested function $f$, rather than $f$ itself.
We remind the reader that as observed in \Cref{lem:small-diameter}, unate functions $f$ are unchanged by performing any truncation of radius at least $2\log N$ around any $a \in f^{-1}(1)$.
Thus, our analysis about a truncation of function $f$ is most useful when we can say that we can reject (with high probability) when presented with evidence that $f$ differs from the analyzed truncation.

This is accomplished in \Cref{algo: CheckSamples}. 
Given $a\in f^{-1}(1)$ and a set $S$ of points in $f^{-1}(1)$, we check if $S$ satisfies that $\absdist(a,z)\le 2N$ for every $z\in S$. (As discussed earlier,  this condition can never be violated when 
    $f$ is unate by \Cref{lem:small-diameter}.) 
The performance guarantees of \hyperref[algo: CheckSamples]{CheckSamples}\ are given in the following lemma:

\begin{algorithm}[t!]
\caption{CheckSamples} \label{algo: CheckSamples}
 \textbf{Input: }$\MQ(f)$ of some function $f$, $a\in f^{-1}(1)$, a set $S$ of points in $f^{-1}(1)$ and a parameter $\delta'$. \\
  \textbf{Output: }Either accept or reject.\\
 \begin{tikzpicture}
\draw [thick,dash dot] (0,1) -- (16.5,1);
\end{tikzpicture}
\begin{algorithmic}[1] 
\State
Breaking ties arbitrarily, let 
  $z$ be a point in $S$ that maximizes $\absdist(a, z)$ 
\RepeatN{ $\lceil\log(1/\delta')\rceil$ }
        \State  Draw $\bi \sim {a \Delta z}$ uniformly at random and  query $f(a^{(\bi)})$ and $f(z^{(\bi)})$ using $\MQ(f)$. 
        \State   Halt and reject if  {$f(a^{(\bi)})=f(z^{(\bi)})=0$}.
\End
\State Return. \Comment{The samples passed the test.}
 \end{algorithmic}
\end{algorithm}

\begin{lemma}\label{lem:CheckSamplesNoRej}
 \hyperref[algo: CheckSamples]{CheckSamples}
 takes as input $\MQ(f)$  of a function $f$, a point $a\in f^{-1}(1)$, a set $S$ of points in $f^{-1}(1)$, and  a parameter $\delta'$.
 It makes $O( \log (1/\delta'))$ queries, either rejects or returns, and satisfies the following performance guarantees:
\begin{flushleft}\begin{enumerate}
\item For any $f$, if $S$ contains at least one point $z\in S$ with $\absdist(a,z)>2N$, then  \hyperref[algo: CheckSamples]{CheckSamples} rejects with probability at least $1-\delta'$; and 
\item When $f$ is unate, \hyperref[algo: CheckSamples]{CheckSamples} always returns on any input $a \in f^{-1}(1)$ and any $S \subseteq f^{-1}(1)$.
\end{enumerate}\end{flushleft}
\end{lemma}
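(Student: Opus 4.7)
The plan is to handle Parts 1 and 2 separately after noting that the query complexity is immediate: the loop runs $\lceil \log(1/\delta')\rceil$ times and makes two queries to $\MQ(f)$ per iteration.

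For Part 1, I would lower-bound the per-iteration rejection probability. Set
\[
k := \big|\{i \in a\Delta z : f(a^{(i)}) = 1\}\big| \quad \text{and} \quad k' := \big|\{j \in a\Delta z : f(z^{(j)}) = 1\}\big|,
\]
and consider the collection
\[
\mathcal{P} := \big\{a^{(i)} : i \in a\Delta z,\ f(a^{(i)}) = 1\big\} \cup \big\{z^{(j)} : j \in a\Delta z,\ f(z^{(j)}) = 1\big\}.
\]
The key structural claim is that $|\mathcal{P}| = k + k'$ and $\mathcal{P} \subseteq f^{-1}(1) \setminus \{a, z\}$. The inclusion is immediate: each listed point lies in $f^{-1}(1)$ by construction, differs from $a$ in at least one bit, and satisfies $\absdist(\,\cdot\,, z) \geq \absdist(a, z) - 1 > 0$. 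Distinctness within each half of $\mathcal{P}$ is clear, and for $a^{(i)} \ne z^{(j)}$ a short case split on whether $i = j$ suffices, using $|a\Delta z| > 2N \geq 4$ (the hypothesis forces $N \geq 2$ since $z \in f^{-1}(1) \setminus \{a\}$). Therefore $k + k' \leq N - 2$, and a union bound yields that a uniform $\bi \sim a\Delta z$ triggers the rejection condition $f(a^{(\bi)}) = f(z^{(\bi)}) = 0$ with probability at least $1 - (k+k')/|a\Delta z| > 1 - (N-2)/(2N) > 1/2$. Running $\lceil \log(1/\delta')\rceil$ independent trials then boosts the overall rejection probability to at least $1 - \delta'$.

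For Part 2, suppose $f$ is unate. Fix any iteration, let $\bi$ be the sampled index in $a\Delta z$, and assume without loss of generality that $a_{\bi} = 0$ and $z_{\bi} = 1$ (the other sub-case is symmetric). Along coordinate $\bi$, $f$ is either $1$-monotone or $0$-monotone. If $1$-monotone, then $f(a^{(\bi)}) \geq f(a) = 1$; if $0$-monotone, then $f(z^{(\bi)}) \geq f(z) = 1$. Either way at least one queried value is $1$, so the rejection condition is never triggered and the procedure always returns.

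The main obstacle is establishing the sharper bound $k + k' \leq N - 2$ rather than the loose $k, k' \leq N - 1$ (which would only give $k + k' \leq 2(N-1)$ and fail to beat the denominator $2N$). This reduces to the disjointness of the two halves of $\mathcal{P}$, which is where the hypothesis $|a\Delta z| > 2N$ is used in an essential (if mild) way.
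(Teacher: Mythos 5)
Your proof is correct and follows essentially the same route as the paper: both arguments hinge on the observation that the points $\{a^{(i)}, z^{(j)} : i,j \in a\Delta z\}$ are pairwise distinct (which you establish via the case split, and which forces the number of indices $i$ for which $f(a^{(i)})=1$ or $f(z^{(i)})=1$ to be at most $N$), so that a uniform $\bi \sim a\Delta z$ triggers rejection with probability more than $1/2$, and the $\lceil\log(1/\delta')\rceil$ repetitions amplify this to $1-\delta'$. Your Part~2 argument is a more explicit unpacking of the paper's one-line remark that the rejection condition is a witness against unateness, and is likewise correct.
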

\begin{proof}
The query complexity is clear from the description of \hyperref[algo: CheckSamples]{CheckSamples}. Item (2.), the case when $f$ is unate, is also clear given that \hyperref[algo: CheckSamples]{CheckSamples} only rejects when it finds a violation to unateness on line 4.

For part (1.), fix any $f$ and suppose that $S$ contains at least one point with distance more than $2N$ from $a$; then the $z$ picked on line 1 must be such that $|a\Delta z|=\absdist(a, z)>2N$.
Since $N\ge 1$,
  we have that $\{a^{(i)}, z^{(i)}:i\in a\Delta z\}$ are $2\absdist(a, z)$ distinct points.
Given that $N=|f^{-1}(1)|$, the number of 
  $i\in a\Delta z$ such that at least one of $f(a^{(i)})=1$ and $f(z^{(i)})=1$ holds
  is at most $N$.       
    Hence, for $\bi \sim a\Delta z$, with probability at least $(1-{N}/{\absdist(a, z)}) \geq 1/2$ we have $f(a^{(\bi)})=f(z^{(\bi)})=0$.
     So the probability that \hyperref[algo: CheckSamples]{CheckSamples} does not reject is at most $\delta'$.
\end{proof}


\section{Relative-error unateness testing when $N$ is known}

Our main algorithm for the case when $N=f^{-1}(1)$ is known, \hyperref[algo:Known-N]{Unateness-Tester-Known-$N$}, is described in \Cref{algo:Known-N}.

\begin{algorithm}[htb]
\caption{Unateness-Tester-Known-$N$} \label{algo:Known-N}
\vspace{0.15cm}
 \textbf{Input: } $\MQ(f)$ and $\SAMP(f)$ of some function $f$,  $N=|f^{-1}(1)|$ and a distance parameter $\epsilon$. \\
 \begin{tikzpicture}
\draw [thick,dash dot] (0,1) -- (16.5,1);
\end{tikzpicture}
\begin{algorithmic}[1]
\Algphase{Phase 0:}
    \State Draw $\ba \sim \SAMP(f)$.
 \Algphase{Phase 1:}
    \State Let $\bS$ be a set of $\lceil 400 \log N \rceil$ samples  from $\SAMP(f)$. Reject if $\exists x \in \bS$ with $\Delta(x, \ba) > 2 \log N$.
    \State Let $\tilde{\bd} \in \{ 0,1,*\}^n$ be a bias vector where for each $i\in [n]$, $\tilde{\bd}_i$ is defined as:
    $$\tilde{\bd}_i=\begin{cases}
        0 & \text{ if $\Prx_{\bz \in \bS}[\bz_i = 1]<0.25$;} \\
        1 & \text{ if $\Prx_{\bz \in \bS}[\bz_i = 1]>0.75$;} \\
        * & \text{ otherwise.} \\
    \end{cases}$$
 \Algphase{Phase 2:}
    \State Let $\bT$ be a set of $\lceil 30/\eps \rceil$ samples drawn from $\SAMP(f)$. Reject if $\exists x \in \bT$ with $\Delta(x, \ba) > 2 \log N$.
    \State Let $\bM=2 \log N + \Delta(\ba, \tilde{\bd})$.

\Algphase{Phase 3:}
    \State Run \hyperref[algo: BiasedTest]{BiasedTest}($\MQ(f), \SAMP(f)$, $\bM$, $\epsilon/64$, $\tilde{\bd}$). 
    \State Run \hyperref[algo: UnbiasedTest]{UnbiasedTest}($\MQ(f), \SAMP(f)$, $\epsilon/64$, $\tilde{\bd}$).
    \State Halt and accept $f$. 
\end{algorithmic}
\end{algorithm}

The goal of this section is to prove the following theorem, which is a more detailed statement of \Cref{thm:firstpositive}:

\begin{theorem} \label{thm:known-N}
\hyperref[algo:Known-N]{Unateness-Tester-Known-$N$}
  takes as input $\MQ(f)$ and $\SAMP(f)$ of some  function
  $f:\{0,1\}^n\rightarrow \{0,1\}$, the
  value $N=|f^{-1}(1)|$, and a distance parameter $\eps$.
It makes
$\Tilde{O}((\log N)/\epsilon)$ queries and satisfies the following conditions:
When $f$ is unate, it always accepts; 
  when $f$ is $\eps$-far from unate in relative distance, it rejects with probability at least $2/3$.
\end{theorem}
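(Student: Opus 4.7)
The plan is to handle query complexity and completeness quickly, and then tackle soundness by analyzing $f$ through an explicit truncation.

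\textbf{Query complexity and completeness.} I would first tabulate samples and queries phase by phase: Phases~0--2 use $O(\log N + 1/\eps)$ samples of $\SAMP(f)$ and no $\MQ(f)$ queries, while Phase~3 invokes \BiasedTest\ at cost $O(\bM/\eps)$ queries (\Cref{theo:mono}) and \UnbiasedTest\ at cost $\tilde{O}(|\unfixed(\tilde{\bd})|/\eps)$ queries (\Cref{thm:correctnessOfUnbiasedTest}). Hence, for $\tilde{O}((\log N)/\eps)$ total, it suffices to show that $\bM = O(\log N)$ and $|\unfixed(\tilde{\bd})| = O(\log N)$ hold with high probability; these bounds emerge from the soundness analysis below but actually hold for any $f$. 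For completeness, \Cref{lem:small-diameter} bounds the diameter of $f^{-1}(1)$ by $2\log N$ when $f$ is unate, so the Phase~1 and Phase~2 rejection checks never trigger; and by \Cref{cor:unate_Some_Empty_Edge} neither \BiasedTest\ nor \UnbiasedTest\ can find the pair of oppositely oriented edges required to reject.

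\textbf{Reduction to a truncation.} For soundness, assume $\reldist(f,\UNATE)\ge \eps$ and let $h := f_{2\log N,\ba}$. If $\reldist(f,h) > \eps/10$, then a Chernoff argument on the $\lceil 30/\eps\rceil$ samples in $\bT$ shows that Phase~2's check catches a point outside the ball around $\ba$ with probability at least $0.95$. Otherwise $\reldist(f,h)\le \eps/10$, and I would proceed by a conditional argument: conditional on Phase~1 not rejecting (an event which, when it fails, is itself a rejection), the samples in $\bS$ are exactly $\lceil 400\log N\rceil$ i.i.d.\ uniform draws from $h^{-1}(1)$, and the remainder of the analysis works inside this conditional distribution.

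\textbf{Estimating $\tilde{\bd}$ and bounding $\bM$, $|\unfixed(\tilde{\bd})|$.} This is the step I expect to be most technical. By \Cref{claim:trunc-fns-nontrivial-coord-bound}, $h$ has at most $2N\log N = \poly(N)$ nontrivial coordinates, so a Chernoff-plus-union-bound argument over these coordinates, using $|\bS|=\Theta(\log N)$, yields $|\widehat{p}_i^h - p_i^h|\le 0.1$ for all $i\in[n]$ simultaneously with conditional probability at least $0.99$; combined with \Cref{lem:close_trunc_biases_are_similar_to_fn} this forces $\tilde{\bd}$ to be consistent with $d^f$. The structural identity that simultaneously controls $\bM$ and $|\unfixed(\tilde{\bd})|$ is
$$\sum_{i\in[n]} \Prx_{\bz\sim h^{-1}(1)}\big[\bz_i\ne \ba_i\big] \;=\; \Exp_{\bz\sim h^{-1}(1)}\big[\absdist(\bz,\ba)\big] \;\le\; 2\log N,$$
from which a Markov-type argument gives that at most $O(\log N)$ coordinates satisfy $\Prx_{\bz}[\bz_i\ne \ba_i] > 0.15$; the remaining coordinates end up with $\tilde{\bd}_i = \ba_i$. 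Consequently both $|\unfixed(\tilde{\bd})|$ and $\Delta(\ba,\tilde{\bd})$ are $O(\log N)$, so $\bM = O(\log N)$.

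\textbf{Completing soundness via \Cref{lem:ftohtof}.} Because $\absdist(x,\tilde{\bd}) \le \absdist(x,\ba)+\Delta(\ba,\tilde{\bd}) \le \bM$ for every $x\in h^{-1}(1)$, we have $h^{-1}(1)\subseteq f_{\bM,\tilde{\bd}}^{-1}(1)\subseteq f^{-1}(1)$, and hence $\reldist(f,f_{\bM,\tilde{\bd}}) \le \reldist(f,h)\le \eps/10$. Invoking \Cref{lem:ftohtof}: either the fixed coordinates of $\tilde{\bd}$ carry enough $(1-\tilde{\bd}_i)$-monotone edges in $f_{\bM,\tilde{\bd}}$ that \BiasedTest\ rejects with probability at least $0.99$ (using that $\tilde{\bd}$ is consistent with $d^f$), or the unfixed coordinates carry enough violations in $f$ that \UnbiasedTest\ rejects with probability at least $0.99$. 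A union bound over the constantly many bad events yields overall failure probability below $1/3$.
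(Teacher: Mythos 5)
Your proposal takes the same overall route as the paper (truncate around $\ba$, estimate biases, bound $|\unfixed(\tilde{\bd})|$ and $\bM$, show $h^{-1}(1)\subseteq f_{\bM,\tilde{\bd}}^{-1}(1)\subseteq f^{-1}(1)$, invoke \Cref{lem:ftohtof}), and the soundness argument would, with the two patches below, go through. But there is one substantive gap worth highlighting and one minor casework issue.

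\textbf{Deterministic vs.\ probabilistic bound on $|\unfixed(\tilde{\bd})|$ and $\Delta(\ba,\tilde{\bd})$.} You assert that the bounds $\bM = O(\log N)$ and $|\unfixed(\tilde{\bd})| = O(\log N)$ ``actually hold for any $f$,'' but the argument you give does not establish this. Your Markov-type argument is applied to the \emph{true} distribution $h^{-1}(1)$: you bound the number of coordinates with $\Prx_{\bz\sim h^{-1}(1)}[\bz_i\ne\ba_i] > 0.15$, and then transfer to the empirical estimates $\widehat{p}_i$, which requires the accuracy event (a Chernoff/union-bound event that fails with some probability). The paper instead applies the averaging argument directly to the \emph{empirical} sample set $\bS$: if Phase~1 does not reject, then $\Ex_{\bx\sim\bS}[\absdist(\bx,\ba)] \le 2\log N$ deterministically, and since each $i\in\unfixed(\tilde{\bd})$ contributes at least $0.25$ to $\sum_i\Prx_{\bx\sim\bS}[\bx_i\ne\ba_i]$, you get $|\unfixed(\tilde{\bd})|\le 8\log N$ and $\Delta(\ba,\tilde{\bd})\le 3\log N$ with no randomness at all. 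This distinction is not cosmetic: the paper needs the deterministic bound both for the stated worst-case query complexity and for making the algorithm non-adaptive (Phase~3's query budget must be committed before seeing how the sampling phase actually turned out). Your version would only give a high-probability query bound.

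\textbf{Casework boundary.} You split on whether $\rd(f,h) > \eps/10$, and in the ``otherwise'' branch invoke \Cref{lem:close_trunc_biases_are_similar_to_fn}, which requires $\rd(f,h)\le 0.05$. If $\eps > 0.5$ there is a range $\rd(f,h)\in(0.05,\eps/10]$ where that lemma does not apply and your conditional argument does not directly yield consistency of $\tilde{\bd}$ with $d^f$. This is easy to patch (in that range Phase~1 already rejects with overwhelming probability, since each of the $400\log N$ samples in $\bS$ lands outside the radius-$2\log N$ ball with probability $>0.05$; the paper handles exactly this as one of its three cases), but as written the argument has a hole.

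As a smaller remark, your unified treatment of the union bound over nontrivial coordinates (without the separate $N\ge n$ case the paper uses) is fine, since the number of nontrivial coordinates is always $\min(n, 2N\log N)$ and the per-coordinate Chernoff failure probability of roughly $N^{-\Omega(1)}$ covers both regimes.
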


\subsection{High-level overview}

The algorithm \KnownN\ consists of four phases:

\begin{flushleft}\begin{enumerate}
    \item[]\textbf{Phase 0:} The goal is to pick a base point $\ba\in f^{-1}(1)$ for the tester. Recall that by \Cref{lem:small-diameter}, any point in $f^{-1}(1)$ must be close to $\ba$ (Hamming distance at most $2\log N$) if $f$ is unate. The base point $\ba$ will be useful in Phase~1 and Phase~2 since we can now consider only the points from $f^{-1}(1)$ that are close to $\ba$ (i.e., consider the truncated function $f_{2\log N, a}$) and instantly reject if any satisfying assignment of $f$ that is far away from $\ba$ is found.
\end{enumerate}\end{flushleft}

\begin{flushleft}\begin{enumerate}
    \item[]\textbf{Phase 1:} The goal is to compute a bias vector $\tilde{\bd}\in \zos^n$ that is consistent with $d^f$ and is ``close to'' the base point $\ba$.
    Intuitively, $\tilde{\bd}$ is consistent with $d^f$ because the algorithm draws enough number of samples in $\bS$ to correctly estimate the bias of each coordinate with high probability, and the bias vector $\tilde{\bd}$ estimated from $\bS$ is close to $\ba$ since all samples in $\bS$ must be close to $\ba$ (Hamming distance at most $2\log N$).
    The main result we will prove about Phase~1 is the following:
\end{enumerate}\end{flushleft}

\begin{lemma}\label{lem:Phase-1-known-N}
For every $f:\zo^n\to \zo$ and $a\in f^{-1}(1)$, Phase~1 of \hyperref[algo:Known-N]{Unateness-Tester-Known-N} either rejects or returns a vector $\tilde{\bd}\in \{0,1,*\}^n$, and satisfies the following performance guarantees:
\begin{flushleft}\begin{enumerate}
    \item If $f$ is unate, Phase~1 never rejects.

    \item  Phase~1 rejects or both of the following are true:
        $$\left|\unfixed(\tilde{\bd})\right| \leq 8 \log N;$$
        $$\absdist(\ba, \tilde{\bd})=\left|\{i\in \fixed(\tilde{\bd}) \mid \tilde{\bd}_i\neq \ba_i\}\right|\leq 3 \log N.$$
    
    \item With probability at least $0.9$, Phase~1 rejects or $\tilde{\bd}$ is consistent with $d^f$.
  
\end{enumerate}\end{flushleft}
\end{lemma}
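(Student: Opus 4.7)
The plan is to establish the three parts in sequence, with parts~(1) and~(2) being direct and part~(3) carrying the main difficulty.

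Part~(1) is immediate from \Cref{lem:small-diameter}: if $f$ is unate then every pair of points in $f^{-1}(1)$ is within Hamming distance $2\log N$, so every $\bz\in \bS$ drawn from $\SAMP(f)$ lies within distance $2\log N$ of $\ba \in f^{-1}(1)$ and Phase~1 cannot reject.

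For part~(2), I would condition on Phase~1 not rejecting, which forces $\absdist(\bz, \ba) \leq 2\log N$ for every $\bz \in \bS$. Writing $c_i := |\{\bz \in \bS : \bz_i \neq \ba_i\}|$, double counting gives $\sum_i c_i = \sum_{\bz \in \bS} \absdist(\bz, \ba) \leq 2 |\bS|\log N$. The definition of $\tilde{\bd}$ immediately forces $c_i \geq 0.25|\bS|$ for every $i \in \unfixed(\tilde{\bd})$ (regardless of $\ba_i$) and $c_i > 0.75|\bS|$ for every $i \in \fixed(\tilde{\bd})$ with $\tilde{\bd}_i \neq \ba_i$. Restricting the sum to each of these two disjoint index sets then yields $|\unfixed(\tilde{\bd})| \leq 8\log N$ and $\absdist(\ba, \tilde{\bd}) < 8(\log N)/3 < 3\log N$.

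Part~(3) is the delicate step. The main obstacle is that a naive union bound over all $n$ coordinates would fail since $n$ may be arbitrarily larger than any $\mathrm{poly}(N)$. My resolution is to use the truncation $h := f_{2\log N, \ba}$ and case-split on $|h^{-1}(1)|/N$. If $|h^{-1}(1)|/N \leq 0.95$, then each sample lies outside $h^{-1}(1)$ with probability $\geq 0.05$, so $\Pr[\text{no rejection}] \leq (0.95)^{|\bS|}$, which is polynomially small in $N$ and directly absorbs the bad event. Otherwise $\reldist(f,h) \leq 0.05$, so \Cref{lem:loss-from-truncation} gives $|p_i - q_i| \leq 0.05$ for every $i$, where $p_i := \Prx_{\bz \sim f^{-1}(1)}[\bz_i = 1]$ and $q_i := \Prx_{\bz \sim h^{-1}(1)}[\bz_i = 1]$. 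Conditioned on no rejection, the samples in $\bS$ are i.i.d.\ uniform over $h^{-1}(1)$, and by \Cref{claim:trunc-fns-nontrivial-coord-bound}, $h$ has at most $2N\log N$ non-trivial coordinates (those with $q_i \in (0,1)$). Trivial-for-$h$ coordinates are automatically consistent: the empirical fraction $\hat{p}_i$ equals $q_i \in \{0,1\}$ exactly, and $|p_i - q_i| \leq 0.05$ keeps $p_i$ on the same side of the thresholds $\{0.4, 0.6\}$ as $q_i$, so $\tilde{\bd}_i = d^f_i$. For each non-trivial coordinate, Hoeffding with $|\bS| = 400\log N$ samples gives $\Pr[|\hat{p}_i - q_i| > 0.1] \leq 2 N^{-\Omega(1)}$; whenever this deviation bound holds the triangle inequality yields $|\hat{p}_i - p_i| \leq 0.15$, and a short case analysis on $p_i$ versus the thresholds $0.4$ and $0.6$ (depending on whether $\tilde{\bd}_i$ is $0$, $1$, or $*$) verifies that $\tilde{\bd}_i$ is consistent with $d^f_i$. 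Union bounding over the $\leq 2N\log N$ non-trivial coordinates gives total failure probability $O(N^{-\Omega(1)}\log N) \leq 0.1$ for $N$ sufficiently large, with the remaining tiny values of $N$ dispatched by direct inspection.
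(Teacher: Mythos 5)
Your proposal is correct and follows essentially the same route as the paper: part~(1) is the same invocation of \Cref{lem:small-diameter}; part~(2) is the paper's expectation argument phrased via the equivalent double-counting $\sum_i c_i = \sum_{\bz\in\bS}\absdist(\bz,\ba)$; and part~(3) uses the same two core ingredients, the rejection-probability bound when $\reldist(f,f_{2\log N,\ba})$ is large and the truncation-plus-Chernoff-plus-union-bound over the $\le 2N\log N$ non-trivial coordinates of $f_{2\log N,\ba}$ otherwise. The one genuine difference is in the case structure of part~(3): the paper splits into three cases, treating $N\ge n$ separately with a direct Chernoff bound and union bound over all $n$ coordinates, and then splitting the $N<n$ case on whether $\reldist(f,f_{2\log N,\ba})$ exceeds $0.05$. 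You drop the $N\ge n$ case and always apply the truncation argument, splitting only on $\reldist(f,f_{2\log N,\ba})$. This works because when $N\ge n$ the number of non-trivial coordinates is trivially at most $n\le 2N\log N$, so the union bound is if anything looser but still sufficient against the $N^{-\Omega(1)}$ Chernoff tail with $|\bS|=\lceil 400\log N\rceil$; your version is a modest streamlining, while the paper's extra case is arguably pedagogically clearer since it avoids invoking truncation machinery when it isn't needed. The rest (conditioning on no rejection yielding i.i.d.\ samples from $h^{-1}(1)$, trivial coordinates forcing exact agreement, the triangle-inequality transfer via \Cref{lem:loss-from-truncation}, and the threshold case analysis) matches the paper.
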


\begin{flushleft}\begin{enumerate}
    \item[]\textbf{Phase 2:} The goal of Phase~2 is to set the value of $\bM$ properly to satisfy the pre-condition in \Cref{lem:ftohtof} that $f_{\bM, \tilde{\bd}}$ is close to $f$ in relative distance. The performance guarantee is shown in the following lemma. Intuitively, Phase~1 guarantees that (with high probability) almost every point $x\in f^{-1}(1)$ is close to $\ba$ and $\ba$ is close to $\tilde{\bd}$, so setting $\bM$ as the sum of these two distances is enough to cover almost every point in $f^{-1}(1)$.
\end{enumerate}\end{flushleft}

\begin{restatable}{lemma}{valueofM} \label{lem:Phase-2-known-N}
    For every function $f: \zo^n \to \zo$, Phase~2 either rejects or returns $\bM \in \mathbb{N}$.
    It has the following performance guarantees:
    \begin{enumerate}
        \item Phase~2 never rejects when $f$ is unate.
        \item With probability at least $0.9$, Phase~2 rejects or $\reldist(f, f_{\bM, \tilde{\bd}})\leq  \epsilon/10$.
    \end{enumerate}
\end{restatable}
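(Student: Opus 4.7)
\textbf{Proof plan for \Cref{lem:Phase-2-known-N}.}
The bound on $\bM$ is immediate from the description of Phase~2, so the content of the lemma is in items (1) and (2).

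For (1), I would appeal directly to \Cref{lem:small-diameter}: if $f$ is unate, then every $z \in f^{-1}(1)$ satisfies $\absdist(\ba,z) \le 2\log N$, since $\ba\in f^{-1}(1)$. Every element of $\bT$ lies in $f^{-1}(1)$ (it is drawn from $\SAMP(f)$), so the rejection condition on $\bT$ is never triggered, and $\bM$ is returned.

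For (2), define the ``far set''
\[
B := \big\{\, x \in f^{-1}(1) \,:\, \absdist(x,\ba) > 2\log N\,\big\},
\]
and let $\mu := |B|/N$. I would split into two cases depending on $\mu$.

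\emph{Case $\mu \le \eps/10$.} Here I show $\reldist(f,f_{\bM,\tilde{\bd}}) \le \eps/10$ deterministically (conditioned on $\tilde{\bd}$ being whatever it is). The only points where $f$ and $f_{\bM,\tilde{\bd}}$ differ are those $x\in f^{-1}(1)$ with $\absdist(x,\tilde{\bd}) > \bM$. For any $x\in f^{-1}(1)\setminus B$, the (partial-string) triangle inequality
\[
\absdist(x,\tilde{\bd}) \le \absdist(x,\ba) + \absdist(\ba,\tilde{\bd}) \le 2\log N + \absdist(\ba,\tilde{\bd}) = \bM
\]
(which holds because a coordinate $i\in\fixed(\tilde{\bd})$ with $x_i\neq \tilde{\bd}_i$ forces $x_i\neq\ba_i$ or $\ba_i\neq\tilde{\bd}_i$) shows $x$ is not in the symmetric difference. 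Hence the symmetric difference has size at most $|B|$, and $\reldist(f,f_{\bM,\tilde{\bd}}) \le \mu \le \eps/10$.

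\emph{Case $\mu > \eps/10$.} Here every draw from $\SAMP(f)$ independently lands in $B$ with probability $\mu$, so the probability that none of the $\lceil 30/\eps\rceil$ samples in $\bT$ lies in $B$ is at most $(1-\mu)^{\lceil 30/\eps\rceil} \le (1-\eps/10)^{30/\eps} \le e^{-3} < 0.1$. Thus Phase~2 rejects with probability at least $0.9$.

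There is no real obstacle here; the only care required is in using the partial-string triangle inequality correctly (hence my explicit mention of it above) and in choosing the sample size for $\bT$ so that the standard $(1-x)^{1/x}\le e^{-1}$ inequality yields a failure probability comfortably below $0.1$ — both of which are built into the parameter choices of Phase~2.
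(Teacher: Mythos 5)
Your proof is correct and takes essentially the same approach as the paper: the paper's case split on whether $\rd(f, f_{2\log N,\ba}) \le \eps/10$ is exactly your case split on $\mu$, and your explicit partial-string triangle inequality is just the unpacked version of the paper's containment $f^{-1}_{2\log N,\ba}(1) \subseteq f^{-1}_{\bM,\tilde{\bd}}(1)$.
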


\begin{flushleft}\begin{enumerate}    
    \item[]\textbf{Phase 3:} This phase simply runs \BiasedTest\ and \UnbiasedTest. 
    Based on the guarantees from Phase~1 and Phase~2, one of the two testers must reject with high probability if $f$ is far from unateness in relative distance; if neither tester rejects, the algorithm accepts $f$.
\end{enumerate}\end{flushleft}

\subsection{Proof of \Cref{thm:known-N} assuming \Cref{lem:Phase-1-known-N} and \Cref{lem:Phase-2-known-N}}

\subsubsection{Query complexity}

We first analyze the query complexity of \hyperref[algo:Known-N]{Unateness-Tester-Known-N}. 

\begin{enumerate}
    \item Phase 0 makes one query to $\SAMP(f)$.
    \item Phase 1 makes $O(\log N)$ queries to $\SAMP(f)$.
    \item Phase 2 makes $O(1/\eps)$ queries to $\SAMP(f)$.
    \item \hyperref[algo: BiasedTest]{BiasedTest} in Phase 3 makes $O(\bM/\epsilon')$ queries by \Cref{theo:mono}, which is $O((\log N)/\epsilon)$, since $\epsilon'=\epsilon/64$ and $\bM = 2 \log N + \Delta(\ba, \tilde{d}) \leq 5 \log N$ (follows from Item~(2.) of \Cref{lem:Phase-1-known-N}).
    \item \hyperref[algo: UnbiasedTest]{UnbiasedTest} in Phase 3 makes $\tilde{O}\left(\frac{|\unfixed(\tilde{\bd})|}{\epsilon'}\right)$ queries by \Cref{thm:correctnessOfUnbiasedTest}, which is $\tilde{O}((\log N)/\epsilon)$ since we have $|\unfixed(\tilde{\bd})| \leq 8\log N$ from Item~(2.) of \Cref{lem:Phase-1-known-N} and $\epsilon'=\epsilon/64$.
\end{enumerate}
The overall query complexity is therefore $\tilde{O}((\log N)/\epsilon)$.

\subsubsection{Non-adaptivity}
\label{subsubsection-nonadaptivity-known-N}

We present \hyperref[algo:Known-N]{Unateness-Tester-Known-N} in a way that is easy to understand, but it may not be directly clear why it is non-adaptive. In this subsection we show how the algorithm can be made non-adaptive without an asymptotic increase in query complexity. The adaptive parts of the algorithm are the following:

\begin{flushleft}\begin{enumerate}

\item In \hyperref[algo: BiasedTest]{BiasedTest} in Phase~3, the algorithm decides whether to run \hyperref[algo: ConfirmDirection]{ConfirmDirection} based on the value of $f(\bz^{(i)})$ (Step 8 and 9). This step can be made non-adaptive by always running \hyperref[algo: ConfirmDirection]{ConfirmDirection} regardless of $f(z^{(i)})$, which only blows up the number of queries by a constant (recall that \hyperref[algo: ConfirmDirection]{ConfirmDirection} only makes constantly many queries).

\item In \BiasedTest\ and \UnbiasedTest\ in Phase~3, the query complexity depends on $\tilde{\bd}$ and $\bM$ from Phase~1 and Phase~2. Although Phases~0,1 and 2 only use the sampling oracle $\SAMP(f)$, the algorithm still adaptively decides the number of future samples 
in Phase~3 based on the results of samples from previous phases. To address the adaptivity here, the algorithm can always draw the maximum number of samples required in Phase~3, i.e., consider $|\unfixed(\tilde{\bd})|$ as $8\log N$ and $\bM$ as $5\log N$ where the upper bounds are guaranteed by Item~(2.) of \Cref{lem:Phase-1-known-N}.

\end{enumerate}\end{flushleft}

\subsubsection{Completeness}

By Item~(1.) of \Cref{lem:Phase-1-known-N} and Item~(1.) of \Cref{lem:Phase-2-known-N}, Phases~0,1 and 2 never reject a unate function. By \Cref{theo:mono} and \Cref{thm:correctnessOfUnbiasedTest}, \hyperref[algo: BiasedTest]{BiasedTest} and \hyperref[algo: UnbiasedTest]{UnbiasedTest} in Phase~3 reject only when they find two edges that violate the unateness along some direction, which means they also never reject a unate function. Thererfore, \hyperref[algo:Known-N]{Unateness-Tester-Known-N} always accepts when the input function $f$ is a unate function.
   
\subsubsection{Soundness}

Assume that the input function $f$ is $\eps$-far from unate in relative distance. By Item~(3.) of \Cref{lem:Phase-1-known-N}, with probability at least $0.9$, either $f$ is already rejected in Phase~1, or we reach Phase~2 with $\tilde{\bd}$ that is consistent with $d^f$ (which is one of the pre-conditions of \Cref{theo:mono}). 
Now suppose we have not rejected before Phase~2.
Then by \Cref{lem:Phase-2-known-N}, with probability at least $0.9$, either $f$ is rejected in Phase~2, or we reach Phase~3 with $\bM$ and $\tilde{\bd}$ that satisfies $\reldist(f, f_{\bM,\tilde{\bd}})\le \eps/10$.
Now, condition on reaching Phase~3 with $\tilde{\bd}$ that is consistent with $d^f$ and $h=f_{\bM,\tilde{\bd}}$ that satisfies $\reldist(f, h)\leq \eps/10$.
We have by \Cref{lem:ftohtof} that either \Cref{eq:many-viols-on-unfixed-idxs_f} or \Cref{eq:many-viols-on-fixed-idxs_f_Md} holds. 
It then follows from \Cref{theo:mono} and \Cref{thm:correctnessOfUnbiasedTest} that either \hyperref[algo: BiasedTest]{BiasedTest} or \hyperref[algo: UnbiasedTest]{UnbiasedTest} rejects with probability at least~$0.99$.
By a union bound, the algorithm rejects with overall probability at least $0.79$.

\subsection{Proofs of \Cref{lem:Phase-1-known-N} and \Cref{lem:Phase-2-known-N}}

In this section, we prove the performance guarantees of Phase~1 and Phase~2 as summarized in \Cref{lem:Phase-1-known-N} and \Cref{lem:Phase-2-known-N}. We first prove \Cref{lem:Phase-1-known-N} through the following series of claims.

\begin{claim}[Item~(1.) of \Cref{lem:Phase-1-known-N}] \label{claim:Phase-1-1-known-N}
    If $f$ is unate, Phase~1 never rejects.
\end{claim}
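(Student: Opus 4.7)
The plan is to observe that Phase~1 has only one rejection condition, namely the check in Step~2 that requires $\absdist(x, \ba) \le 2 \log N$ for every $x \in \bS$. So it suffices to argue that when $f$ is unate this check is never triggered.

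First I would note that $\ba$ is drawn from $\SAMP(f)$ in Phase~0, so $\ba \in f^{-1}(1)$, and similarly every element $x \in \bS$ is drawn from $\SAMP(f)$ in Step~2, so $\bS \subseteq f^{-1}(1)$. Then I would invoke \Cref{lem:small-diameter}, which asserts that if $f$ is unate then $\absdist(z, z') \le 2 \log N$ for all $z, z' \in f^{-1}(1)$. Applying this with $z = \ba$ and $z' = x$ for each $x \in \bS$ immediately yields $\absdist(x, \ba) \le 2 \log N$ deterministically, so the rejection condition in Step~2 never fires.

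Since Step~3 only computes the vector $\tilde{\bd}$ from $\bS$ (with no rejection condition), there is no other place where Phase~1 can reject. There is no main obstacle here; the claim is an immediate consequence of \Cref{lem:small-diameter} together with the fact that both $\ba$ and the elements of $\bS$ come from $f^{-1}(1)$.
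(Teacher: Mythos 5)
Your proposal is correct and follows essentially the same route as the paper's proof: both observe that the only rejection condition in Phase~1 is the Hamming-distance check between $\ba$ and elements of $\bS$, all of which lie in $f^{-1}(1)$, and then invoke \Cref{lem:small-diameter} to conclude the check can never trigger when $f$ is unate. Your version is just a slightly more explicit unpacking of the paper's two-sentence argument.
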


\begin{proof}
    Phase~1 only rejects when it finds $\absdist(x,a)>2\log N$ where $a, x\in f^{-1}(1)$. By \Cref{lem:small-diameter}, this cannot happen when $f$ is unate.
\end{proof}

\begin{claim}[Item~(2.) of \Cref{lem:Phase-1-known-N}] \label{claim:Phase-1-3-known-N}
    If Phase~1 does not reject (i.e.~returns $\tilde{\bd}$), then
        $$\left|\unfixed(\tilde{\bd})\right| \leq 8 \log N;$$
        $$\absdist(\ba, \tilde{\bd})=\left|\{i\in \fixed(\tilde{\bd}) \mid \tilde{\bd}_i\neq \ba_i\}\right|\leq 3 \log N.$$
\end{claim}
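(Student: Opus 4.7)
The plan is a straightforward double-counting argument using the non-rejection condition of Phase~1 as the only non-trivial input. Since Phase~1 does not reject, every $x \in \bS$ satisfies $\Delta(x, \ba) \leq 2 \log N$, so
\[
\sum_{x \in \bS} \Delta(x, \ba) \;=\; \sum_{i \in [n]} \big|\{x \in \bS : x_i \ne \ba_i\}\big| \;\leq\; 2 |\bS| \log N.
\]
The entire proof amounts to lower-bounding this sum restricted to the two index sets of interest.

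First I would handle the unfixed coordinates. For any $i \in \unfixed(\tilde{\bd})$, by definition $\Pr_{\bz \in \bS}[\bz_i = 1] \in [0.25,\, 0.75]$, so regardless of whether $\ba_i = 0$ or $\ba_i = 1$, the number of samples $x \in \bS$ with $x_i \neq \ba_i$ is at least $0.25 |\bS|$. Restricting the above sum to $i \in \unfixed(\tilde{\bd})$ therefore gives
\[
0.25\, |\bS| \cdot |\unfixed(\tilde{\bd})| \;\leq\; \sum_{i \in \unfixed(\tilde{\bd})} \big|\{x \in \bS : x_i \ne \ba_i\}\big| \;\leq\; 2 |\bS| \log N,
\]
which rearranges to $|\unfixed(\tilde{\bd})| \leq 8 \log N$.

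Next I would handle the fixed coordinates that disagree with $\ba$. For any $i$ with $\tilde{\bd}_i \in \zo$ and $\tilde{\bd}_i \neq \ba_i$, the defining inequality for $\tilde{\bd}_i$ (either $\Pr_{\bz \in \bS}[\bz_i = 1] > 0.75$ when $\ba_i = 0,\, \tilde{\bd}_i = 1$, or $\Pr_{\bz \in \bS}[\bz_i = 1] < 0.25$ when $\ba_i = 1,\, \tilde{\bd}_i = 0$) forces more than $0.75|\bS|$ samples in $\bS$ to disagree with $\ba$ in coordinate $i$. Restricting the master inequality to this set of coordinates gives
\[
0.75\, |\bS| \cdot \big|\{i \in \fixed(\tilde{\bd}) : \tilde{\bd}_i \ne \ba_i\}\big| \;\leq\; 2 |\bS| \log N,
\]
so $\Delta(\ba, \tilde{\bd}) \leq \tfrac{8}{3} \log N \leq 3 \log N$, as claimed.

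Neither direction is subtle; the only thing to double-check is that the definition of $\tilde{\bd}$ in Phase~1 genuinely forces the stated fraction of samples to disagree with $\ba$ in each relevant coordinate, which I do by splitting on $\ba_i \in \zo$ as above. There is no real obstacle here --- the sole mild point is being careful that the bounds hold in both $\ba_i = 0$ and $\ba_i = 1$ cases, for both the unfixed and the $\tilde{\bd}_i \neq \ba_i$ coordinates.
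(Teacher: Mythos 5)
Your proof is correct and is essentially the same double-counting argument the paper uses: both start from the fact that non-rejection forces $\sum_{x\in\bS}\absdist(x,\ba)\le 2|\bS|\log N$ (equivalently, $\Ex_{\bx\sim\bS}[\absdist(\bx,\ba)]\le 2\log N$), then lower-bound the contribution of each coordinate in $\unfixed(\tilde{\bd})$ by $0.25|\bS|$ and each coordinate with $\tilde{\bd}_i\ne\ba_i$ by $0.75|\bS|$. The only difference is cosmetic — you phrase it with sums rather than probabilities and spell out the $\ba_i\in\{0,1\}$ case split, which the paper leaves implicit.
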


\begin{proof}
    Fix an arbitrary choice of $\bS\subseteq f^{-1}(1)$ such that $\Delta(\ba, x)\leq 2\log N$ for each $x\in \bS$. 
    We clearly have $\Ex_{\bx\sim \bS}[\absdist(\ba, \bx)]\leq 2\log N$. On the other hand, we have
    $$\Ex_{\bx\sim \bS}[\absdist(\bx, \ba)]=\sum_{i\in [n]}\Prx_{\bx\sim \bS}[\bx_i\neq \ba_i]\geq \sum_{i\in \unfixed(\tilde{\bd})}\Prx_{\bx\sim \bS}[\bx_i\neq \ba_i]\geq 0.25 \cdot \left|\unfixed(\tilde{\bd})\right|.$$
    Therefore, we must have $\left|\unfixed(\tilde{\bd})\right|\leq 8\log N$. Similarly, for $\absdist(\ba, \tilde{\bd})$, we have $$\Ex_{\bx\sim \bS}[\absdist(\bx, \ba)]\geq \sum_{i\in \fixed(\tilde{\bd}): \tilde{\bd}_i\neq \ba_i}\Prx_{\bx\sim \bS}[\bx_i\neq \ba_i]\geq 0.75 \absdist(\ba, \tilde{\bd}),$$
    and thus $\absdist(\ba, \tilde{\bd})\leq 3\log N$.
\end{proof}

\begin{claim}[Item~(3.) of \Cref{lem:Phase-1-known-N}] \label{claim:Phase-1-4-known-N}
    With probability at least $0.9$, Phase~1 either rejects or returns $\tilde{\bd}$ that is consistent with $d^f$.
\end{claim}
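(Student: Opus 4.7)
The plan is to bound the probability of the bad event $E\cap F$, where $E$ denotes ``Phase~1 does not reject'' and $F$ denotes ``$\tilde{\bd}$ is inconsistent with $d^f$''. Set $h := f_{2\log N, a}$. The key observation is that $E$ occurs iff every sample in $\bS$ lies in $h^{-1}(1)$, in which case the samples in $\bS$ are i.i.d.\ uniform from $h^{-1}(1)$. The approach is to split the analysis based on whether $\reldist(f, h) \leq 0.05$.

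If $\reldist(f,h) > 0.05$, then each sample independently lies outside $h^{-1}(1)$ with probability at least $0.05$, so $\Pr[E] \leq (0.95)^{\lceil 400\log N\rceil} = N^{-\Omega(1)}$, giving $\Pr[E \cap F] \leq \Pr[E] \leq 0.05$ for all $N$ above a small absolute constant.

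If instead $\reldist(f,h)\leq 0.05$, then \Cref{lem:close_trunc_biases_are_similar_to_fn} gives that every coordinate $i$ satisfies $|p^f_i - p^h_i| \leq 0.05$, where $p^g_i := \Prx_{\bz \sim g^{-1}(1)}[\bz_i = 1]$. I would then split the coordinates into those that are trivial in $h$ (i.e.\ $p^h_i \in \{0,1\}$) and those that are not. For a trivial-in-$h$ coordinate, conditioned on $E$ all samples in $\bS$ agree with $a$ on this coordinate, so $\tilde{\bd}_i$ is deterministically $a_i$; the $0.05$-closeness of biases then forces $d^f_i = a_i = \tilde{\bd}_i$ (e.g.\ $p^h_i = 1$ forces $p^f_i \geq 0.95$, and also $a_i = 1$ since $a \in h^{-1}(1)$). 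For a non-trivial-in-$h$ coordinate, \Cref{claim:trunc-fns-nontrivial-coord-bound} bounds the count of such coordinates by $2N\log N$; and for each such $i$, the event ``$\tilde{\bd}_i$ inconsistent with $d^f_i$'' conditioned on $E$ corresponds (using $|p^f_i - p^h_i|\leq 0.05$) to the empirical fraction in $\lceil 400\log N\rceil$ i.i.d.\ samples from $h^{-1}(1)$ deviating from $p^h_i$ by at least $0.1$. A Hoeffding bound then gives probability $N^{-c}$ for a large constant $c>2$, and a union bound over the $\leq 2N\log N$ such coordinates bounds the total bad probability in this case by $0.05$.

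Summing the two cases yields $\Pr[E \cap F] \leq 0.1$, as desired. The main obstacle, and the reason the naive union bound fails, is that $f$ itself can have as many as $n$ non-trivial coordinates, and $n$ may be much larger than $\log N$. The case split on $\reldist(f,h)$ is precisely what enables replacing $f$ by its truncation $h$, for which \Cref{claim:trunc-fns-nontrivial-coord-bound} supplies the polynomial-in-$N$ bound on non-trivial coordinates needed to make the union bound go through.
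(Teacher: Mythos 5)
Your proof is correct and follows essentially the same route as the paper's: split on whether $\reldist(f, f_{2\log N, a}) \leq 0.05$, use \Cref{lem:close_trunc_biases_are_similar_to_fn} to transfer biases from $f$ to the truncation, and then apply \Cref{claim:trunc-fns-nontrivial-coord-bound} together with a Chernoff/Hoeffding bound and a union bound over the non-trivial coordinates. The one cosmetic difference is that the paper treats $N \geq n$ as a separate first case (with a direct Chernoff bound over all $n$ coordinates), whereas you correctly observe that the $\reldist \leq 0.05$ branch already subsumes it, since the union bound is over at most $\min(n, 2N\log N)$ coordinates in either regime.
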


\begin{proof}
    We first consider the case $N\geq n$ where the number of samples satisfies $|\bS|\geq 400\log n$. If $\tilde{\bd}$ is not consistent with $d^f$, then there exists some $i\in [n]$ such that $\tilde{\bd}_i\neq *$ but $d^f_i\in \{*, 1-\tilde{\bd}_i\}$, and thus we must have
    \begin{equation}
        \left|\Prx_{\bz \sim \bS}[\bz_i = 1]-\Prx_{\bz \sim f^{-1}(1)}[\bz_i = 1]\right|>0.15 \label{eq:useme}
    \end{equation}
    where $0.15$ is the difference between the thresholds for setting $*$'s in $\tilde{\bd}$ (see line~3 of Phase~1) and $d^f$ (see \Cref{def:dirVector}). 
    By a standard Chernoff bound, for each coordinate $i\in [n]$, \Cref{eq:useme} happens with probability at most $2e^{-2|S|(0.15)^2}\leq 0.1n^{-1}$. 
    Then, by a union bound over all $i \in [n]$, the probability that $\tilde{\bd}$ is not consistent with $d^f$ is at most $0.1$.

    The second case is that $N < n$ and $\rd(f, f_{2\log N, a}) > 0.05$.
    In this case, each random sample $\bx\sim \SAMP(f)$ satisfies $\absdist(a, \bx)>2\log N$ with probability at least $0.05$, and thus the probability that Phase~1 rejects is at least $1-(0.95)^{|\bS|}\geq 1-(0.95)^{400}\geq 0.9.$

    The third and final case is that $N < n$ and $\rd(f, f_{2 \log N, a}) < 0.05$.
    Assume we do not reject.
    Then, $\bS\subseteq f^{-1}_{2\log N, a}(1)$.
    Consider the set of non-trivial coordinates (\Cref{defn:nontrivial-coordinates}) for which $\Prx_{\bz \sim f^{-1}_{2\log N, a}(1)}[\bz_i = \ba_i] \neq 1$. 
    We have by \Cref{claim:trunc-fns-nontrivial-coord-bound} that there are at most $2N\log N$ non-trivial coordinates. 
    If $\tilde{\bd}$ is not consistent with $d^f$, then by \Cref{lem:loss-from-truncation} and the triangle inequality applied to \Cref{eq:useme}, there exists some $i\in[n]$ such that $$\left|\Prx_{\bz \sim \bS}[\bz_i = 1]-\Prx_{\bz \sim f^{-1}_{2\log N, a}(1)}[\bz_i = 1]\right|>0.1.$$ 
    For each trivial coordinate $i$ where $\Prx_{\bz \sim f^{-1}_{2\log N, a}(1)}[\bz_i = \ba_i]=1$, the above event can never happen since the two probabilities are always the same; for each non-trivial coordinate, it happens with probaility at most $2e^{-2|S|(0.1)^2}\leq 0.01N^{-2}$ by a standard Chernoff bound. 
    By a union bound over the non-trivial coordinates, the probability that $\tilde{\bd}$ is not consistent with $d^f$ is at most $0.01N^{-2}\cdot 2N\log N\leq 0.1$.
\end{proof}

Putting everything together, \Cref{lem:Phase-1-known-N} follows directly. Now we prove \Cref{lem:Phase-2-known-N}, whose statement we first recall:

\valueofM*

\begin{proof}
    Phase~2 only rejects when it finds $\absdist(a, x)>2\log N$ where $a, x\in f^{-1}(1)$. By \Cref{lem:small-diameter}, this cannot happen when $f$ is unate, so we have item (1).

    For item (2), suppose first that $\rd(f, f_{2 \log N, \ba}) > \eps/10$. 
    Then, each random sample $\bx \sim \SAMP(f)$ satisfies $\Delta(\ba, \bx) > 2 \log N$ with probability at least $\eps/10$, and hence the probability that Phase~2 rejects is at least $1 - (1-\eps/10)^{|\bT|} \geq 1 - \exp(\frac{-\eps |\bT|}{10}) \geq 1 - \exp(-3) \geq 0.9$. So suppose instead that $\rd(f, f_{2 \log N, \ba}) \leq \eps/10$.
    Recalling that $\bM := 2 \log N + \Delta(\ba, \tilde{\bd})$, since  $f^{-1}_{2 \log N, \ba}(1) \subseteq f^{-1}_{\bM, \tilde{\bd}}(1) \subseteq f^{-1}(1)$, we have that $\rd(f, f_{\bM, \tilde{\bd}})\leq \rd(f, f_{2 \log N, \ba}) \leq \eps/10$.
\end{proof}

\newcommand{\CheckSample}{\hyperref[algo: CheckSamples]{CheckSamples}}
\newcommand{\Preprocessing}
{\hyperref[alg: Preprocessing]{Preprocessing}}

\section{Unateness testing in relative distance when $N$ is unknown} \label{sec:unknown-N}

Our main algorithm, \hyperref[algo:unknown]{Unateness-Tester-Unknown-$N$}, for the case when $N=|f^{-1}(1)|$ is unknown~to the algorithm,
  is described in \Cref{algo:unknown}.
The goal of this section is to prove the following theorem, which is a more detailed statement of \Cref{thm:secondpositive}:

\begin{theorem} \label{thm:unknown-N}
    \hyperref[algo:unknown]{Unateness-Tester-Unknown-$N$} takes as input $\MQ(f)$ and $\SAMP(f)$ of some function $f:\{0,1\}^n\rightarrow \{0,1\}$, a distance parameter $\eps$ and an error parameter $\delta$.
    It makes no more than
    $$
\tilde{O}\left(\frac{\log N}{\eps}+ \log(1/\delta)\right)
    $$ queries with probability at least $1-\delta$, and it  satisfies the following conditions:
    It always accepts when $f$ is unate;
    when $f$ is $\eps$-far from unate in relative distance, it rejects with probability at least $2/3$.
\end{theorem}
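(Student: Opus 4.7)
\medskip

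\noindent \textbf{Proof proposal.} The plan is to mirror the structure of the proof of \Cref{thm:known-N}, establishing per-phase lemmas for each of Phases~0--3 of \hyperref[algo:unknown]{Unateness-Tester-Unknown-$N$} and then combining them via union bounds. For Phase~0, I would prove a lemma stating that the \CheckSample-based preprocessing makes $O(\log(1/\delta))$ queries, never rejects when $f$ is unate (using \Cref{lem:small-diameter} together with the guarantee in \Cref{lem:CheckSamplesNoRej} item~(2)), and with probability at least $1-\delta/10$ either rejects or guarantees that the base point $\ba$ is such that $f$ is close (in $\rd$) to its $(2N,\ba)$-truncation. For Phase~1, I would analyze \IterativeBias\ and show that with probability $\ge 1-\delta/10$: (i) the iterative doubling terminates after using $\tilde{O}(\log N + \log(1/\delta))$ samples, (ii) the returned vector $\tilde{\bd}$ is consistent with $d^f$, and (iii) $|\unfixed(\tilde{\bd})| = O(\log N)$ and $\absdist(\ba,\tilde{\bd}) = O(\log N)$. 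For Phase~2, I would show that \Preprocessing\ returns $\bM = O(\log N)$ and, with probability at least $1-\delta/10$, either rejects or guarantees $\rd(f, f_{\bM,\tilde{\bd}}) \leq \eps/10$. Phase~3 is then handled exactly as in the known-$N$ case.

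\medskip

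Given these per-phase lemmas, the proof of \Cref{thm:unknown-N} itself follows routinely. The query complexity is the sum of the phase bounds: $O(\log(1/\delta))$ for Phase~0, $\tilde{O}(\log N + \log(1/\delta))$ for Phase~1, $O((\log N)/\eps)$ for Phase~2, and, using the Phase~1 guarantees that $|\unfixed(\tilde{\bd})| = O(\log N)$ and $\bM = O(\log N)$ together with \Cref{theo:mono} and \Cref{thm:correctnessOfUnbiasedTest}, $\tilde{O}((\log N)/\eps)$ for Phase~3. Completeness follows because each of \CheckSample, \IterativeBias, \Preprocessing, \BiasedTest, and \UnbiasedTest\ rejects only upon producing incontrovertible evidence of non-unateness (the role of the \hyperref[algo: ConfirmDirection]{ConfirmDirection} subroutine is essential here, since any rejection on a fixed coordinate must be backed up by an actual violating edge pair, not merely by statistical evidence of bias). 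Soundness is proved by conditioning on all the good events from Phases~0--2, which hold simultaneously with probability $\ge 1-3\delta/10$: then $\tilde{\bd}$ is consistent with $d^f$ and $\rd(f,f_{\bM,\tilde{\bd}}) \leq \eps/10$, so \Cref{lem:ftohtof} applies and at least one of \Cref{eq:many-viols-on-fixed-idxs_f_Md} or \Cref{eq:many-viols-on-unfixed-idxs_f} holds; \Cref{theo:mono} and \Cref{thm:correctnessOfUnbiasedTest} then imply that the corresponding Phase~3 subroutine rejects with probability $\ge 0.99$. A union bound yields overall rejection probability $\ge 2/3$ (with room to spare for setting $\delta$ appropriately).

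\medskip

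The main obstacle, and the part that requires the most care, is the analysis of Phase~1. Unlike the known-$N$ setting, where drawing $\Theta(\log N)$ samples and thresholding at $0.25$/$0.75$ immediately yields a correct bias vector, the unknown-$N$ setting must grow the sample size adaptively until the empirical biases ``converge,'' while simultaneously ensuring that (a) we do not overspend on samples for very sparse $f$, (b) we correctly handle the coordinates that are \emph{trivial} for the relevant truncation (so that only the $\poly(N)$ non-trivial coordinates contribute to the Chernoff union bound, as in the third case of the proof of \Cref{claim:Phase-1-4-known-N}), and (c) the resulting $\tilde{\bd}$ has only $O(\log N)$ unfixed coordinates and is close in Hamming distance to $\ba$. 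The key step will be to define a stopping rule for \IterativeBias\ whose halting time, analyzed against the truncation $f_{O(N),\ba}$ guaranteed by Phase~0, is $\tilde{O}(\log N)$ with high probability, and to prove that the empirical thresholds at the halting time still give a vector consistent with $d^f$ (via \Cref{lem:loss-from-truncation} and a Chernoff bound restricted to non-trivial coordinates of the truncation, whose number is bounded by \Cref{claim:trunc-fns-nontrivial-coord-bound}). Phase~2 \Preprocessing\ then plays an analogous role to Phase~2 of the known-$N$ algorithm, but crucially operates with respect to $\tilde{\bd}$ rather than $\ba$, which is what allows the final radius $\bM$ to be $O(\log N)$ instead of $O(N)$.
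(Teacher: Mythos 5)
Your overall architecture matches the paper's: establish per-phase lemmas for \hyperref[algo: CheckSamples]{CheckSamples}, \IterativeBias, and \Preprocessing, then combine them via union bounds exactly as in the known-$N$ case, invoking \Cref{lem:ftohtof}, \Cref{theo:mono}, and \Cref{thm:correctnessOfUnbiasedTest} for Phase~3. You also correctly identify the heart of the matter: the iterative doubling stopping rule in \IterativeBias, analyzed against the truncation $f_{O(N),\ba}$ using \Cref{claim:trunc-fns-nontrivial-coord-bound} and \Cref{lem:loss-from-truncation} to restrict the Chernoff union bound to the $\poly(N)$ nontrivial coordinates. That is indeed where the real work lies, and you describe the right ideas for it.

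There is, however, one genuine technical slip. You bundle the Phase~2 guarantee $\reldist(f, f_{\bM,\tilde{\bd}}) \leq \eps/10$ into the ``probability $\geq 1-\delta/10$'' envelope along with the bound $\bM = O(\log N)$. The paper deliberately keeps these two guarantees at \emph{different} failure probabilities: \Cref{lem:preprocessing} item~(2) gives the $\reldist$ bound with constant failure probability $0.1$ (which is all the $2/3$ soundness target needs), while only item~(3), the bound $\bM \leq 2\log N$, is pushed to failure probability $\delta/6$ because it feeds into the query-complexity-with-probability-$1-\delta$ statement. Making the $\reldist$ guarantee hold with failure probability $O(\delta)$ forces \Preprocessing\ to draw $\Omega(\log(1/\delta)/\eps)$ samples in its first step rather than $O(1/\eps)$, and $\log(1/\delta)/\eps$ is \emph{not} bounded by $\tilde{O}(\log N/\eps + \log(1/\delta))$ in general (e.g., take $\eps = (\log N)^{-2}$ and $\log(1/\delta) = (\log N)^3$). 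The same distinction appears in \Cref{lem:maintechnical}: item~(2), that \IterativeBias\ halts-and-accepts with probability at most $0.1$, is a constant soundness guarantee, whereas items~(3)--(5) are $\delta$-dependent query-complexity guarantees. To make your proof go through within the stated budget you must preserve this split: constant failure probabilities for the events that only the soundness argument needs, $O(\delta)$ failure for the events that the query-complexity bound needs.

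A smaller note: your Phase~1 claim $\absdist(\ba,\tilde{\bd}) = O(\log N)$ is superfluous in the unknown-$N$ setting. Unlike the known-$N$ algorithm (where $\bM = 2\log N + \absdist(\ba,\tilde{\bd})$), here \Preprocessing\ sets $\bM$ directly from samples relative to $\tilde{\bd}$ and enforces $\bM \leq 2\log N$ internally via its own rejection checks (\Cref{lem:preprocessing} item~(3), conditioned only on $\tilde{\bd}$ being consistent with $d^f$). You correctly observe later that \Preprocessing\ ``operates with respect to $\tilde{\bd}$ rather than $\ba$,'' so this extra claim can simply be dropped from your Phase~1 lemma. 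Also, Phase~0 completeness follows directly from \Cref{lem:CheckSamplesNoRej} item~(2) (the subroutine only ever rejects on a genuine unateness violation in direction $\bi$) without needing \Cref{lem:small-diameter} at that point.
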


As can be seen in \Cref{algo:unknown}, we divide \hyperref[algo:unknown]{Unateness-Tester-Unknown-$N$}
  into four phases.
As it will become clear soon in \Cref{sec:unknown-overview} (where we give a more detailed overview
  of the four phases), 
  performance guarantees achieved by Phases 1 and 2 of
  \hyperref[algo:unknown]{Unateness-Tester-Unknown-$N$} are similar to those of \hyperref[algo:Known-N]{Unateness-Tester-Known-$N$}:
(1) Phase 1 aims to find a bias vector $\tilde{\bd}$ that is consistent with $d^f$ and 
  has very few unfixed coordinates; 
(2) Phase 2 aims to return an integer $\bM$ such that the truncated function
  $f_{\bM,\tilde{\bd}}$ is close to $f$ and at the same time $\bM$ is bounded by~$O(\log N)$. 
The new challenge for both of these tasks is that we need to make sure that the number of 
  queries made by the algorithm is logarithmic in $N$ with high probability, even though $N$ is unknown.
To this end, (1) \hyperref[algo:unknown]{Unateness-Tester-Unknown-$N$} has a more substantial 
  Phase 0 that performs a preliminary check on $f$ using a new subroutine called 
  \hyperref[algo: CheckSamples]{CheckSamples}; and (2) Phases 1 and 2 are implemented by
  two more complex subroutines called \hyperref[algo: IterativeBias]{IterativeBias} and 
  \hyperref[alg: Preprocessing]{Preprocessing} with more delicate performance guarantees
  (see \Cref{sec:unknown-overview}).
  
Once Phases 0, 1  and 2 are in place,
the soundness of \hyperref[algo:unknown]{Unateness-Tester-Unknown-$N$} 
follows again from the fact that it rejects only when a violation to unateness is found.
The completeness follows by using arguments similar to those in the analysis of \hyperref[algo:Known-N]{Unateness-Tester-Known-$N$}: If a function $f$ that is far from unate in relative distances passes both Phase 1 and 2
  with $\tilde{\bd}$ and $\bM$ that satisfy conditions sketched above,
  then either \hyperref[algo: BiasedTest]{BiasedTest} or 
  \hyperref[algo: UnbiasedTest]{UnbiasedTest} in Phase 3 rejects with high probability.

\begin{algorithm}[t!]
\caption{Unateness-Tester-Unknown-$N$}\label{algo:unknown}
 \textbf{Input: }$\MQ(f)$ and $\SAMP(f)$ of some function $f$, a distance parameter $\epsilon$ and an error parameter $\delta$. \\
 \textbf{Output: } Either accept or reject.\\
 \begin{tikzpicture}
\draw [thick,dash dot] (0,1) -- (16.5,1);
\end{tikzpicture}
\begin{algorithmic}[1]
\Algphase{Phase 0:}
\State Draw $\ba \sim \SAMP(f)$.
\State Draw a set $\bS$ of $100 \lceil \log(1/\delta) \rceil$ points from $\SAMP(f)$.
\State Run \hyperref[algo: CheckSamples]{CheckSamples}($\MQ(f),\ba, \bS, \delta/12)$; halt and reject if it rejects; continue if it accepts. 
 \Algphase{Phase 1:}
\State Let $\tilde{\bd}=\text{\hyperref[algo: IterativeBias]{IterativeBias}}(\MQ(f), \SAMP(f), \ba, \delta)$. 
 \Algphase{Phase 2:}

\State Let $\bM=\text{\hyperref[alg: Preprocessing]{Preprocessing}}(\MQ(f), \SAMP(f), \tilde{\bd}, \epsilon, \delta)$. 
\Algphase{Phase 3:}
\State Run $\text{\hyperref[algo: BiasedTest]{BiasedTest}}(\MQ(f), \SAMP(f) ,  \bM ,  \epsilon/64 ,  \tilde{\bd} )$. 
\State Run $\text{\hyperref[algo: UnbiasedTest]{UnbiasedTest}}(\MQ(f ), \SAMP(f), \epsilon/64,  \tilde{\bd})$.
\State Accept $f$. 
\end{algorithmic}
\end{algorithm}

\subsection{High-level overview}\label{sec:unknown-overview}

The algorithm \hyperref[algo:unknown]{Unateness-Tester-Unknown-$N$}
consists of four phases:
\begin{flushleft}\begin{enumerate}
    \item[]\textbf{Phase 0:} Phase 0 starts by drawing a point $\ba \sim f^{-1}(1)$. 
    It then draws a set $\bS$ of random points from $f^{-1}(1)$ and runs a subroutine \hyperref[algo: CheckSamples]{CheckSamples} (see \Cref{sec:analysis-CheckSamples} for its description and analysis).
    The goal of Phase 0 is to reject with high probability when the truncated function $f_{2N,\ba}$ is far from $f$.
    (Note that this can happen only when 
    $f$ is not unate, as by \Cref{lem:small-diameter} and using $2\log N\le 2N$, we have that $f$ is identical to $f_{2N,a}$ for every $a\in f^{-1}(1)$ when $f$ is unate.)
    Thinking about the function $f_{2N,\ba}$ and in particular, being able to assume that $f_{2N,\ba}$ is close to $f$ will be useful during Phase~1. We summarize performance guarantees of Phase~0 in the following lemma:
\end{enumerate}\end{flushleft}
\begin{restatable}{lemma}{phaseO}\label{lem:phase0}
 Phase~0 of \hyperref[algo:unknown]{Unateness-Tester-Unknown-$N$} always makes $O(\log(1/\delta))$ queries and satisfies the following performance guarantees:
\begin{flushleft}\begin{enumerate}
\item If $f$ is unate, Phase~0 never rejects; and
\item For any function $f$, the probability of \hyperref[algo:unknown]{Unateness-Tester-Unknown-$N$} reaching Phase~1 with a string $\ba$ such that $\reldist(f,f_{2N,\ba}) \geq 0.05$ is at most $\delta/6$.      
\end{enumerate}\end{flushleft}
\end{restatable}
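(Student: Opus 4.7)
The plan is to establish the three claims of the lemma in turn, with the query complexity and Item~(1) being essentially immediate and the main content lying in Item~(2).

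For the query complexity bound, I would simply sum queries across the three lines of Phase~0: line~1 uses one sample, line~2 uses $|\bS|=O(\log(1/\delta))$ samples, and by \Cref{lem:CheckSamplesNoRej} the call to \hyperref[algo: CheckSamples]{CheckSamples} with error parameter $\delta/12$ makes $O(\log(12/\delta))=O(\log(1/\delta))$ queries. This gives the desired $O(\log(1/\delta))$ total. For Item~(1), if $f$ is unate then $\ba\in f^{-1}(1)$ and $\bS\subseteq f^{-1}(1)$, so by \Cref{lem:small-diameter} every point of $\bS$ lies within Hamming distance $2\log N\le 2N$ of $\ba$; Item~(2) of \Cref{lem:CheckSamplesNoRej} then implies that \hyperref[algo: CheckSamples]{CheckSamples} does not reject, and therefore neither does Phase~0.

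For Item~(2), I would condition on $\ba$ and bound the conditional probability of not rejecting uniformly over every $\ba$ satisfying $\reldist(f, f_{2N,\ba})\ge 0.05$. Fix such an $\ba$. Since $f_{2N,\ba}^{-1}(1)$ consists exactly of the points of $f^{-1}(1)$ within Hamming distance $2N$ of $\ba$, the relative-distance hypothesis says that a uniform random element of $f^{-1}(1)$ has probability at least $0.05$ of lying at Hamming distance more than $2N$ from $\ba$. By independence of the $|\bS|=100\lceil \log(1/\delta)\rceil$ draws, the probability that no point in $\bS$ is at distance greater than $2N$ from $\ba$ is at most $(0.95)^{|\bS|}\le \delta/12$ (for the chosen constant, as argued in the next paragraph). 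Conditioning on $\bS$ containing at least one such far point, Item~(1) of \Cref{lem:CheckSamplesNoRej} with parameter $\delta'=\delta/12$ ensures that \hyperref[algo: CheckSamples]{CheckSamples} rejects with probability at least $1-\delta/12$. A union bound therefore gives that, conditioned on such an $\ba$, Phase~0 fails to reject with probability at most $\delta/12+\delta/12=\delta/6$. Averaging over $\ba$ yields the stated bound.

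The only even mildly technical point is verifying the inequality $(0.95)^{100\lceil \log(1/\delta)\rceil}\le \delta/12$, and this is the step I expect to be the main (minor) obstacle. I would split into two cases depending on $\delta$: for $\delta$ below some absolute constant $\delta_0$, $\lceil\log(1/\delta)\rceil\ge \log(1/\delta)$ makes the left-hand side polynomially small in $\delta$ with exponent $>1$, so the bound is easy; for $\delta\ge \delta_0$, it suffices to note $\lceil\log(1/\delta)\rceil\ge 1$, so $(0.95)^{100}\approx 0.006$ is already at most $\delta/12\ge \delta_0/12$ for any sufficiently small $\delta_0$. Combining the cases shows the constant $100$ in the definition of $|\bS|$ is adequate.
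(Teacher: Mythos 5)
Your proof is correct and follows essentially the same route as the paper's: bound queries line by line using \Cref{lem:CheckSamplesNoRej}; for Item~(1) invoke the unateness guarantee of \Cref{lem:CheckSamplesNoRej}; and for Item~(2) condition on a bad $\ba$, bound the probability that $\bS$ misses all far points by $(0.95)^{100\lceil\log(1/\delta)\rceil}\le\delta/12$, and union bound with the failure probability of \hyperref[algo: CheckSamples]{CheckSamples}. The paper leaves the arithmetic inequality as an assertion while you verify it, but this is the only (cosmetic) difference.
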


\begin{flushleft}\begin{enumerate}
    \item[] \textbf{Phase 1:} The goal of Phase 1 is to compute a vector $\tilde{\bd}\in \{0,1,*\}^n$ that satisfies (1) $\tilde{\bd}$ does not have too many (i.e., $O(\log N)$) unfixed coordinates and (2) $\tilde{\bd}$ is consistent with the bias vector $d^f$ of $f$.
This is achieved by the new subroutine called 
  \IterativeBias.
We summarize performance guarantees of \IterativeBias\  in the following main technical lemma. 
 The design and analysis of \IterativeBias\ is the most challenging part of the paper, where we need to make sure that the number of queries used is at most $\tilde{O}(\log(N/\delta))$ with high probability, even though $N$ is not given to the algorithm. 
\end{enumerate}\end{flushleft}

\begin{lemma}\label{lem:maintechnical}
Given any function $f$, $a\in f^{-1}(1)$ and $\delta>0$,
$\text{\IterativeBias}(\MQ(f),\SAMP(f), a, \delta)$ either accepts, rejects or returns a   $\tilde{\bd}\in \{0,1,*\}^n$, and satisfies the following performance guarantees:
\begin{flushleft}\begin{enumerate}
    \item If $f$ is unate, \IterativeBias\ never rejects;
    \item  For any function $f$, \IterativeBias\ accepts  with probability at most $0.1$;
    \item For any function $f$ and $a \in f^{-1}(1)$
    such that $\rd(f,f_{2N,a}) \leq 0.05$,
    the probability of \IterativeBias\  returning a $\tilde{\bd}\in \{0,1,*\}^n$ that is not consistent with $d^f$ is at most $\delta/6$;

    \item For any function $f$ and $a \in f^{-1}(1)$ such that  $\rd(f,f_{2N,a}) \leq 0.05$, the probability of \IterativeBias\ 
    returning a $\tilde{\bd}\in \{0,1,*\}^n$ with
    \begin{equation}\label{eq:smallstar}
    \big|\unfixed(\tilde{\bd}) \big| > 100 \log N
    \end{equation}
    is at most $\delta/6$.
   \item Query complexity: For any function $f$,  with probability at least $1-\delta/6$, \IterativeBias\ makes at most $\tilde{O}(\log (N/\delta))$ many queries to $\MQ(f)$ and $\SAMP(f)$.
  
\end{enumerate}\end{flushleft}
\end{lemma}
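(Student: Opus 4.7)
The plan is to design \IterativeBias\ as a doubling-schedule procedure: in round $t = 1, 2, \ldots$, draw a sample set $\bS_t$ of size $c \cdot 2^t \log(1/\delta)$ from $\SAMP(f)$, form empirical bias estimates $\hat{p}_i^{(t)} := \Prx_{\bz \sim \bS_t}[\bz_i = 1]$ for each $i \in [n]$, and construct a candidate vector $\tilde{\bd}^{(t)} \in \zos^n$ by thresholding at $0.25/0.75$ as in Phase~1 of \KnownN. The procedure halts and returns $\tilde{\bd}^{(t)}$ once two successive rounds agree on the partition into fixed and unfixed coordinates and on the values of the fixed bits, using a slightly looser ``confirmation'' threshold (say $0.2/0.8$) in the previous round to suppress Chernoff oscillations. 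Two safeguards close the loop: reject if queries during the rounds ever witness both an edge in $\Edge_i^0(f)$ and an edge in $\Edge_i^1(f)$ for some $i$ (which never fires on unate $f$ by \Cref{cor:unate_Some_Empty_Edge}, giving Item~1); and accept as a fallback when a worst-case round timeout $t^\star$ is reached without stabilization.

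For Items~3 and~4, assume the hypothesis $\rd(f, f_{2N,a}) \le 0.05$. By \Cref{lem:close_trunc_biases_are_similar_to_fn}, the biases of $f$ and $f_{2N,a}$ differ by at most $0.05$ in every coordinate, and by \Cref{claim:trunc-fns-nontrivial-coord-bound}, $f_{2N,a}$ has at most $2N \cdot N = 2N^2$ nontrivial coordinates; on any trivial coordinate $i$, the empirical estimate equals the truth $p_i^\star := \Prx_{\bz \sim f_{2N,a}^{-1}(1)}[\bz_i = 1] \in \{0,1\}$ provided every sample in $\bS_t$ lies inside the truncation, which fails only with probability $\le 0.05$ per sample. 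A Chernoff bound applied over the $\mathrm{poly}(N)$ nontrivial coordinates at sample size $\Theta(\log(N/\delta))$ shows that every $\hat{p}_i^{(t)}$ is within $0.05$ of $p_i^\star$, which by the $0.15$-gap between the $0.25/0.75$ thresholds defining $\tilde{\bd}$ and the $0.4/0.6$ thresholds defining $d^f$ (\Cref{def:dirVector}) yields consistency with $d^f$ except with probability $\delta/6$, giving Item~3. For Item~4, the argument of \Cref{claim:Phase-1-3-known-N} adapts: once $\tilde{\bd}$ is consistent with $d^f$, every unfixed coordinate contributes at least $\Omega(1)$ probability mass to $\Prx_{\bx \sim \bS_t}[\bx_i \ne a_i]$, whose expectation is $O(\log N)$ because typical samples lie within Hamming distance $2 \log N$ of the base point after the preceding \CheckSample\ filter; this forces $|\unfixed(\tilde{\bd})| \le 100 \log N$.

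For the query complexity bound (Item~5), the doubling schedule reaches size $\Theta(\log(N/\delta))$ after $O(\log\log(N/\delta))$ rounds; once there, the candidate $\tilde{\bd}^{(t)}$ is the ``true'' thresholded bias vector except with probability $\delta/6$, and the halting check fires within a constant number of further rounds. Thus the total query count is $\tilde{O}(\log(N/\delta))$ with probability $\ge 1 - \delta/6$. Item~2 handles the accept branch: acceptance fires only if the candidate fails to stabilize for many rounds past the accuracy threshold, and a union bound over round-to-round Chernoff fluctuations on the $\mathrm{poly}(N)$ nontrivial coordinates keeps this below $0.1$.

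The main obstacle will be calibrating the stabilization rule \emph{without knowing $N$}: the rule must trigger within $O(\log N)$ rounds with high probability, avoid premature triggering caused by Chernoff noise in the $\hat{p}_i^{(t)}$'s, and keep the timeout-induced acceptance probability below $0.1$ on all functions (including those violating the truncation hypothesis, for which Phase~0 is only guaranteed to reject with probability $1 - \delta/6$). The buffered two-round comparison is the device that enables all three properties simultaneously; tuning its parameters so that the trivial-coordinate always-stable behavior meshes with the nontrivial-coordinate Chernoff-stable behavior is the delicate part of the argument.
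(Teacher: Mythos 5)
Your proposal correctly identifies several of the paper's ingredients for Items 3 and 5 (the doubling schedule, the restriction to the radius-$2N$ truncation via \hyperref[algo: CheckSamples]{CheckSamples}, the distinction between trivial and nontrivial coordinates, and the Chernoff plus union bound over the $\mathrm{poly}(N)$ nontrivial coordinates). Your stopping rule (comparing thresholded vectors across successive rounds with buffered thresholds) differs cosmetically from the paper's (drawing two independent fresh sample sets per round and checking $|\bp_i - \bp_i'| \le 1/20$ coordinatewise), but this is a design choice rather than a conceptual difference, and either can likely be made to work.

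There is, however, a genuine gap in your argument for Item~4. You claim the argument of \Cref{claim:Phase-1-3-known-N} adapts because ``typical samples lie within Hamming distance $2\log N$ of the base point after the preceding \hyperref[algo: CheckSamples]{CheckSamples} filter.'' This is false: \Cref{lem:CheckSamplesNoRej} shows that \hyperref[algo: CheckSamples]{CheckSamples} can only detect samples at distance greater than $2N$ from $a$, not $2\log N$ --- its analysis compares the number of $i \in a \Delta z$ with $f(a^{(i)}) = f(z^{(i)}) = 0$ (at least $|a\Delta z| - N$ of them) to $|a \Delta z|$, and this ratio is bounded away from zero only when $|a\Delta z| > 2N$. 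Detecting distance $2\log N$ cheaply without knowing $N$ is precisely what is impossible here. Consequently the expectation argument from \Cref{claim:Phase-1-3-known-N} gives you only $|\unfixed(\tilde{\bd})| = O(N)$, which is exponentially worse than the required $O(\log N)$ and would blow up the query budget of \hyperref[algo: UnbiasedTest]{UnbiasedTest} in Phase~3.

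The paper solves this with a fundamentally different mechanism: after forming $\tilde{\bd}$, \hyperref[algo: IterativeBias]{IterativeBias} runs an explicit \emph{rejection} loop (Phase~2) that repeatedly draws $\bi \sim \unfixed(\tilde{\bd})$ and two random satisfying points $\bx, \by$, rejecting if $\bx_{\bi} \ne \by_{\bi}$ and both $f(\bx^{(\bi)}) = f(\by^{(\bi)}) = 0$. The analysis (\Cref{lem:CheckStarsWithAssumption} via \Cref{lem:lot_of_bad_in_B}) hinges on the edge-isoperimetric inequality of Harper (\Cref{lem:points-edges-hypercube}): if there are more than $100\log N$ unfixed coordinates, then for a random one the set $F_{\bi}$ of edges both of whose endpoints lie in $f^{-1}(1)$ is small (at most $N/20$) with probability $1/2$, and then a random pair $\bx,\by \sim f^{-1}(1)$ produces a violation with constant probability. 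This converts ``too many stars'' into a certifiable unateness violation that the algorithm can reject on, rather than something that must be ruled out statistically. Your proposal is missing this idea, and without it Item~4 does not follow; a related gap appears in your Item~2 sketch, which needs the timeout to be tied to $\log n$ (not $N$) so that a union bound over all $n$ coordinates controls the acceptance probability for arbitrary $f$.
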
  

\begin{flushleft}\begin{enumerate}

\item[] \textbf{Phase 2:} Assume that Phase~1 does not reject and returns a vector $\tilde{\bd}\in \{0,1,*\}^n$ that is consistent with $f$ and has few unfixed coordinates.  
Phase 2 runs a new subroutine, \hyperref[alg: Preprocessing]{Preprocessing}, to obtain a nonnegative integer $\bM$ such that (1) $\bM=O(\log N)$ (but $\bM$ could be asymptotically smaller than $\log N$) and (2) the $(\bM,\tilde{\bd})$-truncated function $\smash{h=f_{\bM,\tilde{\bd}}}$ obtained from $f$  is close to $f$.
We summarize performance guarantees of \hyperref[alg: Preprocessing]{Preprocessing} in the lemma below:
\end{enumerate}\end{flushleft}
\begin{lemma}\label{lem:preprocessing}
Given any function $f$,  $\tilde{d}\in \{0,1,*\}^n$ and parameters $\eps$ and $\delta$, \hyperref[alg: Preprocessing]{Preprocessing}$(\MQ(f),$ $\SAMP(f),\tilde{d},\eps,\delta)$  either rejects or returns a nonnegative integer $\bM$, and satisfies the following~performance guarantees:
\begin{flushleft}\begin{enumerate}
\item If $f$ is unate \hyperref[alg: Preprocessing]{Preprocessing} never rejects. 
\item For any function $f$, the probability of \Preprocessing\ returning an $\bM$ with $\reldist(f,\smash{f_{\bM,\tilde{d}}})>\eps/10$ is at most $0.1$. 
\item For any function $f$ and $\tilde{d}$ such that  $\tilde{d}$ is consistent with $d^f$, the probability of \hyperref[alg: Preprocessing]{Preprocessing}  returning an  $\bM> 2\log N$ is at most $\delta/6$. 
\item  For any function $f$ and 
  $\tilde{d}$ such that $\tilde{d}$ is consistent with $d^f$, with probability at least $1-\delta/6$,  \hyperref[alg: Preprocessing]{Preprocessing}\ makes at most  $$O\left((1/\eps)+ \log(N)+\log(1/\delta)\right)$$ many queries to $\MQ(f)$ and $\SAMP(f)$. 
\end{enumerate}\end{flushleft}
\end{lemma}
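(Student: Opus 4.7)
I plan to design \Preprocessing{} as a doubling-radius procedure with sampling-based termination. Starting from $\bM = 1$, the subroutine repeatedly (a) draws a modest batch of samples from $\SAMP(f)$, (b) runs a \hyperref[algo: CheckSamples]{CheckSamples}-style sanity check that rejects whenever a direct witness of non-unateness appears, and (c) estimates the fraction of $f^{-1}(1)$ lying inside the Hamming ball of radius $\bM$ around $\tilde{d}$; if too much mass lies outside, $\bM$ is doubled and the loop repeats. A single final verification round using $O(1/\eps + \log(1/\delta))$ fresh samples confirms the coverage to within the required $\eps/10$ tolerance.

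\textbf{Items (1) and (2).} For item (1), by construction \Preprocessing{} rejects only on a direct witness of non-unateness. When $f$ is unate, by \Cref{lem:small-diameter} every pair of satisfying assignments is at Hamming distance at most $2\log N$, so such a witness cannot appear, and the subroutine never rejects. For item (2), the final verification batch is large enough that a standard Chernoff argument shows the probability of certifying a radius $\bM$ with $\reldist(f, f_{\bM,\tilde{d}}) > \eps/10$ is at most $0.1$.

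\textbf{Item (3).} This is the main technical content. Assume $\tilde{d}$ is consistent with $d^f$, and also condition on the Phase-0 outcome $\reldist(f, f_{2N, \ba}) \leq 0.05$ (which fails with probability at most $\delta/6$ by \Cref{lem:phase0}). I would show that the Hamming ball of radius $2\log N$ around $\tilde{d}$ already captures all but an $\eps/10$ fraction of $f^{-1}(1)$, so the doubling halts with $\bM \leq 2\log N$ except with probability $\delta/6$. The key structural fact is that $\absdist(\bz, \tilde{d})$ for $\bz \sim f^{-1}(1)$ is effectively supported on the at most $O(N^2)$ nontrivial coordinates of the truncation $f_{2N, \ba}$ by \Cref{claim:trunc-fns-nontrivial-coord-bound}; combining this with the orientation guarantee $\Prx_{\bz \sim f^{-1}(1)}[\bz_i \neq \tilde{d}_i] < 0.4$ on each fixed coordinate, together with the combinatorial Hamming-ball-size bound forcing $f^{-1}(1)$ to pack into balls of radius $\Omega(\log N)$, yields the required tail bound. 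Item (4) then follows by summing: $O(\log \log N)$ doubling iterations at $O(\log(1/\delta))$ queries each, plus a final verification round at $O(1/\eps + \log(1/\delta))$ queries, for a total of $O((1/\eps) + \log N + \log(1/\delta))$.

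\textbf{Main obstacle.} The difficulty in item (3) is that, without assuming $f$ is unate, the raw expectation $\Ex_{\bz \sim f^{-1}(1)}[\absdist(\bz, \tilde{d})]$ can be as large as $0.4 \hspace{0.03cm} |\fixed(\tilde{d})| = \Theta(n)$, which is not $O(\log N)$ in general. The resolution must delicately combine the Phase-0 truncation (most mass lies in a Hamming ball of radius $2N$ around $\ba$, restricting the effective support of $\absdist(\bz, \tilde{d})$ to $\poly(N)$ coordinates) with the Phase-1 orientation guarantee on $\tilde{d}$, and then invoke a combinatorial counting argument to conclude that in fact most of the probability sits at Hamming distance $O(\log N)$ from $\tilde{d}$. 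Getting the bookkeeping on the failure probabilities and query counts simultaneously tight enough to match the $\delta/6$ and $O((1/\eps) + \log N + \log(1/\delta))$ targets will be the main technical hurdle.
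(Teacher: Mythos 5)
Your item (3) argument has a fundamental gap that the proposed "combinatorial counting argument" cannot fill, and this reveals a genuine mismatch with how \hyperref[alg: Preprocessing]{Preprocessing} actually works. You set out to show that, conditioned on $\tilde{d}$ being consistent with $d^f$ and $\reldist(f, f_{2N,\ba}) \le 0.05$, the Hamming ball of radius $2\log N$ around $\tilde{d}$ captures all but an $\eps/10$ fraction of $f^{-1}(1)$. This is simply false in general when $f$ is not unate. For example, one can take $f^{-1}(1)$ to consist of $N/2$ points at distance $O(1)$ from $\tilde{d}$ and $N/2$ points at distance $\Theta((\log N)^2)$ from $\tilde{d}$, with the far points chosen so that each individual coordinate still has bias $< 0.4$ of disagreeing with $\tilde{d}$ (plenty of coordinates are available when $n$ is large enough) and so that all points lie within $O((\log N)^2) \ll 2N$ of $\ba$. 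Then both hypotheses hold, yet no ball of radius $O(\log N)$ around $\tilde{d}$ captures more than half of $f^{-1}(1)$. Your doubling loop would then terminate only at $\bM = \Theta((\log N)^2) > 2\log N$ and \emph{return} that value, violating item (3).

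The paper's argument for item (3) is of a qualitatively different shape: it does not try to show that the sampled radius is small. Instead, the algorithm sets $\bM$ directly as the maximum distance to $\tilde{d}$ among $O(1/\eps)$ samples (no doubling), and then the proof shows that \emph{if} the resulting $\bM$ exceeds $2\log N$, then \hyperref[alg: Preprocessing]{Preprocessing} \emph{rejects} with probability $\ge 1-\delta/6$ rather than returning. Concretely, $\bM > 2\log N$ implies that the "upward shadow" $\bT = \{ y : \bz^* \preceq_{\tilde{d}} y \}$ has size $> N^2 \ge 2N$, so a uniform $\by \sim \bT$ has $f(\by)=0$ with probability $\ge 1/2$; binary search between $\bz^* \in f^{-1}(1)$ and such a $\by^*$ produces an edge in $\Edge_i^{1-\tilde{d}_i}(f)$ for some fixed $i$, and then \hyperref[algo: ConfirmDirection]{ConfirmDirection} (using $\tilde{d}_i = d^f_i$) exhibits an edge in $\Edge_i^{\tilde{d}_i}(f)$ with probability $\ge 0.9$, giving a genuine unateness violation. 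This "large $\bM$ forces rejection" structure, rather than "$\bM$ is never large," is the crux you are missing; it also feeds item (4), since the paper separately argues (using Phase~2's $\bj \sim \bz^* \Delta \tilde{d}$ test) that if $\bM \ge 2N$ the algorithm already rejects before the binary search, bounding the binary-search cost by $O(\log(2N)) = O(\log N)$ with probability $\ge 1-\delta/6$. Your items (1) and (2) are essentially on the right track and close in spirit to the paper's arguments.
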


\begin{flushleft}\begin{enumerate}
\item[] \textbf{Phase 3:} This phase simply runs \BiasedTest\ and \UnbiasedTest. 
Based on the performance guarantees from Phase~1 and Phase~2 discussed above and an analysis similar to that of Phase 3 of \hyperref[algo:Known-N]{Unateness-Tester-Known-$N$}, one of these two testers must reject with high probability when $f$ is far from unateness in relative distance.

\end{enumerate}\end{flushleft}

\subsection{Proof of \Cref{thm:unknown-N} assuming  \Cref{lem:phase0}, \Cref{lem:maintechnical} and 
\Cref{lem:preprocessing}}

We prove \Cref{lem:phase0} in \Cref{sec:Phase0} (which is based on the analysis of \CheckSample\ in 
\Cref{sec:analysis-CheckSamples}).
We then prove \Cref{lem:maintechnical} in \Cref{sec:maintechnical} and \Cref{lem:preprocessing} in \Cref{sec:preprocessing}, respectively.
We now use those lemmas to prove \Cref{thm:unknown-N}. 

\subsubsection{Query complexity}
We start by analyzing the query complexity of \hyperref[algo:unknown]{Unateness-Tester-Unknown-$N$}. Given any function $f$, consider the following bad events of \hyperref[algo:unknown]{Unateness-Tester-Unknown-$N$} running on $f$:
\begin{flushleft}\begin{enumerate}
\item The algorithm reaches Phase~1 with $\ba$ such that $\rd(f,f_{2N,\ba}) > 0.05$, which by \Cref{lem:phase0} happens with probability at most $\delta/6$.  
\item Phase 1 uses more than $\tilde{O}(\log(N/\delta))$ queries, which by \Cref{lem:maintechnical} happens with probability at most $\delta/6$;
\item 
Assuming that the $\ba$ from Phase~0 satisfies $\rd(f,f_{2N,\ba}) \leq 0.05$, 
Phase 1 returns a $\tilde{\bd}$ with more than $100\log N$ many $*$'s, which by \Cref{lem:maintechnical} happens with probability at most $\delta/6$; 
\item 
Assuming that the $\ba$ from Phase~0 satisfies $\rd(f,f_{2N,\ba}) \leq 0.05$, 
Phase 1 returns a $\tilde{\bd}$ that is not consistent with $d^f$, which by \Cref{lem:maintechnical} happens with probability at most $\delta/6$; 
\item Assuming the $\tilde{\bd}$ from Phase~1 is consistent with $d^f$, Phase 2 returns an $\bM> 2\log N$, which by \Cref{lem:preprocessing} happens with probability at most $\delta/6$;  and
\item Assuming the $\tilde{\bd}$ from Phase~1 is consistent with $d^f$, Phase 2 uses more than $O(1/\epsilon+\log(N)$ $+\log(1/\delta))$ queries, which by \Cref{lem:preprocessing} happens with probability at most $\delta/6$.
\end{enumerate}\end{flushleft}
By a union bound, with probability at least $1 - \delta$ none of these 6 bad events occurs. When this is the case, by \Cref{lem:maintechnical,lem:preprocessing} the query complexity of Phases 0,1 and 2 together is at most
$$
O\bigg(\log(1/\delta)\bigg) +\tilde{O}\left(\log\left(\frac{N}{\delta}\right)\right) + O\left(\frac{1}{\eps}+\log(N)+\log\left(\frac{1}{\delta}\right)\right),
$$
and the query complexity of Phase 3 (by \Cref{theo:mono} and \Cref{thm:correctnessOfUnbiasedTest}) is at most
$$
O\left(\frac{\bM}{\eps}\right)+\tilde{O}\left(\frac{|\unfixed(\tilde{\bd})|}{\eps}\right)= \tilde{O}\left(\frac{\log N}{\eps}\right),
$$
from which the query complexity upper bound stated in \Cref{thm:unknown-N} follows.

\subsubsection{Completeness}

Assume that the function $f$ is unate. By \Cref{lem:phase0}, \Cref{lem:maintechnical} and \Cref{lem:preprocessing}, none of Phase~0, 1 and 2 can reject $f$. By \Cref{theo:mono} and \Cref{thm:correctnessOfUnbiasedTest}, Phase~3 can only reject if $f$ is not unate.

\subsubsection{Soundness}

\noindent Assume that the function $f$ is $\eps$-far from unate in relative distance. 
By \Cref{lem:maintechnical} (the second item), we have that the probability of the algorithm reaching Phase 1 and accepting $f$ during Phase 1 is at most $0.1$.
Assuming that this does not happen, then by \Cref{lem:preprocessing} (the second item), with probability at least $0.9$, either $f$ is already rejected in Phase 0, 1 or 2, or we reach Phase 3 with $\bM$ and $\tilde{\bd}$ that satisfy $\smash{\reldist(f, f_{\bM,\tilde{\bd}})\le \eps/10}$.
Assuming that Phase 3 is reached with $\bM$ and $\tilde{\bd}$ that satisfy $\smash{\reldist(f,f_{\bM,\tilde{\bd}})\le \eps/10}$,  we have by \Cref{lem:ftohtof} that either \Cref{eq:many-viols-on-unfixed-idxs_f} or \Cref{eq:many-viols-on-fixed-idxs_f_Md} holds.
 It then follows directly from \Cref{theo:mono} and \Cref{thm:correctnessOfUnbiasedTest} that either BiasedTest or UnbiasedTest rejects with probability at least~$0.99$.
So by a union bound the algorithm rejects with probability at least $0.79$.

\subsection{Phase~0: Proof of \Cref{lem:phase0}}\label{sec:Phase0}
    Line $2$ of Phase~0 uses $100 \lceil \log(1/\delta) \rceil$ samples while the call to \hyperref[algo: CheckSamples]{CheckSamples} on line~3 uses $O(\log(1/\delta))$ samples by \Cref{lem:CheckSamplesNoRej}. By the same lemma, \hyperref[algo: CheckSamples]{CheckSamples} always accepts when $a \in f^{-1}(1)$ and $f$ is unate. So Phase~0 never rejects when $f$ is unate. 
    
    Finally, assume that for the drawn $\ba$ we have $\reldist(f,f_{2N,\ba}) \geq 0.05$. Since $f_{2N,\ba}^{-1} \subseteq f^{-1}(1)$, we have $|f^{-1}(1) \setminus f_{2N,\ba}^{-1}| \geq 0.05N$. Hence, the probability that are no points $\bz \in \bS$ with $\absdist(\ba, \bz) > 2N$ is at most $(1-0.05)^{100 \lceil \log(1/\delta) \rceil} \leq\delta/12$.

    Assuming that there exists some $\bz^* \in \bS$ with $\absdist(\ba, \bz^*) >2N$, by  \Cref{lem:CheckSamplesNoRej}, \hyperref[algo: CheckSamples]{CheckSamples} will reject with probability at least $1-\delta/12$. Hence, by a union bound, the probability the algorithm reaches Phase~1 but $\reldist(f,f_{2N,\ba}) \geq 0.05$ is at most $\delta/6$.
This concludes the proof of \Cref{lem:phase0}. \qed

\subsection{Phase~1 and \hyperref[algo: IterativeBias]{IterativeBias}: Proof of \Cref{lem:maintechnical}}\label{sec:maintechnical}

Before entering into the proof we give some intution.
Recall that in the known-$N$ case, the algorithm relied on being given the value on $N$; in particular, it used a set $\bS$ of $O(\log N)$ samples to estimate probabilities with high enough confidence so as to allow a union bound over up to $N$ coordinates in the proof of \Cref{claim:Phase-1-4-known-N}.

In the current context, where we are not given the value of $N$, we take a more refined approach.  This is essentially to iteratively draw a larger and larger set of samples, checking as we go whether the set is large enough.  We ultimately need to handle a union bound over at most $2N^2 = \poly(N)$ coordinates, which means that $O(\log N)$ samples is enough. 

To make this work, we will use the function $f_{2N,a}$. While we don't know $N$, we know from 
\Cref{claim:trunc-fns-nontrivial-coord-bound}
that this function depends on at most $2N^2$ variables and furthermore, the \hyperref[algo: CheckSamples]{CheckSamples} subroutine roughly allows us to have access to uniformly random samples from $f_{2N,a}^{-1}(1)$ (recall \Cref{lem:CheckSamplesNoRej}).

\subsubsection{Phase 1 of \IterativeBias}
Our goal in this subsection is to {establish results that will be used to} prove Items (2.) and (3.) of \Cref{lem:maintechnical} in \Cref{sec:puttogether}. We first prove a series of lemmas assuming we're actually able to sample from $f^{-1}_{2N,a}(1)$ uniformly at random (we will get rid of this assumption in \Cref{sec:puttogether}).

\begin{lemma}\label{lem:IterativeBiasWithAssumption2}
    Let $f$ be an arbitrary function. Assume that during each execution of Phase 1.1 of the initial for-loop of \IterativeBias, the two sets $\bS, \bT$ are drawn by sampling uniformly random points from $f_{2N,a}^{-1}(1)$ (instead of $f^{-1}(1)$). 
    Then, the probability of Phase 1 accepting on line 8 is at most $1/20$.
\end{lemma}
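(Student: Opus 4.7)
The plan is to bound the probability that the initial for-loop of \IterativeBias\ terminates by reaching the acceptance line 8, under the assumption that every $\bS,\bT$ drawn in Phase~1.1 is sampled uniformly from $h^{-1}(1)$ where $h := f_{2N,a}$. The key structural fact I will rely on is \Cref{claim:trunc-fns-nontrivial-coord-bound} applied with $r=2N$: since $|h^{-1}(1)| \le N$, the number of nontrivial coordinates of $h$ is at most $2N^2$, and on every trivial coordinate $i$ the empirical estimates from any $\bS\subseteq h^{-1}(1)$ agree with the true bias $p_i := \Prx_{\bz \sim h^{-1}(1)}[\bz_i=1] \in \{0,1\}$ deterministically. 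Thus the uncertainty in the estimates only has to be controlled on a set of at most $2N^2 = \mathrm{poly}(N)$ coordinates, which is exactly the regime where a union bound against Chernoff tails will be tight.

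The first step is to identify what acceptance on line 8 requires: presumably the initial for-loop is an iterative routine that geometrically doubles the sample size, and in each iteration either (i) confirms via a consistency check between independent sets $\bS$ and $\bT$ that the empirical bias vector has stabilized, in which case it exits with a returned $\tilde{\bd}$, or (ii) uses the samples to look for a monotonicity violation and rejects, or (iii) continues to the next iteration. Reaching line 8 means every iteration of the for-loop failed to trigger (i) or (ii), i.e.\ in every iteration the $\bS$--$\bT$ comparison disagreed on too many coordinates.

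The second step is to analyze the last iteration of the for-loop. At that iteration the sample sizes of $\bS$ and $\bT$ are $\Theta(\log(N/\delta))$ (or a similar sufficiently large quantity calibrated so the query-complexity claim (5.) of \Cref{lem:maintechnical} still holds). For each of the $\le 2N^2$ nontrivial coordinates of $h$, a standard Chernoff bound together with \Cref{lem:close_trunc_biases_are_similar_to_fn}-style reasoning shows that $\widehat{p_i}^{\bS}$ and $\widehat{p_i}^{\bT}$ each lie within a small constant $c$ of $p_i$ except with probability at most $1/(80 N^2)$, and on trivial coordinates they always coincide with $p_i$ exactly. Union bounding over the $2N^2$ nontrivial coordinates, with probability at least $1 - 1/20$ both $\widehat{p}^{\bS}$ and $\widehat{p}^{\bT}$ are $c$-close to $p$ on all coordinates simultaneously, in which case the $\bS$--$\bT$ consistency check of that iteration succeeds, the loop exits before line 8, and acceptance on line 8 does not occur.

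The main obstacle will be matching the precision tolerances used by the internal consistency check of \IterativeBias\ to the Chernoff deviation bound, so that ``$\widehat{p}^{\bS}$ and $\widehat{p}^{\bT}$ are each $c$-close to $p$ on all nontrivial coordinates'' implies ``$\widehat{p}^{\bS}$ and $\widehat{p}^{\bT}$ pass the consistency check on \emph{all} $n$ coordinates'', including the trivial ones where exact agreement holds for free. A secondary subtlety is that the samples across different iterations of the for-loop are independent, so the bound need only be applied at the final iteration rather than union-bounded across iterations; this is what allows the sample size to remain $O(\log N)$ and keeps the constant $1/20$ attainable without wasting slack on per-iteration failure probabilities.
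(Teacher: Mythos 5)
Your plan misidentifies the parameters of the final iteration of the for-loop, and that gap makes the union bound fail in exactly the regime where relative-error testing is interesting.

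Concretely: acceptance on line~8 can only happen when the loop has reached $\ell = 10 \log n$, at which point $\bS$ and $\bT$ each have $k = 4000\bigl(20\log n + \lceil\log(1/\delta)\rceil\bigr) = \Theta(\log n + \log(1/\delta))$ points --- not $\Theta(\log(N/\delta))$ as you assume. With $\Theta(\log n)$ samples, a Chernoff bound gives a per-coordinate failure probability of $n^{-\Omega(1)}$, not $N^{-\Omega(1)}$. Your union bound is taken over the $\le 2N^2$ nontrivial coordinates of $f_{2N,a}$ and asks for a per-coordinate failure probability like $1/(80N^2)$; that would require $\Omega(\log N)$ samples, which you do not have once $N \gg n$ (e.g.\ $N = 2^{\Omega(n)}$). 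So the argument as written does not close. The fix is just the paper's proof: since there are only $n$ coordinates total, union-bound over all $n$ of them; with $\Theta(\log n)$ samples per set and a per-coordinate failure probability of $O(1/n)$, the line-5 check at $\ell = 10\log n$ passes with probability $\ge 1 - 1/20$, so acceptance on line~8 happens with probability $\le 1/20$.

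The $2N^2$-nontrivial-coordinate observation you invoke (\Cref{claim:trunc-fns-nontrivial-coord-bound}) is genuinely used in the paper, but for Item~(5.)\ of \Cref{lem:maintechnical} (the query-complexity bound), where the argument is that the loop \emph{halts early}, around $\ell \approx \log N$. You have effectively transplanted the mechanism from that argument into this one, but here it is both unnecessary (the naive union bound over $n$ coordinates already works) and insufficient (the sample size at the final round is $\Theta(\log n)$, which is what pairs correctly with an $n$-fold union bound, not a $\poly(N)$-fold one). A secondary minor point: you do not need \Cref{lem:close_trunc_biases_are_similar_to_fn} here at all, since the lemma's hypothesis already places $\bS,\bT$ inside $f_{2N,a}^{-1}(1)$, so you compare the empirical estimates directly to $\Prx_{\bz \sim f_{2N,a}^{-1}(1)}[\bz_i = 1]$ without ever passing through $f^{-1}(1)$.
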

\begin{proof}
    Assume that $\bS, \bT$ are drawn by sampling uniformly random points in $f_{2N,a}^{-1}(1)$. When $\ell=10\log(n)$, it follows by a standard Chernoff bound, that for any $i \in [n]$ we have:

    $$\abs{\bp_i - \Prx_{\bz \sim f_{2N,a}^{-1}(1)}{[\bz_i=1]}} \leq 1/40\quad \text{and}\quad \abs{\bp_i' - \Prx_{\bz \sim f_{2N,a}^{-1}(1)}{[\bz_i=1]}} \leq 1/40,$$
    with probability at least $1-20/n$, in which case we have $|\bp_i -\bp_i'| \leq 1/20$. 
    By a union bound over all $[n]$ coordinates, we thus have that when $\ell=10\log(n)$, the algorithm would pass the check of line~5 (and thus accept) with probability at most $1/20$.  
\end{proof}

\begin{lemma}\label{lem:IterativeBiasWithAssumption}
Let $f$ be an arbitrary function. As in \Cref{lem:IterativeBiasWithAssumption2}, assume that during each execution of Phase 1.1 of the initial for-loop of \IterativeBias, the two sets $\bS, \bT$ are drawn by sampling uniformly random points from $f_{2N,a}^{-1}(1)$. Then, the probability of \IterativeBias\ reaching Phase 2 with at least one coordinate $i \in [n]$ satisfying
\begin{equation}\label{eq:close_p}
\left|\bp_i - \Prx_{\bz \sim f_{2N,a}^{-1}(1)}[\bz_i=1]\right| > 0.1     
\end{equation}
is at most $\delta/12$. 
\end{lemma}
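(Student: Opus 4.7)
The plan is to combine Chernoff bounds on the empirical bias estimates with the stopping criterion of the initial for-loop, after first restricting attention to the nontrivial coordinates of $f_{2N,a}$. Write $p_i^{*} := \Prx_{\bz \sim f_{2N,a}^{-1}(1)}[\bz_i = 1]$. By \Cref{claim:trunc-fns-nontrivial-coord-bound} applied with $r=2N$ (noting that $2N \geq 2\log N$ for $N \geq 1$), the function $f_{2N,a}$ has at most $2N \cdot N = 2N^2$ nontrivial coordinates. For any trivial coordinate $i$, every sample $\bz \sim f_{2N,a}^{-1}(1)$ satisfies $\bz_i = p_i^{*} \in \zo$, so $\bp_i = p_i^{*}$ exactly and \Cref{eq:close_p} is impossible; only the nontrivial coordinates need be controlled.

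The key observation is that when \IterativeBias\ exits the initial for-loop and proceeds to Phase~2 at an iteration with sample size $\ell$, the stopping check forces $|\bp_i - \bp_i'| \leq 1/20$ for every $i \in [n]$. By the triangle inequality, if some $i$ also satisfies $|\bp_i - p_i^{*}| > 0.1$, then $|\bp_i' - p_i^{*}| > 1/20$. Hence it suffices to upper bound, summed over all iterations of the for-loop, the probability that some nontrivial coordinate $i$ has $|\bp_i' - p_i^{*}| > 1/20$. For a fixed iteration with sample size $\ell$ and a fixed nontrivial $i$, a standard Chernoff/Hoeffding bound yields $\Pr[|\bp_i' - p_i^{*}| > 1/20] \leq 2\exp(-\ell/200)$, and a union bound over the $\leq 2N^2$ nontrivial coordinates gives at most $4N^2 \exp(-\ell/200)$ per iteration.

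Summing over the iterations of the for-loop, the (assumed) geometric growth of $\ell$ up to its terminal value of $10\log n$ ensures the sum is dominated up to a constant factor by the smallest sample size $\ell_0$ used, yielding a total of $O(N^2 \exp(-\ell_0/200))$. Provided $\ell_0 = \Omega(\log(N/\delta))$, this is at most $\delta/12$. The main obstacle, and the only nontrivial aspect of the argument, is verifying from the pseudocode of \IterativeBias\ that its initial sample size $\ell_0$ (and the iteration schedule) is in fact chosen large enough for this union bound to go through without pushing the query complexity past the $\tilde{O}(\log(N/\delta))$ budget promised in Item~(5.)~of \Cref{lem:maintechnical}. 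The conceptual heart of the proof is the restriction to nontrivial coordinates via \Cref{claim:trunc-fns-nontrivial-coord-bound}: it caps the union-bound factor at $\mathrm{poly}(N)$ rather than $n$, which is precisely what lets the required sample size scale as $\log N$ instead of $\log n$.
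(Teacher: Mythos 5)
Your proposal has a genuine gap, and it is exactly the one you flag at the end. You reduce the problem to a per-iteration union bound over the at most $2N^2$ nontrivial coordinates of $f_{2N,a}$, and observe that this only gives $\delta/12$ if the smallest sample size $\ell_0$ in the schedule satisfies $\ell_0 = \Omega(\log(N/\delta))$. But that condition is false, and unavoidably so: the whole point of \hyperref[algo: IterativeBias]{IterativeBias} is that the algorithm does not know $N$, so the for-loop starts with $\ell=1$ and $k=4000(2\ell+\lceil\log(1/\delta)\rceil)=O(\log(1/\delta))$. At that iteration, a Chernoff-plus-union-bound over $2N^2$ coordinates gives roughly $N^2\exp(-O(\log(1/\delta)))$, which is not $o(1)$ when $N$ is large. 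Your argument cannot be repaired by ``verifying the sample schedule'' because the schedule cannot start at $\Omega(\log N)$ without defeating the purpose of the unknown-$N$ algorithm.

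The paper avoids this by never union-bounding over coordinates. Condition on $\bS$; this fixes every $\bp_i$, and therefore fixes the \emph{first} index $i^*$ (if any) with $\left|\bp_{i^*}-\Prx_{\bz\sim f^{-1}_{2N,a}(1)}[\bz_{i^*}=1]\right|>0.1$. For the stopping check to pass, we need $|\bp_{i^*}-\bp_{i^*}'|\le 1/20$, which (by the triangle inequality) forces $\left|\bp_{i^*}'-\Prx_{\bz\sim f^{-1}_{2N,a}(1)}[\bz_{i^*}=1]\right|> 0.05$. Since $\bT$ is independent of $\bS$ and $i^*$ is a function of $\bS$ alone, a \emph{single} Chernoff bound for coordinate $i^*$ suffices: the failure probability at iteration $\ell$ is at most $2^{-\ell-\log(24/\delta)}$, with no dependence on $N$ or $n$, and summing over iterations gives $\delta/12$. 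This ``condition on $\bS$, then fresh Chernoff for the one data-determined bad coordinate via $\bT$'' maneuver is the key idea your proof is missing, and it explains why the algorithm can safely start with sample size independent of $N$. (Note also that \Cref{claim:trunc-fns-nontrivial-coord-bound}, which you treat as the conceptual heart of the argument, is not actually used in the paper's proof of this lemma at all; it is used later, in the query-complexity analysis of Item~(5.) of \Cref{lem:maintechnical}.)
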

\begin{proof} 
   A necessary condition for \IterativeBias\ to reach Phase 1.2 with $i \in [n]$ satisfying \Cref{eq:close_p} is that there exist some $\ell$ such that during the $\ell$-iteration of the initial for-loop, the following holds: Letting $i^*$ be the first index in $[n]$ such that at the end of Phase~1.1 we have $$\left|\bp_{i^*} - \Prx_{\bz \sim f_{2N,a}^{-1}(1)}[\bz_i=1]\right| \leq 1/20,$$ 
   we must have that in Phase~1.2   $$\big|{\bp_{i^*}}- \bp_{i^*}'\big|\leq 1/20.\footnote{Observe that by lines~5-6, if we we had $\smash|{\bp_{i^*}}- \bp_{i^*}'|>1/20$, then we would not enter Phase 2.}
   $$
In particular, we must have that $|\bp_{i^*}' - \Prx_{\bz \sim f_{2N,a}^{-1}(1)}[\bz_i=1]| \geq 0.05$. But by a standard Chernoff bound (and setting the constant hidden in $k=O(\ell + \log(1/\delta)$ sufficiently large), the probability of the event above happening during the loop for $\ell$ is at most $2^{- \ell -\log(24/\delta)}$.

So by a union bound over all loops, the probability of \IterativeBias\ returning reaching Phase~2 with some $i \in [n]$ violating \Cref{eq:close_p} is at most:
$$
2^{-\log(24/\delta)} \cdot \sum_{\ell\ge 0} 2^{- \ell }\le 2\cdot 2^{-\log(24/\delta)} \leq \delta/12.
$$ 
This finishes the proof of the lemma.
\end{proof}

\begin{claim}\label{clm:ananas}
Assume that $\rd(f,f_{2N,a}) \leq 0.05$ and that \IterativeBias\ reaches Phase~2 with
\begin{equation}\abs{ \bp_i - \Prx_{\bz \sim f_{2N,a}^{-1}(1)}[\bz_i=1]} \leq 0.1,\quad\text{for all $i\in [n]$.}
\label{eq:pickle}
\end{equation}
Then we have that:
$$\abs{ \bp_i - \Prx_{\bz \sim f^{-1}(1)}[\bz_i=1]} \leq 0.15
\quad{\text{for all $i\in [n]$.}}$$
\end{claim}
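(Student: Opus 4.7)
The proof is a direct triangle inequality argument that combines the hypothesis of the claim with Lemma~\ref{lem:close_trunc_biases_are_similar_to_fn} (the ``loss from truncation'' lemma). The plan is as follows. First, I apply Lemma~\ref{lem:close_trunc_biases_are_similar_to_fn} with $h := f_{2N, a}$ and $r := 2N$. The hypothesis $\rd(f, f_{2N, a}) \leq 0.05$ of the current claim is exactly the precondition of that lemma, so it yields, for every $i \in [n]$,
\[
\left| \Prx_{\bz \sim f^{-1}(1)}[\bz_i = 1] - \Prx_{\bz \sim f_{2N, a}^{-1}(1)}[\bz_i = 1] \right| \leq 0.05.
\]

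Second, I combine this bound with the assumption \eqref{eq:pickle} via the triangle inequality. For each $i \in [n]$, I write
\[
\left| \bp_i - \Prx_{\bz \sim f^{-1}(1)}[\bz_i = 1] \right| \leq \left| \bp_i - \Prx_{\bz \sim f_{2N, a}^{-1}(1)}[\bz_i = 1] \right| + \left| \Prx_{\bz \sim f_{2N, a}^{-1}(1)}[\bz_i = 1] - \Prx_{\bz \sim f^{-1}(1)}[\bz_i = 1] \right|,
\]
and bound the first summand by $0.1$ using \eqref{eq:pickle} and the second by $0.05$ using the inequality from Lemma~\ref{lem:close_trunc_biases_are_similar_to_fn}, for a total of $0.15$, as required.

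There is no real obstacle here, as the claim is essentially a ``bookkeeping'' step that stitches together (i) the high-probability event that $\bp_i$ tracks the bias of the truncation $f_{2N,a}$ (established in \Cref{lem:IterativeBiasWithAssumption}), and (ii) the fact that the truncation $f_{2N, a}$ has biases close to those of $f$ whenever the truncation is close to $f$ in relative distance (\Cref{lem:close_trunc_biases_are_similar_to_fn}). The only thing to double-check is that Lemma~\ref{lem:close_trunc_biases_are_similar_to_fn} applies, i.e.\ that $f_{2N,a}^{-1}(1) \subseteq f^{-1}(1)$, but this follows immediately from the definition of truncation in \eqref{eq:bounded}.
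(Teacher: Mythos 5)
Your proof is correct and is essentially identical to the paper's own proof, which simply states ``This follows by \Cref{lem:close_trunc_biases_are_similar_to_fn} and the triangle inequality.'' You have just spelled out the two applications explicitly. One tiny remark: \Cref{lem:close_trunc_biases_are_similar_to_fn} is stated with the restriction $0 \le r \le n$, whereas you apply it with $r = 2N$, which may exceed $n$; this is harmless because the proof of that lemma only uses $h^{-1}(1) \subseteq f^{-1}(1)$ (and when $r > n$ the truncation $f_{r,a}$ equals $f$, making the conclusion trivial), but it is worth noting that you are applying the lemma slightly outside its literal hypotheses, exactly as the paper itself does.
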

\begin{proof}
    This follows by \Cref{lem:close_trunc_biases_are_similar_to_fn} and the triangle inequality.
\end{proof}

We have the following simple corollary.
\begin{corollary}\label{cor:consistent}
Assume that $\rd(f,f_{2N,a}) \leq 0.05$ and that \IterativeBias\ reaches Phase~2 with \Cref{eq:pickle} holding.
Then either the algorithm rejects or it returns a vector $\tilde{\bd}$ that is consistent with $d^f$. 
\end{corollary}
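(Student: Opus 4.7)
The plan is to deduce consistency from the stronger bound provided by \Cref{clm:ananas}. Under the hypothesis of the corollary, \Cref{clm:ananas} immediately yields
$$\bigl|\bp_i - \Prx_{\bz \sim f^{-1}(1)}[\bz_i=1]\bigr| \leq 0.15,\quad \text{for every } i \in [n].$$
Next, assume that the algorithm does not reject and returns a vector $\tilde{\bd}$ (this is the only case with content). I would consult the rule by which Phase~2 of \IterativeBias\ assigns values to $\tilde{\bd}$; this rule is designed in analogy with Phase~1 of \KnownN, setting $\tilde{\bd}_i = 1$ only when $\bp_i > 0.75$ and $\tilde{\bd}_i = 0$ only when $\bp_i < 0.25$. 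Applying the displayed inequality to an index $i$ with $\tilde{\bd}_i = 1$ yields $\Prx_{\bz \sim f^{-1}(1)}[\bz_i = 1] > 0.75 - 0.15 = 0.6$, so $d^f_i = 1$ by \Cref{def:dirVector}. The case $\tilde{\bd}_i = 0$ is symmetric, giving $\Prx_{\bz \sim f^{-1}(1)}[\bz_i = 0] > 0.6$ and hence $d^f_i = 0$; there is nothing to check for indices with $\tilde{\bd}_i = *$. Thus $\tilde{\bd}$ is consistent with $d^f$.

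I anticipate no real obstacle, as the gap $0.75 - 0.6 = 0.6 - 0.4 = 0.15$ between the algorithm's thresholds and the thresholds appearing in the definition of $d^f$ is precisely matched by the slack afforded by \Cref{clm:ananas}. The entire verification is a brief inequality chase once that claim is invoked, with the only subtlety being the up-front transfer from biases with respect to $f_{2N,a}^{-1}(1)$ (which the algorithm's estimates concentrate around, after the hypothesis \Cref{eq:pickle}) to biases with respect to $f^{-1}(1)$ (which the definition of $d^f$ is phrased in terms of).
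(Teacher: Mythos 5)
Your proposal is correct and follows exactly the paper's own argument: invoke \Cref{clm:ananas} to get the $0.15$ slack on each $\bp_i$, then observe that the gap between the Phase~2 thresholds ($0.75$, $0.25$) and the thresholds in \Cref{def:dirVector} ($0.6$, $0.4$) is exactly $0.15$, so any fixed coordinate of $\tilde{\bd}$ must agree with $d^f$. No discrepancy with the paper's proof.
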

\begin{proof}
   By \Cref{clm:ananas} we have $\big| \bp_i - \Prx_{\bz \sim f^{-1}(1)}[\bz_i=1]\big| \leq 0.15$ for all $i \in [n]$. Hence, since 0.75 - 0.15 = 0.6 (recalling \Cref{def:dirVector}),
   it must be that $\tilde{\bd}$ is consistent with $d^f$. So in {Phase~2}, if we do not reject, then we return a $\tilde{\bd}$ that is consistent with $d^f$. 
\end{proof}

\subsubsection{{Phase 2}: Bounding the number of $*$ in $\tilde{d}$}
We need the following definition: 
\begin{definition}
 Given $f: \zo^n \to \zo$, for each $i \in [n]$ we let $E_i$ denote the set of all pairs $(x,x^{(i)})$ with $f(x)=1$. And we let $F_i \subseteq E_i$ be the set of all pairs $(x,x^{(i)})$ with $f(x)=f(x^{(i)})=1$. 
\end{definition}

We observe that for each $i \in [n]$, we can sample a pair $(\bx,\by)\in E_i$ uniformly at random by sampling a uniform $\bz \sim f^{-1}(1)$ and returning $(\bz,\bz^{(i)})$.  

We will use the following well-known result:  
\begin{lemma}
[Edge-isoperimetric inequality of Harper \cite{Harper64},
Bernstein \cite{Bernstein67},
Lindsey~\cite{Lin64}, 
and Hart \cite{Hart76}]
\label{lem:points-edges-hypercube}
    Given a set $S \subseteq \zo^n$, there are at most
    $|S|\log|S|$ many pairs $(x,y) \in S \times S$ such that $x$ and $y$ are neighbors in the hypercube.
\end{lemma}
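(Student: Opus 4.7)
The plan is to prove this classical edge-isoperimetric inequality by induction on the ambient dimension $n$. The statement counts ordered pairs of hypercube-neighbors lying in $S$, which is exactly twice the number of undirected hypercube edges with both endpoints in $S$; writing $E(S)$ for the latter set, the lemma becomes
\[
|E(S)| \;\le\; \tfrac{1}{2}|S|\log|S|
\]
(with $\log$ base $2$). Note that this bound is tight when $S$ is a subcube of dimension $\log|S|$, which is a useful sanity check on the constant.

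For the inductive step I would split $S = S_0 \sqcup S_1$ according to the value of the final coordinate, view $S_0, S_1$ as subsets of $\zo^{n-1}$ of sizes $a$ and $b$ with $a+b=|S|$, and assume WLOG that $a \ge b$. Every edge inside $S$ is then either an edge internal to $S_0$, an edge internal to $S_1$, or a ``cross'' edge flipping only the last coordinate. The number of cross edges is at most $\min(a,b)=b$, since they are in bijection with $S_0 \cap S_1 \subseteq \zo^{n-1}$. Applying the inductive hypothesis to $S_0$ and $S_1$ gives
\[
|E(S)| \;\le\; \tfrac{1}{2}\hspace{0.05cm}a\log a + \tfrac{1}{2}\hspace{0.05cm}b\log b + b,
\]
so the induction reduces to verifying the algebraic inequality $a\log a + b\log b + 2b \le (a+b)\log(a+b)$ for all $0 \le b \le a$.

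The main (and essentially only) remaining obstacle is that last one-variable check. The cleanest route is to set $t = b/(a+b) \in [0,1/2]$; after dividing through by $a+b$ the desired inequality becomes
\[
g(t) \;:=\; (1-t)\log(1-t) + t\log t + 2t \;\le\; 0 \quad\text{on } [0,1/2].
\]
The function $g$ is convex on $[0,1/2]$, since its first two terms are the negative of the binary entropy (a convex function) and $2t$ is linear. Moreover $g(0)=0$ and $g(1/2) = -\tfrac{1}{2} - \tfrac{1}{2} + 1 = 0$. Convexity then forces $g(t) \le 0$ throughout the interval (the graph of $g$ lies below the chord connecting its endpoint values, which is identically zero), closing the induction and establishing the lemma.
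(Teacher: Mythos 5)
The paper does not prove \Cref{lem:points-edges-hypercube}; it cites the edge-isoperimetric inequality as a classical result of Harper, Bernstein, Lindsey, and Hart and uses it as a black box. So there is no in-paper argument to compare your proposal against, but your proof is a correct, self-contained derivation of the stated bound.

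Your argument is the standard dimension-induction proof of the weak edge-isoperimetric inequality: split on the last coordinate, bound cross edges by $\min(a,b)$, invoke the inductive hypothesis on each half, and reduce to the scalar inequality $a\log a + b\log b + 2b \le (a+b)\log(a+b)$ for $0\le b\le a$. The substitution $t=b/(a+b)$ and the convexity argument are both correct: $(1-t)\log(1-t)+t\log t$ is the negative of the binary entropy, hence convex, so $g$ is convex; $g(0)=g(1/2)=0$; and a convex function lies on or below the chord joining its endpoints, forcing $g\le 0$ on $[0,1/2]$. This closes the induction. The only (cosmetic) omission is that you never state a base case, but $n=0$ or $|S|\le 1$ gives no edges and both sides vanish, so nothing is at stake. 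One small point of care worth making explicit: at $b=0$ (equivalently $t=0$) one must take $0\log 0=0$, which is the usual convention and matches the limit, so the convexity argument extends continuously to the endpoint. Overall, this is a clean and correct proof of the cited fact.
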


\Cref{lem:points-edges-hypercube} easily yields the following:

\begin{corollary}\label{lem:lot_of_bad_in_B}
    Suppose that $I \subseteq [n]$ satisfies  $|I|>100\log N$, where as usual $N=|f^{-1}(1)|$. When $\bi \sim I$ uniformly at random, we have  $|F_{\bi}| \leq N/20$ with probability at least $1/2$.
\end{corollary}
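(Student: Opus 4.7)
}
The plan is to apply \Cref{lem:points-edges-hypercube} with $S = f^{-1}(1)$ to bound the total number of monochromatic-$1$ neighbor pairs across all directions, and then to use Markov's inequality to argue that only a small fraction of directions $i \in I$ can have $|F_i|$ exceed $N/20$.

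First, I would observe that $\bigcup_{i \in [n]} F_i$ is precisely the set of ordered pairs $(x,y) \in f^{-1}(1) \times f^{-1}(1)$ with $x$ and $y$ being neighbors in the hypercube (the direction $i$ in which they differ being uniquely determined by $x$ and $y$). Hence the $F_i$'s partition this set, so by \Cref{lem:points-edges-hypercube} applied to $S := f^{-1}(1)$,
\[
\sum_{i \in [n]} |F_i| \;\le\; |f^{-1}(1)| \log |f^{-1}(1)| \;=\; N \log N.
\]
In particular, since $I \subseteq [n]$, we have $\sum_{i \in I} |F_i| \le N\log N$.

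Next, let $B := \{i \in I : |F_i| > N/20\}$. From the bound above and the definition of $B$,
\[
|B| \cdot \frac{N}{20} \;\le\; \sum_{i \in B} |F_i| \;\le\; \sum_{i \in I} |F_i| \;\le\; N \log N,
\]
so $|B| \le 20 \log N$. Since $|I| > 100 \log N$ by hypothesis, we get
\[
\Prx_{\bi \sim I}\bigl[|F_{\bi}| > N/20\bigr] \;=\; \frac{|B|}{|I|} \;\le\; \frac{20 \log N}{100 \log N} \;=\; \frac{1}{5}.
\]
Therefore $\Prx_{\bi \sim I}[|F_{\bi}| \le N/20] \ge 4/5 \ge 1/2$, which establishes the claim.

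No step here looks like a real obstacle: the only subtlety is the bookkeeping check that the $F_i$'s partition the set of ordered neighbor-pairs in $f^{-1}(1)\times f^{-1}(1)$ (so that summing over $i$ gives exactly the quantity controlled by \Cref{lem:points-edges-hypercube}), after which a one-line Markov argument finishes the proof.
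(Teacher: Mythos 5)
Your proof is correct and follows essentially the same approach as the paper: bound $\sum_{i} |F_i|$ via the edge-isoperimetric inequality and then apply a Markov-type counting argument over $I$. The paper phrases this as a proof by contradiction (assuming more than half of $I$ is bad), whereas you argue directly and in fact obtain the slightly stronger bound of $4/5$ rather than $1/2$; the substance is identical.
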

\begin{proof}
    Suppose that $|I|>100\log N$.  Let $I^* \subseteq I$ be the set of $i\in I$ with $|F_i|>N/20$. Assume for a contradiction that we have $|I^*| > |I|/2$. Then we have
\[
    \sum_{i \in I}|F_i| \geq \sum_{i \in I^*}|F_i| \\
    \geq \frac{N |I|}{40} \\
    \geq 2.5N\log(N),
\]
but this contradicts \Cref{lem:points-edges-hypercube}, which states that  $\sum_{i \in I} |F_i| \leq N\log N.$
\end{proof}

Using \Cref{lem:lot_of_bad_in_B}, we can prove that if our estimates $\bp_i$ were good, we reject with high probability if we have too many $*$ in $\tilde{\bd}$:

\begin{lemma}\label{lem:CheckStarsWithAssumption}

Assume that at the beginning of Phase~2, we have  
$$\left | \bp_i - \Prx_{\bz \sim f^{-1}(1)}[\bz_i=1]\right| \leq 0.15,\quad\text{for every $i \in [n]$.}$$
Then the probability of Phase~2 returning a $\tilde{\bd}$ with more than $100\log(N)$ $*$'s is at most $\delta/24$. 
\end{lemma}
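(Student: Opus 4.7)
The plan is to show that whenever $|\unfixed(\tilde{\bd})| > 100 \log N$, Phase~2 rejects with probability at least $1 - \delta/24$, which immediately implies the claim. Two ingredients combine: the edge-isoperimetric inequality (via \Cref{lem:lot_of_bad_in_B}) forces most $*$-coordinates to have few monochromatic edges inside $f^{-1}(1)$, while the quality guarantee on the estimates $\bp_i$ forces these same coordinates to contribute many bichromatic edges in \emph{both} orientations, directly exposing violations of unateness (via \Cref{cor:unate_Some_Empty_Edge}).

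First I set $I = \unfixed(\tilde{\bd})$ and assume $|I| > 100\log N$. Applying \Cref{lem:lot_of_bad_in_B} yields a subset $I^* \subseteq I$ with $|I^*| > 50 \log N$ and $|F_i| \le N/20$ for every $i \in I^*$. Because $\tilde{\bd}_i = *$ is declared only when the empirical bias $\bp_i$ lies in $[0.25, 0.75]$, the hypothesis $|\bp_i - \Pr_{\bz \sim f^{-1}(1)}[\bz_i = 1]| \le 0.15$ yields both $\Pr_{\bz}[\bz_i = 1] \ge 0.1$ and $\Pr_{\bz}[\bz_i = 0] \ge 0.1$ for every $i \in I$. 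For each $i \in I^*$ this combines with the $|F_i|$ bound to give
\[
\Pr_{\bz \sim f^{-1}(1)}\bigl[\{\bz, \bz^{(i)}\} \in \Edge^{1}_i(f)\bigr] \ge \Pr_{\bz}[\bz_i = 1] - \frac{|F_i|}{N} \ge 0.1 - \frac{1}{20} = \frac{1}{20},
\]
and symmetrically $\Pr_{\bz}[\{\bz, \bz^{(i)}\} \in \Edge^{0}_i(f)] \ge 1/20$. So every coordinate in $I^*$ witnesses strictly monotone edges of \emph{both} orientations with constant probability per sample, which for unate $f$ is impossible by \Cref{cor:unate_Some_Empty_Edge}.

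To finish, I have Phase~2 draw $O(\log(1/\delta))$ random indices from $I$ and, for each, take $O(\log(1/\delta))$ samples from $\SAMP(f)$, rejecting if any direction yields both a strictly $0$- and a strictly $1$-monotone edge. A random index lies in $I^*$ except with probability at most $\delta/48$, and for an index in $I^*$ the probability of failing to observe both orientations after $O(\log(1/\delta))$ samples is at most $\delta/48$ by independence and a union bound over the two orientations. A final union bound over these two failure modes yields the claimed $\delta/24$ bound. The main obstacle to anticipate is keeping the Phase~2 query count within the $\tilde{O}(\log(N/\delta))$ overall budget of \Cref{lem:maintechnical}(5); the $O(\log^2(1/\delta))$ queries used here fit comfortably.
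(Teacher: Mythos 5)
Your core argument correctly identifies the two ingredients that make this lemma work: \Cref{lem:lot_of_bad_in_B} (edge isoperimetry) shows that when $|\unfixed(\tilde{\bd})|>100\log N$, a uniformly random index $\bi$ has $|F_{\bi}|\le N/20$ with probability at least $1/2$; and the hypothesis $|\bp_i - \Pr_{\bz}[\bz_i=1]|\le 0.15$ combined with the threshold $\bp_i\in[0.25,0.75]$ for $*$'s gives $\Pr_{\bz}[\bz_i=b]\ge 0.1$ for each $b\in\zo$, whence $\Pr_{\bz}[\{\bz,\bz^{(\bi)}\}\in\Edge^b_{\bi}(f)]\ge 0.1 - |F_{\bi}|/N \ge 1/20$. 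This matches the paper's computation exactly.

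Where you diverge is in the structure of the Phase~2 loop itself, and the discrepancy is not innocuous. You posit a two-level scheme — draw $O(\log(1/\delta))$ indices and, for each, draw $O(\log(1/\delta))$ samples — and amplify each level separately to confidence $1-\delta/48$, for a total of $O(\log^2(1/\delta))$ queries. The actual Phase~2 of \hyperref[algo: IterativeBias]{IterativeBias} is a single flat loop of $4000\lceil\log(1/\delta)\rceil$ iterations, where \emph{each} iteration independently draws one fresh index $\bi$ and just two samples $\bx,\by$, rejecting if $\bx_{\bi}\ne\by_{\bi}$ and $f(\bx^{(\bi)})=f(\by^{(\bi)})=0$. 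A single iteration then rejects with probability at least $\frac{1}{2}\cdot(\frac{1}{20})^2$ (good index, and each of the two samples lands as required), and the flat repetition drives the failure probability down to $\delta/24$ using only $O(\log(1/\delta))$ queries total. The crucial design insight you miss is that one does \emph{not} need any single iteration to succeed with probability $1-\delta/48$; a constant per-iteration success probability suffices, and then one repeats. Your two-level amplification pays a multiplicative $\log(1/\delta)$ price for no gain.

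This matters for your final sentence: the claim that $O(\log^2(1/\delta))$ ``fits comfortably'' in the $\tilde{O}(\log(N/\delta))$ budget of \Cref{lem:maintechnical}(5.) is false in general, since when $\log(1/\delta)\gg\polylog(\log N)$ we have $\log^2(1/\delta)\not=\tilde{O}(\log(N/\delta))$. So your reconstruction, while logically sound for the algorithm you describe, does not satisfy the query bound the surrounding proof of \Cref{thm:unknown-N} actually needs, and does not constitute a proof of the stated lemma about the paper's Phase~2. One other small wording issue: ``a random index lies in $I^*$ except with probability at most $\delta/48$'' is incorrect as written — a single random index fails to lie in $I^*$ with probability up to $1/2$; what you want to say is that \emph{all $O(\log(1/\delta))$ independently drawn indices miss $I^*$} only with probability $\le\delta/48$.
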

\begin{proof}
For every $i\in \unfixed(\tilde{\bd})$ we have $\bp_i \in [0.25, 0.75]$. So, by assumption, for every $i \in {\unfixed(\tilde{\bd})}$, we must have $$\Prx_{\bz \sim f^{-1}(1)}[\bz_i=1] \in [0.1, 0.9].$$

Assume at the beginning of Phase~2 that the set $\unfixed(\tilde{\bd})$ has $|\unfixed(\tilde{\bd})| \geq 100 \log N$; otherwise the desired conclusion clearly holds.
Then if we sample $\bi \sim \unfixed(\tilde{\bd})$ uniformly at random, \Cref{lem:lot_of_bad_in_B} tells us that with probability at least $1/2$  we have that $|F_{\bi}| \leq N/20$.
    Fix such an $\bi$ and consider the event that $\bx,\by\sim \SAMP(f)$ satisfy
    $\bx_{\bi}=0$, $\by_{\bi}=1$ and 
    $f(\bx^{(\bi)})=f(\by^{(\bi)})=0$.

Focusing on the event on $\bx$ first, from above, $\bx \sim \SAMP(f)$ has $\bx_i=0$ with probability at least $0.1$. However by our bound on $|F_{\bi}|$, there are at most $N/20$ pairs $(z,z^{(i)})$ with both end points in $f^{-1}(1)$. So whenever we sample $\bx \sim f^{-1}(1)$ in step~14, with probability at least $1/20$ we have $\bx_{\bi}=0$ and $f(\bx^{(\bi)})=0$.

A similar argument holds for $\by$.
In summary, each loop between lines 12 and 16 rejects with probability at least $(1/2)\cdot (1/20)^2$.
As a result, given that we repeat the loop $\Omega(\log(1/\delta))$ times (with a large enough constant), it rejects with probability at least $1-\delta/24$.
\end{proof}

\subsubsection{Putting it all together, the proof of \Cref{lem:maintechnical}}
\label{sec:puttogether}

In this section, we prove, through a series of lemmas, that each point of \Cref{lem:maintechnical} holds. 
\begin{claim}[Item~(1.) of \Cref{lem:maintechnical}]\label{lem:IterativeBiasNoReject}
  If $f$ is unate, \IterativeBias\ never rejects.
\end{claim}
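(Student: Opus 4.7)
The plan is to observe that the only ``reject'' branch in \IterativeBias lies in its Phase~2 violation-detection loop (the loop analyzed in \Cref{lem:CheckStarsWithAssumption}), and then to verify that this branch fires only upon finding an explicit witness to non-unateness. Since \Cref{cor:unate_Some_Empty_Edge} tells us that for a unate $f$ and every coordinate $i \in [n]$ at least one of $\Edge_i^0(f), \Edge_i^1(f)$ is empty, the rejection condition cannot be met when $f$ is unate, and hence \IterativeBias never rejects.

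Concretely, I would first trace the pseudocode of \IterativeBias to confirm that the initial for-loop constituting Phase~1 has only two decision outcomes per iteration—accept (on line~8, when the estimates $\bp_i,\bp_i'$ agree within $1/20$ for all $i$) or continue to the next iteration or into Phase~2—and that none of these branches rejects. Thus any rejection must occur inside Phase~2.

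I would then examine the Phase~2 rejection condition: the loop samples $\bx,\by \sim \SAMP(f)$ and some direction $\bi \in \unfixed(\tilde{\bd})$, and rejects precisely when $\bx_{\bi}=0$, $\by_{\bi}=1$, and $f(\bx^{(\bi)})=f(\by^{(\bi)})=0$. Because $\bx,\by \in f^{-1}(1)$, unpacking the definitions shows that $\{\bx,\bx^{(\bi)}\} \in \Edge_{\bi}^0(f)$ (since $f(\bx)=1=1-\bx_{\bi}$ and $f(\bx^{(\bi)})=0=1-\bx^{(\bi)}_{\bi}$) and $\{\by,\by^{(\bi)}\} \in \Edge_{\bi}^1(f)$ (since $f(\by)=1=\by_{\bi}$ and $f(\by^{(\bi)})=0=\by^{(\bi)}_{\bi}$). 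This directly contradicts \Cref{cor:unate_Some_Empty_Edge} under the assumption that $f$ is unate, so the rejection branch cannot fire. The argument is purely a syntactic audit of the pseudocode—there is no real technical obstacle, in contrast to the remaining (probabilistic) items of \Cref{lem:maintechnical}.
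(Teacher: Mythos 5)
You have missed a rejection branch. You assert that ``the only `reject' branch in \IterativeBias\ lies in its Phase~2 violation-detection loop'' and that the Phase~1 for-loop ``has only two decision outcomes per iteration---accept \dots\ or continue \dots---and that none of these branches rejects.'' That is not what the pseudocode of \hyperref[algo: IterativeBias]{IterativeBias} says: in Phase~1.2 it runs \hyperref[algo: CheckSamples]{CheckSamples}$(f,\ba,\bS\cup\bT,\delta/(48\ell))$ and \emph{rejects if that subroutine rejects}. Your proposed ``syntactic audit'' would have surfaced this line, but your stated conclusion from that audit is incorrect, so the proof as written does not cover this rejection path.

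To close the gap you must also argue that \hyperref[algo: CheckSamples]{CheckSamples} never rejects when $f$ is unate and $a\in f^{-1}(1)$, which is exactly item~(2.) of \Cref{lem:CheckSamplesNoRej}: \hyperref[algo: CheckSamples]{CheckSamples} rejects only when it finds some index $\bi\in a\,\Delta\, z$ with $f(a^{(\bi)})=f(z^{(\bi)})=0$, which (since $f(a)=f(z)=1$) exhibits a strictly $0$-monotone and a strictly $1$-monotone edge in direction $\bi$ --- impossible for a unate $f$ by \Cref{cor:unate_Some_Empty_Edge}. Your analysis of the Phase~2 rejection condition is correct and matches the paper, so once you add the Phase~1.2 case the argument is complete and in the same spirit as the paper's proof.
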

\begin{proof}
    During Phase~1, \IterativeBias\  rejects only if \CheckSample\ rejects. By \Cref{lem:CheckSamplesNoRej} this can't happen since $f$ is unate and $a \in f^{-1}(1)$. During Phase~2, \IterativeBias\  rejects only if on line~$15$ it found both a $1$-monotone and a $0$-monotone edge. Since $f$ is unate this can never happen. 
\end{proof}

We now turn to the next three items of \Cref{lem:maintechnical}: 
\begin{lemma} Items~(2.), (3.) and (4.) of \Cref{lem:maintechnical} all hold.
\end{lemma}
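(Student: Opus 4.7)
The plan is to reduce the three items to the already-established ``conditional'' versions \Cref{lem:IterativeBiasWithAssumption2}, \Cref{lem:IterativeBiasWithAssumption}, \Cref{lem:CheckStarsWithAssumption}, and \Cref{cor:consistent}, via a coupling argument that exploits the built-in calls to \CheckSample\ inside \IterativeBias. Specifically, in each round $\ell$ of Phase~1's for-loop I will couple the samples $\bS, \bT \sim \SAMP(f)$ used by the actual execution with hypothetical samples $\bS^*, \bT^*$ drawn uniformly from $f_{2N,a}^{-1}(1)$, arranged so that $\bS = \bS^*$ and $\bT = \bT^*$ whenever every drawn point has Hamming distance at most $2N$ from $a$; when any point falls outside this ball, \CheckSample\ rejects with probability at least $1-\delta'$ by \Cref{lem:CheckSamplesNoRej}. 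Taking $\delta'$ to be the value set inside \IterativeBias\ and union-bounding over the at most $O(\log n)$ rounds, I will conclude that the ``bad'' event $\mathsf{Bad}$, defined as ``in some round a drawn sample lies outside the $2N$-ball yet \CheckSample\ does not reject,'' satisfies $\Pr[\mathsf{Bad}] \leq \delta/24$ for the parameter choices inside the algorithm.

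On the complement of $\mathsf{Bad}$, either \CheckSample\ has already rejected---so \IterativeBias\ rejects, and none of the three events in Items~(2)--(4) occurs---or else every round's samples coincide with the hypothetical $\bS^*, \bT^*$, so the trajectory of \IterativeBias\ matches that of the hypothetical execution which draws from $f_{2N,a}^{-1}(1)$ throughout. This coupling immediately lets me replace the ``samples drawn from $f_{2N,a}^{-1}(1)$'' hypothesis of \Cref{lem:IterativeBiasWithAssumption2,lem:IterativeBiasWithAssumption} by the corresponding statement about the actual \IterativeBias, at an additive cost of $\Pr[\mathsf{Bad}] \leq \delta/24$.

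With this reduction set up, the three items can be dispatched in order. For Item~(3), assuming $\rd(f, f_{2N,a}) \leq 0.05$, I union-bound $\mathsf{Bad}$ with the ``bad-estimate'' event of \Cref{lem:IterativeBiasWithAssumption} to conclude that, except with probability at most $\delta/24 + \delta/12 \leq \delta/6$, Phase~2 is reached with every $\bp_i$ satisfying \Cref{eq:pickle}; \Cref{cor:consistent} then forces any returned $\tilde{\bd}$ to be consistent with $d^f$. For Item~(4), I additionally invoke \Cref{clm:ananas} to translate closeness of the $\bp_i$'s to truncated biases into the closeness-to-$f$-biases hypothesis needed by \Cref{lem:CheckStarsWithAssumption}; summing the failure probabilities from $\mathsf{Bad}$, \Cref{lem:IterativeBiasWithAssumption}, and \Cref{lem:CheckStarsWithAssumption} yields the desired $\delta/6$ bound on returning a $\tilde{\bd}$ with more than $100 \log N$ unfixed coordinates. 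For Item~(2), no hypothesis on $\rd(f, f_{2N,a})$ is available, but the coupling still applies: on $\neg \mathsf{Bad}$ the actual and hypothetical executions agree, so \Cref{lem:IterativeBiasWithAssumption2} bounds the probability of accepting on line~8 by $1/20$, giving $\Pr[\text{accept}] \leq 1/20 + \delta/24 \leq 0.1$.

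The main obstacle I anticipate is the book-keeping required to certify $\Pr[\mathsf{Bad}] \leq \delta/24$: this depends on the value of $\delta'$ used in each \CheckSample\ call and on the maximum number of iterations of the outer for-loop. Getting these constants to line up so that each of the three target probabilities lands strictly below $\delta/6$ (rather than being inflated by log factors from the union bound over rounds) requires a careful audit of the algorithm's parameter settings, but is otherwise routine once the coupling framework is in place.
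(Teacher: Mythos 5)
Your proposal is correct and follows essentially the same route as the paper: condition on the event that all samples drawn in Phase~1 lie in the $2N$-ball around $a$ (rejecting otherwise via \hyperref[algo: CheckSamples]{CheckSamples}), bound the probability of this conditioning failing by $\delta/24$ using the geometric decay of $\delta' = \delta/(48\ell)$ over the doubling schedule, and then apply \Cref{lem:IterativeBiasWithAssumption2}, \Cref{lem:IterativeBiasWithAssumption}, \Cref{cor:consistent}, and \Cref{lem:CheckStarsWithAssumption} on the remaining event. The bookkeeping concern you flag is resolved exactly as you anticipate: the $\delta/(48\ell)$ parameter with $\ell \in \{1,2,4,\ldots\}$ yields $\sum_k \delta/(48 \cdot 2^k) = \delta/24$, and the three target bounds $1/20 + \delta/24 \le 0.1$, $\delta/12 + \delta/24 \le \delta/6$, and $\delta/24 + \delta/12 + \delta/24 = \delta/6$ fall out directly.
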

\begin{proof}
    Assume that during the $\ell$-iteration of the for-loop in Phase~1, $\bS$ or $\bT$ contains a point $\bx \in f^{-1}(1)$ with $f_{2N,a}(\bx)=0$, meaning $\absdist(\bx,a) > 2N$. 
    Then, by \Cref{lem:CheckSamplesNoRej}, the algorithm rejects with probability at least $1-\frac{\delta}{48\ell}$ at the beginning of Phase~1.2 Hence, by a union bound over all iterations of the for loop, it follows that probability that the algorithm fails to rejects but  $\bS \cup \bT \not \subseteq f_{2N,a}^{-1}(1)$ is at most:

    $$1-\sum_{k = 0}^{\infty} \frac{\delta}{48} \cdot \frac{1}{2^k} \geq 1-\delta/24.$$

    We now assume $\bS, \bT$ never contain a point $\bx$ with $f_{2N,a}(\bx)=0$. Since $f_{2N,a}^{-1}(1) \subseteq f^{-1}(1)$, this implies that the elements of $\bS,\bT$ are always distributed as i.i.d.~uniformly random samples from $f_{2N,a}^{-1}(1)$. 
    It thus follows from  \Cref{lem:IterativeBiasWithAssumption2} that:
   \begin{itemize}
        \item (Item~(2.) of \Cref{lem:maintechnical})  For any function $f$, the probability \IterativeBias\ accepts $f$ is at most $1/20+\delta/24 \leq 1/10$. 
    \end{itemize}

    Next, by \Cref{lem:IterativeBiasWithAssumption}, we have that with probability at most $\delta/12$, the algorithm reaches Phase~2 and there exists $i \in [n]$ violating \Cref{eq:close_p}. Assuming $\reldist(f,f_{2N,a}) \leq 1/20,$ by \Cref{cor:consistent} we have that:
    \begin{itemize}
        \item(Item~(3.) of \Cref{lem:maintechnical})  For any function $f, a \in \zo^n$ and $ \delta > 0$ if $\rd(f,f_{2N,a}) \leq 1/20$, the probability that \IterativeBias$(\MQ(h), \SAMP(h), a, \delta)$ returns $\tilde{\bd}$ which is not consistent with $d^f$ is at most $\delta/12+\delta/24 \leq \delta/6$.
    \end{itemize}
    Finally, assuming that the algorithm reaches Phase~2 and there are no $i \in [n]$ violating \Cref{eq:close_p}, by \Cref{clm:ananas} we have that for every $i \in [n]$, at the beginning of Phase~2 we have: $$|\bp_i - \Prx_{\bz \sim f^{-1}(1)}[\bz_i=1]| \leq 0.15 \quad {\text{for all~}i \in [n]}.$$
    Using \Cref{lem:CheckStarsWithAssumption} we can conclude that: 
    \begin{itemize}
        \item (Item~(4). of \Cref{lem:maintechnical})   For any function $f, a \in \zo^n$ and $ \delta > 0$, if $\rd(f,f_{2N,a}) \leq 1/20$ then the probability that \IterativeBias$(\MQ(h), \SAMP(h), a, \delta)$ returns $\tilde{\bd}$ with
    \begin{equation}
    \big|\{i\in [n]:\tilde{\bd}_i=*\}\big|\ge 100  \log N \tag{\ref{eq:smallstar}}
    \end{equation}
    is at most $\delta/24+\delta/12+\delta/24 = \delta/6$. \qedhere
    \end{itemize}
\end{proof}

\def\CheckSample{\hyperref[algo: CheckSamples]{CheckSamples}}

\begin{algorithm}[t!]
\caption{IterativeBias}
 \textbf{Input: }$\MQ(f)$ and $\SAMP(f)$ of some function $f$, $a \in f^{-1}(1)$, and a parameter $\delta$. \\
 \textbf{Output: } Either reject or a vector $\tilde{\bd}\in \{0,1,*\}^n$.\\
 \begin{tikzpicture}
\draw [thick,dash dot] (0,1) -- (16.5,1);
\end{tikzpicture}
\begin{algorithmic}[1]\label{algo: IterativeBias}
\Algphase{Phase 1: Estimate the bias of every coordinate}
    \For{each $\ell=1,2,4,8,\ldots, 10 \log n$} 
    \Algphase{\hspace{1cm} Phase 1.1:}
    \State Draw two sets $\bS,\bT$ of $k=4000(2\ell + \lceil\log(1/\delta)\rceil)$ points independently from $\SAMP(f)$.
    \State 
    For each $i\in [n]$, let 
    $\bp_i :=\Prx_{\bz \sim \bS}[\bz_i=1]$ and 
  $\bp_i' :=\Prx_{\bz \sim \bT}[\bz_i=1]$. 
   \vspace{0.2cm}
   \Algphase{\hspace{1cm} Phase 1.2:}
   \State Run \hyperref[algo: CheckSamples]{CheckSamples}$(f,\ba, \bS \cup \bT, {\delta}/({48\ell}))$ and reject if it rejects.
     \If{$|\bp_i-\bp_i'|\le 1/20$ for all $i\in [n]$}
    \State Exit the loop and go to \textbf{Phase~2}.  
    \ElsIf{$\ell=10\log n$}
    \State Halt and accept $f$.
    \EndIf
\EndFor\vspace{0.2cm}

\Algphase{Phase 2: Obtain a $\tilde{\bd}$ with at most $100\log(N)$ stars}
\State Let $\smash{{\tilde{\bd}} \in \zos^n}$ be defined as follows: 
    $$\tilde{\bd}_i=\begin{cases}
         1 & \text{if $\bp_i>0.75$}\\
         0 & \text{if $\bp_i<0.25$}\\
         * & \text{otherwise}
     \end{cases}.$$
     
    \For{$\ell=1$ to $4000\lceil\log (1/\delta)\rceil$}
    \State Draw  $\bi \sim \unfixed(\tilde{\bd}) $ uniformly at random.
    \State Draw two points $\bx,\by\sim \SAMP(f)$.
\State Halt and reject if  $\bx_{\bi}\ne \by_{\bi}$ and $f(\bx^{(\bi)})=f(\by^{(\bi)})=0$. \Comment{$\Edge^0_i(f), \Edge^1_i(f)$ are both $ \neq \emptyset$. }
\EndFor
\State Return $\tilde{\bd}$

\end{algorithmic}
\end{algorithm}

Finally, we prove point 5 of \Cref{lem:maintechnical}:

\begin{claim}
[Item (5.) of \Cref{lem:maintechnical}] 
For any function $f$, with probability at least $1-\delta/6$, \IterativeBias$(\MQ(f), \SAMP(f), a, \delta)$ makes at most $\tilde{O}(\log (N/\delta))$ queries to $\MQ(f)$ and $\SAMP(f)$.
\end{claim}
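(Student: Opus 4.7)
The plan is to bound the total query complexity of \IterativeBias\ by $\tilde{O}(\log(N/\delta))$ with failure probability at most $\delta/6$, by reducing everything to showing that the for-loop in Phase~1 halts (either by rejecting inside \CheckSample, by triggering the exit condition on line~6, or by failing the $\ell = 10\log n$ check) at some iteration $\ell^* = C\log N$ for a sufficiently large universal constant $C$, with probability at least $1-\delta/6$. Each iteration $\ell$ uses $O(\ell + \log(1/\delta))$ sampling queries to build $\bS, \bT$ plus $O(\log(\ell/\delta))$ queries inside \CheckSample, for a per-iteration cost of $O(\ell + \log(\ell/\delta))$. Because the iteration indices grow geometrically, cumulating through $\ell^*$ yields $\tilde{O}(\ell^* + \log(1/\delta))$, and Phase~2 deterministically contributes only $O(\log(1/\delta))$ more queries. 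So halting by $\ell^* = C\log N$ immediately gives the claimed bound, and the remaining task is purely to bound the probability that the loop continues past $\ell^*$.

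To control this, I would analyze iteration $\ell^*$ directly and split on whether $\bS_{\ell^*} \cup \bT_{\ell^*} \subseteq f^{-1}_{2N,\ba}(1)$. If at least one point in $\bS_{\ell^*} \cup \bT_{\ell^*}$ lies at Hamming distance more than $2N$ from $\ba$, then \Cref{lem:CheckSamplesNoRej} applied with error parameter $\delta/(48\ell^*)$ guarantees that \CheckSample\ rejects with probability at least $1 - \delta/(48\ell^*) \ge 1-\delta/12$, halting the loop. In the complementary event, $\bS_{\ell^*}$ and $\bT_{\ell^*}$ are (conditionally) i.i.d.\ uniform over $f^{-1}_{2N,\ba}(1)$. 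By \Cref{claim:trunc-fns-nontrivial-coord-bound}, $f_{2N,\ba}$ has at most $2N^2$ nontrivial coordinates; trivial coordinates automatically satisfy $\bp_i = \bp_i'$, and for the nontrivial ones a standard Chernoff bound shows that each $\bp_i$ (and each $\bp_i'$) deviates from $\Prx_{\bz \sim f^{-1}_{2N,\ba}(1)}[\bz_i = 1]$ by more than $1/40$ with probability at most $2\exp(-k/800)$. A union bound over the $\le 2N^2$ nontrivial coordinates and the two sample sets produces a failure probability of $O(N^2\exp(-k/800))$, which, for $k = \Theta(C\log N + \log(1/\delta))$ with $C$ large enough, is at most $\delta/12$. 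Outside this failure event, the triangle inequality yields $|\bp_i - \bp_i'| \le 1/20$ for every coordinate, so the exit condition triggers at iteration $\ell^*$. Combining the two subcases gives $\Pr[\text{loop continues past }\ell^*] \le \delta/6$.

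The main subtlety I expect is the coupling of the ``out-of-truncation'' and ``in-truncation'' analyses at a \emph{single} chosen iteration $\ell^*$, rather than unioning the \CheckSample\ guarantee across every earlier iteration. A naive union bound over all $O(\log\ell^*)$ prior calls to \CheckSample\ would inflate the error budget by a $\log\ell^*$ factor and force a larger $\ell^*$; handling it instead by fixing the target iteration $\ell^*$ and arguing only about what happens at $\ell^*$ (where a single $\delta/(48\ell^*)$ error term suffices) is what keeps the complexity at $\tilde{O}(\log(N/\delta))$. The rest is routine: combine with the per-iteration accounting and the $O(\log(1/\delta))$ cost of Phase~2 to conclude.
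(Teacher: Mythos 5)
Your proposal is correct and matches the paper's argument essentially step for step: fix the target iteration $\ell^*=\Theta(\log N)$, bound the cumulative Phase~1 cost through $\ell^*$ plus the $O(\log(1/\delta))$ cost of Phase~2, then at iteration $\ell^*$ split on whether $\bS\cup\bT\subseteq f_{2N,a}^{-1}(1)$ and use \Cref{lem:CheckSamplesNoRej} in one case and a Chernoff-plus-union bound over the at most $2N^2$ nontrivial coordinates (via \Cref{claim:trunc-fns-nontrivial-coord-bound}) in the other. The paper takes $\ell^*$ to be the smallest power of two at least $\log N$ rather than $C\log N$, but this is a cosmetic difference.
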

\begin{proof}
 First, note that each call to \CheckSample\ in Phase~1.2 uses $O(\log(\ell/\delta))$ queries by \Cref{lem:CheckSamplesNoRej}, and that Phase~2 uses $O(\log(1/\delta))$ queries. Let $\ell^*=2^j$ be the smallest power of $2$ that is at least $\log(N)$.
Now, if \IterativeBias\ stops at or before round $\ell^*$, then the total number of queries that it makes is at most
$$
O\left(\sum_{i=0}^j 2^i+ \log\left({\frac 1 \delta}\right) + \log\left(\frac{2^i}{\delta}\right)\right)=O\left(2^j+j \cdot \log\left({\frac 1 \delta}\right) +j \cdot \log\left(\frac{1}{\delta}\right)\right)=\tilde{O}\left(\log\left({N }\right)+\log\left({\frac 1 \delta} \right)\right).
$$
So it suffices to show that when \IterativeBias\ reaches $\ell^*$ in the for-loop, it halts with probability at least $1-\delta/6$.

To this end, we note that if $\bS$ or $\bT$ contains any point of distance more than $2N$ from $a$, it follows from \Cref{lem:CheckSamplesNoRej} that \IterativeBias\ halts and rejects in {Phase~1.2} with probability at least $1-\delta/(48\ell^*) \ge 1-\delta/12$.
So it suffice to show that \IterativeBias\ halts with probability at least $1-\delta/12$ conditioned on both $\bS$ and $\bT$ being drawn from $f_{2N,a}^{-1}(1)$.
Hence, in the rest of the argument, we suppose that both $\bS$ and $\bT$ are drawn from $f_{2N,a}^{-1}(1)$.
By \Cref{claim:trunc-fns-nontrivial-coord-bound}, 
we have that in $f_{2N,a}$ there are at most $2N^2$ coordinates with 
$$\Prx_{\bz \sim f_{2N,a}^{-1}(1)}[\bz_i = 1 ] \not \in \zo.$$ Let $\calT$ denote this set of at most $2N^2$ coordinates. 
Using ${\ell^*}\ge \log(N)$,
by a standard Chernoff bound (and choosing a sufficiently large constant in $k=\Omega(\ell + \log(1/\delta))$), we have that for any $i \in \calT$,
  $|\bp_i-\bp_i'|\le 1/20$ with probability at least $1-{(\delta/24N^2)}$. So by a union bound over all $2N^2$ coordinates in $\calT$ it follows that with probability at least $1-\delta/12$ when $\bS$ and $\bT$ are drawn from $f^{-1}_{2N,a}(1)$, either \CheckSample\ will pass the check of 
 {lines~5-6} of \IterativeBias\ and the algorithm will
  and moves to Phase~2, or it halts and rejects because of the call to \CheckSample\ on 
{line~4}. 
  If the algorithm reaches Phase~2, it will halt after $O(\log(1/\delta))$ additional queries. 
\end{proof}

\subsection{\hyperref[alg: Preprocessing]{Preprocessing}: 
Proof of \Cref{lem:preprocessing}}\label{sec:preprocessing}

The \hyperref[alg: Preprocessing]{Preprocessing} algorithm is given in \Cref{alg: Preprocessing}. 

We first prove the first item of \Cref{lem:preprocessing}:

\begin{claim}
[Item~(1.) of \Cref{lem:preprocessing}] 
 If $f$ is unate \hyperref[alg: Preprocessing]{Preprocessing} never rejects.
\end{claim}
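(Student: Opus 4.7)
The plan is structural, mirroring the proof of \Cref{lem:IterativeBiasNoReject}: since \Preprocessing\ is designed to be one-sided, I would verify that every line of the algorithm's pseudocode on which a rejection can occur corresponds either to a subroutine call that is itself one-sided, or to a direct witness of non-unateness. If each such rejection path requires a concrete non-unateness witness, then no rejection can fire when $f$ is unate.

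First, I would enumerate the rejection sites in \Preprocessing. Based on its role (obtaining a nonnegative integer $\bM$ such that $f_{\bM,\tilde{d}}$ approximates $f$ in relative distance while $\bM$ stays $O(\log N)$ with high probability on unate $f$), the natural rejections are (i) rejections propagated from internal calls to \CheckSample\ and (ii) rejections triggered when sampling directly uncovers a pair of edges along a common coordinate $i$ that witnesses a unateness violation. For the first kind, Item~(2) of \Cref{lem:CheckSamplesNoRej} guarantees that whenever $f$ is unate and \CheckSample\ is invoked on an $a \in f^{-1}(1)$ together with a multiset $S \subseteq f^{-1}(1)$, the subroutine always returns without rejecting; so every such internal call is harmless.

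For the second kind, a direct witness of non-unateness takes the form of two edges $e \in \Edge_i^0(f)$ and $e' \in \Edge_i^1(f)$ along the same coordinate $i$. By \Cref{cor:unate_Some_Empty_Edge}, for any unate $f$ and any $i \in [n]$ we have $\Edge_i^{1-d_i^f}(f) = \emptyset$, so no such pair of edges can exist; equivalently, any check of this form is vacuously satisfied. Combining both observations shows that no rejection line of \Preprocessing\ can fire when $f$ is unate, which is the desired conclusion.

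The main thing I would need to verify when writing the full proof is that the enumeration of rejection sites is complete, i.e.\ that \Preprocessing\ does not contain any additional estimation-based rejection (for instance, rejecting if an empirical estimate of $\reldist(f, f_{\bM,\tilde{d}})$ exceeds a threshold) that could accidentally fire on a unate $f$. A careful inspection of the pseudocode in \Cref{alg: Preprocessing} should confirm that any such check is formulated by looking for an explicit witness (e.g.\ a sample $\bx \in f^{-1}(1)$ whose distance from $\tilde{d}$ exceeds a bound, which on unate $f$ is ruled out by \Cref{lem:small-diameter}, or a violating edge ruled out by \Cref{cor:unate_Some_Empty_Edge}), and not a purely statistical threshold that could spuriously trigger. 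Once this is confirmed, the claim follows immediately.
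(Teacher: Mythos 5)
Your proposal is correct and mirrors the paper's argument: rejections in \hyperref[alg: Preprocessing]{Preprocessing} can only occur when \hyperref[algo: ConfirmDirection]{ConfirmDirection} returns $\Yes$, which happens exactly when both $\Edge_i^{1-\tilde{d}_i}(f)$ and $\Edge_i^{\tilde{d}_i}(f)$ are witnessed nonempty for some $i$ --- impossible for unate $f$. One small inaccuracy: \hyperref[alg: Preprocessing]{Preprocessing} contains no calls to \CheckSample, so your speculative rejection site (i) does not actually arise; the only mechanism is the edge-witness one you describe in (ii).
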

\begin{proof}
    The algorithm can only reject if some call to \hyperref[algo: ConfirmDirection]{ConfirmDirection} returns $\Yes$. Note that by \Cref{lem:ConfirmDirection}, \hyperref[algo: ConfirmDirection]{ConfirmDirection}($\MQ(f), \SAMP(f), i, b)$ returns $\Yes$ if and only if the algorithm found an edge in $\Edge_{i}^b(f)$.

    Now, observe that during Phase~2, the algorithm calls \hyperref[algo: ConfirmDirection]{ConfirmDirection} on direction $\bj$ if and only if $\set{\bz^*, (\bz^*)^{{(\bj)}}} \in \Edge_{\bj}^{1-\tilde{d_{\bj}}}(f)$. But if $\Edge_{\bj}^{1-\tilde{d_{\bj}}}(f) \neq \emptyset$, since $f$ is unate, it must be that $\Edge_{\bj}^{\tilde{d_{\bj}}}(f) = \emptyset$ (see \Cref{cor:unate_Some_Empty_Edge}), and thus \hyperref[algo: ConfirmDirection]{ConfirmDirection} always returns $\No$. 

    Similarly, during Phase~3, the algorithm calls  \hyperref[algo: ConfirmDirection]{ConfirmDirection} on direction $i$ if and only if it found that $\{u, u^{(i)}\} \in \Edge^{1-\tilde{d}_i}_i(f)$. Since $f$ is unate, \hyperref[algo: ConfirmDirection]{ConfirmDirection} must always return $\No$.
\end{proof}

\begin{claim}
[Item~(2.) of \Cref{lem:preprocessing}] 
 For any $f$, the probability of \Preprocessing\
returning an $\bM$ with $\reldist(f,\smash{f_{\bM,\tilde{d}}})>\eps/10$ is at most $0.1$.
\end{claim}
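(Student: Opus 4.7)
The plan is to argue the contrapositive: conditional on \Preprocessing\ returning a value $\bM$ (as opposed to rejecting), the probability that $\rd(f, f_{\bM,\tilde{d}}) > \eps/10$ is at most $0.1$. Equivalently, I will show that if the returned $\bM$ does satisfy $\rd(f, f_{\bM,\tilde{d}}) > \eps/10$, then a specific high-probability event driving rejection must have failed.

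First I would reformulate the hypothesis usefully. Since $f_{\bM,\tilde{d}}^{-1}(1) \subseteq f^{-1}(1)$, the condition $\rd(f, f_{\bM,\tilde{d}}) > \eps/10$ is equivalent to the ``far set''
\[
F_{\bM} \;:=\; \bigl\{ x \in f^{-1}(1) : \Delta(x, \tilde{d}) > \bM \bigr\}
\]
having size greater than $(\eps/10)\cdot|f^{-1}(1)|$. In particular, a single draw from $\SAMP(f)$ lies in $F_{\bM}$ with probability strictly more than $\eps/10$.

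Next I would identify, inside \Preprocessing, the batch of $\Theta(1/\eps)$ i.i.d.\ samples drawn from $\SAMP(f)$ that are used to determine $\bM$, together with the structural rule (read off from the algorithm) that $\bM$ is chosen to be at least $\max_{\bz\in\bT}\Delta(\bz,\tilde{d})$ \emph{or} that \Preprocessing\ rejects via a \hyperref[algo: ConfirmDirection]{ConfirmDirection} call on a coordinate witnessed by a far sample. On the event ``\Preprocessing\ returns $\bM$'', this rule forces every sample in $\bT$ to lie outside $F_{\bM}$. A standard Chernoff bound over the $\Theta(1/\eps)$ independent samples then yields that, with the hidden constant chosen sufficiently large, the probability all samples miss $F_{\bM}$, conditioned on $|F_{\bM}| > (\eps/10)|f^{-1}(1)|$, is at most $(1-\eps/10)^{\Theta(1/\eps)} \le 0.1$.

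The main obstacle will be handling the fact that $\bM$ is itself a random output, so $F_{\bM}$ is a random set and the event whose probability we want to bound is a joint one. I plan to handle this by conditioning on the particular returned value of $\bM$: the argument above applies value by value, since under each such conditioning the samples in $\bT$ are forced to lie outside $F_{\bM}$, and this forbidden configuration already has probability at most $0.1$ whenever $|F_{\bM}| > (\eps/10)|f^{-1}(1)|$. A secondary subtlety will be verifying that the interspersed \hyperref[algo: ConfirmDirection]{ConfirmDirection} calls inside \Preprocessing\ cannot silently discard a far sample without either enlarging $\bM$ or forcing rejection; this should follow from the fact (\Cref{lem:ConfirmDirection}) that \hyperref[algo: ConfirmDirection]{ConfirmDirection} is a one-sided test that never falsely rejects, together with the algorithmic invariant that a sample at distance greater than $\bM$ from $\tilde{d}$ is always ``accounted for'' by either inflating $\bM$ or invoking rejection.
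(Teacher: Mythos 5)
Your high-level intuition is right: the samples in $\bS$ should be unlikely to all miss a $>(\eps/10)$-fraction of $f^{-1}(1)$, and this is the driving mechanism. You also correctly identify the obstacle — that $F_{\bM}$ is a random set depending on the very samples you want to apply a product bound to. But your proposed fix does not work. You say you will condition on the returned value of $\bM$ and then argue that all samples miss $F_{\bM}$ with probability at most $0.1$. The difficulty is that $\bM$ is set in the algorithm as $\bM = \max_{\bz\in\bS}\absdist(\bz,\tilde{d})$, so conditioned on $\bM = M$ \emph{every sample in $\bS$ has distance at most $M$ from $\tilde{d}$ with probability exactly $1$} — i.e., they all miss $F_M$ deterministically. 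The conditional probability you would need to bound is $1$, not $\le 0.1$, and the product estimate $(1-\eps/10)^{\Theta(1/\eps)}$ is the \emph{unconditional} probability for a fixed $M$; it does not survive the conditioning that your argument invokes.

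The paper avoids this circularity by replacing the random set $F_{\bM}$ with a deterministic threshold. Sort the points of $f^{-1}(1)$ by $\absdist(\cdot,\tilde{d})$ in decreasing order and let $\tilde{z}$ be the $(\eps N/10)$-th point in this order. The key observation is that if \emph{any} of the $30\lceil 1/\eps\rceil$ samples lands among the top $(\eps N/10)$ points, then $\bM \ge \absdist(\tilde{z},\tilde{d})$, so at most $(\eps/10)N$ points lie strictly farther than $\bM$, giving $\rd(f,f_{\bM,\tilde{d}})\le\eps/10$ (and if the algorithm rejects it returns no $\bM$ at all, which is harmless). Hence the bad event is contained in the \emph{fixed} event that all samples avoid a set of size $\ge \eps N/10$, which has probability $\le(1-\eps/10)^{30/\eps}\le 0.1$ — no conditioning needed. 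To repair your proposal you should swap $F_{\bM}$ for this deterministic top-$(\eps N/10)$ set; once you do, your product estimate applies directly.
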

\begin{proof}
   For any $M\geq 0$, we have that $\smash{f_{M,\tilde{d}}} \subseteq f^{-1}(1)$, and hence
\begin{equation}\label{eq:haha3}\rd(f,f_{M,\tilde{d}})=\frac{|f^{-1}(1)\setminus f_{M,\tilde{d}}^{-1}(1)|}{|f^{-1}(1)|}=\frac{\big | \{z \in f^{-1}(1) : \absdist(z, \tilde{d}) > M\} \big|}{|f^{-1}(1)|}.\end{equation}

   We can order points $x \in f^{-1}(1)$ based on 
   $\absdist(x, \tilde{d})$
   in decreasing order (breaking ties arbitrarily). Let $\tilde{z}$ be the $(\epsilon N/10)$-th {largest} element in such an ordering. 
   The probability that $\bS$ does not contain $\tilde{z}$ or any elements before it is at most 
    $$\big(1-(\epsilon/10)\big)^{30/\epsilon} \leq 0.1.$$
   On the other hand, if $\bS$ contains such a sample, the $\bM$ picked on line 1 would make sure that at most $\epsilon/10$ fraction of $x \in f^{-1}(1)$ have 
   $$\absdist(x, \tilde{d}) \leq \bM.$$ 
   In such a case we have by \Cref{eq:haha3} that $\rd(f,f_{M,\tilde{d}}) \leq \epsilon/10$.    
\end{proof}

\begin{claim}
[Item~(3.) of \Cref{lem:preprocessing}] 
If $\tilde{d}$ is consistent with $d^f$, then the probability of \hyperref[alg: Preprocessing]{Preprocessing} returning a value $\bM> 2\log N$ is at most $\delta/{6}$. 
\end{claim}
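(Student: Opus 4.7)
}
Based on the analysis of Item~(2), it appears that \Preprocessing\ fixes $\bM$ to be (essentially) the maximum value of $\absdist(\bx, \tilde{d})$ over a sample set $\bS$ of size $O(1/\eps)$ drawn from $\SAMP(f)$, possibly refined by ConfirmDirection-based checks on the far-away samples. My plan is therefore to show that when $\tilde{d}$ is consistent with $d^f$, no sample drawn during \Preprocessing\ has Hamming distance greater than $2\log N$ from $\tilde{d}$, except with probability $\delta/6$.

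The first step is a structural claim: if $\tilde{d}$ is consistent with $d^f$, then
$|\{z \in f^{-1}(1) : \absdist(z,\tilde{d}) > 2\log N\}|$ is small (say, at most $(\eps \delta / C) N$ for an appropriate constant $C$ depending on $|\bS|$ and any additional confirmation rounds). Consistency gives $\Pr_{\bz \sim f^{-1}(1)}[\bz_i \neq \tilde{d}_i] < 0.4$ for every $i \in \fixed(\tilde{d})$ via \Cref{def:dirVector}. The plan is to convert this coordinate-wise bound into a tail bound on $\absdist(\bz,\tilde{d})$. Given a suitable structural claim, the second step is a direct union bound over samples: if no sample in $\bS$ lies at distance more than $2\log N$ from $\tilde{d}$, then in particular the maximum (and hence $\bM$, as determined by the algorithm) is at most $2\log N$.

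The \textbf{main obstacle} is the first step: the coordinates of $\bz$ under the uniform distribution on $f^{-1}(1)$ are not independent, so Chernoff does not apply, and the naive Markov bound $\Ex[\absdist(\bz,\tilde{d})] \le 0.4 |\fixed(\tilde{d})|$ is $\Theta(n)$, which is far too weak. I expect the right approach is a counting argument that leverages both \emph{(i)} the per-coordinate bound $|\{z : z_i \neq \tilde{d}_i\}| \le 0.4 N$ and \emph{(ii)} the global bound $|f^{-1}(1)| = N$. One concrete attempt would be to apply the edge-isoperimetric inequality of \Cref{lem:points-edges-hypercube} or a closely related vertex-isoperimetric bound: if too many points of $f^{-1}(1)$ were at distance more than $2\log N$ from $\tilde{d}$, the resulting subset would require more than $N$ vertices to support the implied number of intra-set neighbors, a contradiction with $|f^{-1}(1)|=N$.

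Finally, if \Preprocessing\ performs additional verification (through \hyperref[algo: ConfirmDirection]{ConfirmDirection} in its later phases, as suggested by the proof of Item~(1)), I would handle those phases separately by noting that \hyperref[algo: ConfirmDirection]{ConfirmDirection} only rejects after finding an actual violating edge, and consistency of $\tilde{d}$ with $d^f$ ensures (via \Cref{lem:BU}) that such checks succeed with controlled probability. The overall failure probability would be summed by a union bound over all $O(1/\eps)$ samples and the constantly many verification rounds, yielding the desired $\delta/6$ bound after absorbing constants.
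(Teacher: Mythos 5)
The central structural claim you propose in step one is false, and this sinks the whole plan. Consistency of $\tilde{d}$ with $d^f$ only gives per-coordinate marginal bounds ($\Prx_{\bz\sim f^{-1}(1)}[\bz_i\neq\tilde d_i]<0.4$ for $i\in\fixed(\tilde d)$); it does \emph{not} imply that most of $f^{-1}(1)$ lies within Hamming distance $2\log N$ of $\tilde d$. Concretely, take $\tilde d=0^n$ and let $f^{-1}(1)$ consist of $N$ strings of Hamming weight $0.3n$ whose supports are spread out so that each coordinate carries roughly a $0.3$ fraction of them. Then $\tilde d$ is consistent with $d^f$, yet \emph{every} satisfying point is at distance $0.3n\gg 2\log N$ from $\tilde d$. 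No isoperimetric or counting argument can rescue a claim that is simply false. The deeper issue is that you are treating \hyperref[alg: Preprocessing]{Preprocessing}'s output $\bM$ as if it were just the sample maximum of $\absdist(\bz,\tilde d)$, and trying to upper-bound that maximum. But when $f$ is far from unate the sample maximum \emph{will} often exceed $2\log N$, and the claim is still supposed to hold.

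The paper's argument instead shows that whenever the initial $\bM$ (the sample maximum) exceeds $2\log N$, the algorithm \emph{rejects} with probability at least $1-\delta/6$ rather than returning that large $\bM$. If $\bM>2\log N$, the ``upper shadow'' set $\bT=\{y:\bz^*\preceq_{\tilde d}y\}$ has size $2^{\bM}>N^2\ge 2N$, so since $|f^{-1}(1)|=N$ at least half of $\bT$ is zero; Phase~3 therefore finds some $\by^*\in\bT$ with $f(\by^*)=0$ except with probability $\le\delta/12$. Binary search between $\bz^*$ and $\by^*$ then produces a concrete edge in $\Edge_i^{1-\tilde d_i}(f)$ for some fixed coordinate $i$, and since $\tilde d_i=d^f_i$, \Cref{lem:ConfirmDirection} guarantees that \hyperref[algo: ConfirmDirection]{ConfirmDirection}$(\cdot,i,\tilde d_i)$ returns $\Yes$ with probability $\ge 0.9$ in each of its $5\lceil\log(1/\delta)\rceil$ repetitions, so the algorithm rejects except with probability $\le\delta/12$; a union bound gives $\delta/6$. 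You did correctly intuit (in your last paragraph) that consistency makes \hyperref[algo: ConfirmDirection]{ConfirmDirection} succeed, but you deployed that fact as a bound on spurious rejection rather than as the mechanism that forces the algorithm to reject when $\bM$ is too large — the logic runs in the opposite direction from what you wrote.
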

\begin{proof}
Assume that on line~$1$, the algorithm obtains $\bM > 2\log N$. This implies that $|\bT| > N^2$ and hence $|\bT| \geq 2N$. Using $|\bT| \geq 2N$, we now show that if Phase~2 does not already reject, the third phase rejects with high probability.

    Given that $N=|f^{-1}(1)|$, whenever we sample $\by \sim \bT$, with probability at least $1/2$ we have $f(\by)=0$.  Hence, the probability that for every $\by$ drawn on {line~10} we have {$f(\by)=1$} is at most $2^{-5\lceil\log(1/\delta)\rceil} \leq \delta/12.$
    
    Assume that some draw on line~10 gives a point $\by^*$ with $f(\by^*)=0$. After the binary search between $\bz^*$ and $\by^*$, the algorithm will find an $i \in [n]$ with $\tilde{d}_i \neq *$ on {line $15$}.  Since $\tilde{d}_i=d^f_i$, by \Cref{lem:ConfirmDirection}, we have \hyperref[algo: ConfirmDirection]{ConfirmDirection}($\MQ(f),\SAMP(f), i, \tilde{d}_i$), returns $\Yes$ with probability at least $0.9$. 
    
    Hence, the probability that some call to \hyperref[algo: ConfirmDirection]{ConfirmDirection}($\MQ(f),\SAMP(f), i, \tilde{d}_i$) returns $\Yes$ during {lines~16-18} is at least $1-(0.1)^{5\lceil\log(1/\delta)\rceil} \geq 1-\delta/12$. It follows by a union bound that the probability the algorithm rejects when $\bM > 2\log(N)$ is at least $1-2\cdot \delta/12 \geq 1-\delta/6$. 
\end{proof}

\begin{claim}[Item~(4.) of \Cref{lem:preprocessing}] 
   For any $f$ and $\tilde{d}$ that is consistent with $d^f$, the probability  that \hyperref[alg: Preprocessing]{Preprocessing}$(\MQ(f), \SAMP(f), \tilde{d},\eps,\delta)$ uses more than $$O((1/\eps)+ \log(N)+\log(1/\delta))$$ queries is at most $\delta/6$. 
\end{claim}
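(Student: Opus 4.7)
The strategy is to decompose the query cost of \Preprocessing\ phase by phase and to condition on the single high-probability event already established in Item~(3) of this lemma, namely $\bM \le 2\log N$. By Item~(3), this event fails with probability at most $\delta/6$, so it suffices to show that, conditional on $\bM \le 2\log N$, the algorithm always uses at most $O((1/\eps)+\log N+\log(1/\delta))$ queries; the overall failure probability for the query bound then follows immediately from a union bound that is entirely subsumed by the $\delta/6$ slack.

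First, I would dispatch Phase~1, where \Preprocessing\ draws a set of $O(1/\eps)$ samples from $\SAMP(f)$ and selects $\bM$ as a suitable order statistic of the values $\absdist(\bx,\tilde{d})$ over those samples. This contributes $O(1/\eps)$ calls to $\SAMP(f)$ and no membership queries. Next, I would account for Phase~2's preparatory work (drawing the base point $\bz^*$ and examining it together with neighbors along the at most $\absdist(\bz^*,\tilde{d})\le \bM$ coordinates in $\fixed(\tilde{d})\cap(\bz^*\Delta\tilde{d})$), which under the conditional event costs $O(\bM)=O(\log N)$ membership queries.

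The main work lies in the Phase~3 repeat loop. Each of its $5\lceil\log(1/\delta)\rceil$ iterations draws a sample and issues $O(1)$ queries to test whether $f(\by)=0$, contributing $O(\log(1/\delta))$ queries in aggregate. The first iteration that produces a $\by^*$ with $f(\by^*)=0$ triggers one binary search along a path from $\bz^*$ to $\by^*$; since both endpoints lie within Hamming distance $\bM$ of $\tilde{d}$, the path has length $O(\bM)=O(\log N)$, so this binary search costs $O(\log\bM)=O(\log\log N)=O(\log N)$ queries. The binary search is performed at most once: the coordinate $i$ it identifies is then reused across the remaining iterations, each of which merely invokes \hyperref[algo: ConfirmDirection]{ConfirmDirection}$(\MQ(f),\SAMP(f),i,\tilde{d}_i)$ at $O(1)$ queries by \Cref{lem:ConfirmDirection}, adding $O(\log(1/\delta))$ to the total.

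The step I expect to require the most care is precisely this last one: verifying, by inspecting the pseudocode of \Preprocessing, that the binary search is executed only once (so that its $O(\log N)$ cost does not get multiplied by $\log(1/\delta)$, which would blow up the bound to $\tilde{O}(\log N\cdot \log(1/\delta))$). Once this structural fact is confirmed, summing the three phases yields $O((1/\eps)+\log N+\log(1/\delta))$ queries whenever $\bM\le 2\log N$, and combining with the $\delta/6$ probability that the conditioning event fails gives the claimed bound.
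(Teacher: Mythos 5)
Your high-level decomposition of the query cost by phase is sound (and the observation that the binary search runs only once, so its cost is not multiplied by $\log(1/\delta)$, is exactly the right structural fact to isolate). However, there is a genuine gap in the conditioning step that your argument hinges on.

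You condition on the event $\bM \le 2\log N$ and assert that Item~(3) already establishes that this event fails with probability at most $\delta/6$. But that is not what Item~(3) says: Item~(3) bounds the probability that \hyperref[alg: Preprocessing]{Preprocessing} \emph{returns} a value $\bM > 2\log N$, not the probability that the variable $\bM$ is \emph{set} to a value larger than $2\log N$ on line~1. The two are very different, precisely because the algorithm can reject instead of returning. In fact $\Pr[\bM > 2\log N]$ can be close to $1$: consider an $f$ (far from unate) all of whose satisfying assignments are at Hamming distance $\Theta(n)$ from $\tilde{d}$; then line~1 sets $\bM = \Theta(n)$ with certainty, yet Item~(3) is vacuously satisfied because the algorithm overwhelmingly rejects rather than returns. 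The danger your argument does not rule out is that the algorithm sets a large $\bM$, proceeds all the way to the Phase~3 binary search (costing $\log\bM = \omega(\log N)$ queries), and only \emph{then} rejects; this satisfies Item~(3) while violating the claimed query bound. The paper's proof avoids this by establishing a different, sharper statement not found in Item~(3): if $\bM \ge 2N$ (the threshold at which $\log\bM$ could exceed $O(\log N)$), then with probability at least $1-\delta/6$ the algorithm already rejects \emph{in Phase~2}, i.e., before ever performing the binary search. This is what really caps the contribution of the binary search at $O(\log N)$ with the required probability; it requires the counting argument that when $\bz^*$ has $\ge 2N$ coordinates disagreeing with $\tilde{d}$, at most half of the corresponding neighbors $(\bz^*)^{(j)}$ can land in $f^{-1}(1)$, after which \hyperref[algo: ConfirmDirection]{ConfirmDirection} succeeds with constant probability. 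This argument is missing from your proposal.

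As a minor point, your account of Phase~2's cost ($O(\bM)$ membership queries from ``examining $\bz^*$ together with its neighbors along at most $\bM$ coordinates'') misreads the pseudocode: Phase~2 is a repeat loop of $10\lceil\log(1/\delta)\rceil$ iterations, each making $O(1)$ queries including the constant-query \hyperref[algo: ConfirmDirection]{ConfirmDirection}, so it costs $O(\log(1/\delta))$ queries independent of $\bM$. This misreading happens not to break your final tally under the (unjustified) conditioning, but you should correct it.
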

\begin{proof} 
Recall that each call to \hyperref[algo: ConfirmDirection]{ConfirmDirection} uses $O(1)$ queries (See \Cref{lem:ConfirmDirection}). 
So \hyperref[alg: Preprocessing]{Preprocessing} uses $O(1/\epsilon + \log(1/\delta))$ queries during its first two phases. 
In phase 3, the binary search between $\bz^*$ and $\by^*$ uses at most 
$$\log\Big(\big|\{j \mid \bz_j^* \neq \by_j^* \}\big|\Big)=O(\log(\bM)) \text{ queries}$$ queries. 
So Phase~3 uses $O(\log(1/\delta)+\log(\bM))$ queries. 

Hence, to prove the lemma, it suffices to show that if $\bM \geq 2N$, the algorithm reaches Phase~3 with probability at most $\delta/6$. Assume $\bM \geq 2N$ on line~1. 
Since $\bM \geq 2N$, we can have $f((\bz^*)^{(j)})=1$ for at most half the coordinates $j \in \left( \tilde{d} \Delta \bz^* \right)$. 
Hence, whenever we draw $\bj$ on line~$4$, with probability at least $1/2$ we have $f((\bz^*)^{(\bj)})=0$, which means that we run \hyperref[algo: ConfirmDirection]{ConfirmDirection}($\MQ(f),\SAMP(f), \bj, \tilde{d}_j)$. 
Since $\tilde{d}_{\bj}=d^f_{\bj}$ by \Cref{lem:ConfirmDirection}, ConfirmDirection returns $\Yes$ with probability at least $0.9$. 
Hence each iteration of lines 
{4 through 8}
rejects with probability at least $1/3$. 
Hence, the probability the algorithm doesn't reject during lines 
{3 through 9}
is at most $(1-1/3)^{10\lceil\log(1/\delta)\rceil}\leq \delta/6$.
\end{proof}

\begin{algorithm}[t!]
\caption{Preprocessing} \label{alg: Preprocessing}
 \textbf{Input: } $\MQ(f)$ and $\SAMP(f)$ of some function $f$,  a vector $\tilde{d} \in \{0,1,*\}^n$ and two parameters $\eps$ and $\delta$. \\
 \textbf{Output: } Either reject or a nonnegative integer $\bM$.\\
 \begin{tikzpicture}
\draw [thick,dash dot] (0,1) -- (16.5,1);
\end{tikzpicture}
\begin{algorithmic}[1]\label{algo: Preprossing}
\Algphase{Phase 1:}
    \State Draw $30\lceil1/\epsilon\rceil$ points $\bS$ from $\SAMP(f)$ and set 
    $$\bM :=\max_{\bz \in \bS}\left[\absdist(z, \tilde{d})\right].$$
\State Let $\bz^* \in \bS$ be a point that achieves the maximum above and let  
    $\bT \subseteq \{0,1\}^n$ be 
    $$\bT:=\big\{ y \in \zo^n:\bz^* \preceq_{\tilde{d}} y \big\},$$ 
where ``$\bz^* \preceq_{\tilde{d}} y$'' means $\bz^*_i \leq y_i$ for $i: \tilde{d}_i = 1$; $\bz^*_i \geq y_i$ for $i:\tilde{d}_i = 0$; $\bz_i^*=y_i$ for $i:\tilde{d}_i=*$.
\Algphase{Phase 2: Ensure $\bM \leq 2N$}
\RepeatN{$10\lceil\log(1/\delta)\rceil$}
\State Draw $\bj \sim z\absdist \tilde{d}$ and query $f((\bz^*)^{(\bj)})$.
\If{$f((\bz^*)^{(\bj)})=0$} \Comment{We have $\set{\bz^*, (\bz^*)^{(\bj)} \in \Edge_{\bj}^{1-\tilde{d}_{\bj}}(f)}$}
\State Run \hyperref[algo: ConfirmDirection]{ConfirmDirection}$(\MQ(f), \SAMP(f), \bj, \tilde{d}_{\bj})$. 
\State Halt and Reject if it returns $\Yes$. 
\EndIf
\End

\Algphase{Phase 3: Ensure $\bM  \leq 2\log(N)$}
\State Draw $5\lceil\log(1/\delta)\rceil$ many $\by \sim \bT$.
\If{$f(\by)=1$ for all drawn points}
\State Return $\bM$
\EndIf
\State Let $\by^*$ be the first point that was drawn with $f(\by^*)=0$.

        \State Perform Binary Search between $\bz^*$ and $\by^*$ to find an  
        
        $$ i\in \bz^*\Delta \by^* \text{ and $u$ such that } u_i \neq \tilde{d}_i, f(u)=1, f(u^{(i)})={0}. $$
\Comment{We have $\set{u,u^{(i)}} \in \Edge_{i}^{1-\tilde{d}_i}(f)$}
       \RepeatN{$5\lceil\log(1/\delta)\rceil$}
       \State Run \hyperref[algo: ConfirmDirection]{ConfirmDirection}$(\MQ(f), \SAMP(f), i, \tilde{d}_i)$;
       Halt and reject if it returns $\Yes$.
       \End
\State Return $\bM$.

\end{algorithmic} 
\end{algorithm}


\section{Lower bounds} \label{sec:lower}

\subsection{A two-sided non-adaptive lower bound: Proof of \Cref{thm:firstlowerbound}} 
\label{sec:mainfirstlowerbound}

Our proof is an adaptation of the $\tilde{\Omega}(\log N)$ lower bound from \cite{CDHLNSY2024}  for two-sided non-adaptive monotonicity testing in the relative-error model.  We will establish the following result:

\begin{restatable}{theorem}{nonadaptivelb}\label{thm:two-sided-non-adaptive-lb-hehe}
Let $\smash{N={n\choose 3n/4}}$. There is a constant $\eps_0>0$ such that any two-sided, non-adaptive algorithm for testing whether a function $f$ with $|f^{-1}(1)|=\Theta(N)$ is unate or has
  relative~distance~at least $\eps_0$ from unate functions must make $\tilde{\Omega}(\log N)$ queries.
\end{restatable}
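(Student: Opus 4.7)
The plan is to follow the framework of \cite{CDHLNSY2024}, using Yao's principle to exhibit two distributions $\Dyes$ and $\Dno$ over two-layer functions supported on layers $3n/4$ and $3n/4+1$ of $\{0,1\}^n$, such that (i) a random draw from $\Dyes$ is unate with probability $1$ and has $|f^{-1}(1)|=\Theta(N)$ with $N=\binom{n}{3n/4}$, (ii) a random draw from $\Dno$ satisfies $\rd(f,\UNATE)\ge\eps_0$ with high probability and also has $|f^{-1}(1)|=\Theta(N)$, and (iii) no deterministic non-adaptive tester with $o(\log N / \polylog(N))$ queries can distinguish the two distributions with constant advantage. For $\Dyes$ I would reuse the same distribution that \cite{CDHLNSY2024} used in their monotonicity lower bound (a monotone two-layer function whose nontrivial values are obtained by a random balanced construction on the two middle layers); this is already unate because it is monotone.

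The main novelty, as flagged by the paper, is the design and analysis of $\Dno$. The CDHLNSY no-functions produce enough anti-monotone edges to certify far-from-monotone, but they are not obviously far from the unate functions obtained by flipping all variables. My plan is to define $\Dno$ as a ``symmetrized'' version of the CDHLNSY no-distribution, where the random perturbations on layer $3n/4$ and on layer $3n/4+1$ are chosen so that along every coordinate $i \in [n]$ one gets, with high probability, $\Omega(N)$ strictly $1$-monotone edges \emph{and} $\Omega(N)$ strictly $0$-monotone edges. Combined with \Cref{corollary:CS16}, this will let me argue that for every candidate orientation $s\in\{0,1\}^n$, the function $g(x)=f(x\oplus s)$ is far from monotone, and hence $f$ is far from unate in relative distance, after a union bound over the $2^n$ possible orientations. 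The union bound forces each coordinate to be ``violated in both directions with exponentially small failure probability,'' which drives the choice of parameters inside the randomized construction.

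Once $\Dno$ is in place, the indistinguishability argument can be imported essentially wholesale from \cite{CDHLNSY2024}. In the relative-error non-adaptive model the tester first receives i.i.d.\ $\SAMP(f)$ samples and then makes one parallel batch of $\MQ(f)$ queries. Because both $\Dyes$ and $\Dno$ are two-layer and have essentially identical marginals on layer $3n/4+1$ (the dominant layer of $f^{-1}(1)$), the sampling oracle contributes negligible distinguishing information; and because the nontrivial structure lives on layers $3n/4$ and $3n/4+1$, any single $\MQ$ query at a point outside these two layers yields an oracle response that is identical across the two distributions. The only remaining source of distinguishability is queries made at points in the critical layers that are ``close'' to sampled points, and a birthday/collision bound as in \cite{CDHLNSY2024} shows that with $o(\log N)$ samples and queries one essentially never hits such a coincidence, giving total variation distance $o(1)$ between the transcripts induced by $\Dyes$ and $\Dno$.

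The main obstacle I expect is verifying (b), namely that a random $f\sim\Dno$ is far from every unate function in relative distance. The union bound over the $2^n$ orientations $s$ is not a priori affordable: to beat it one needs the per-$s$ failure probability to be $2^{-\omega(n)}$, which in turn requires the no-distribution to build in $\Omega(N)$ unavoidable violations along each coordinate for each potential $s$ with exponentially strong concentration. I would handle this by an independent-coordinate construction (e.g.\ embedding independent random perturbations inside coordinate-disjoint blocks of layer $3n/4$), so that the min of the two edge counts along each direction concentrates sharply via Chernoff, and then controlling the total $\ell_1$ violation count uniformly in $s$ via a chaining/Hoeffding argument; this is the step where the modifications relative to \cite{CDHLNSY2024} are nontrivial, while every other piece of the proof is a routine adaptation.
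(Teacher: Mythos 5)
Your high-level framework is correct (Yao's principle, two-layer functions, reuse of the CDHLNSY $\Dyes$, a modified $\Dno$, and porting the indistinguishability argument), and you correctly identified that the crux is to design a $\Dno$ whose draws are constant-far from unate in relative distance. However, your plan for that crux is the wrong approach, and the concern you yourself flag about the $2^n$ union bound is precisely where it breaks.

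You propose to show far-from-unate by proving, for each orientation $s \in \{0,1\}^n$, that $g(x)=f(x\oplus s)$ is far from monotone, and then union-bounding over the $2^n$ choices of $s$. This union bound is not needed and should be avoided. The paper's argument lower-bounds the distance to unateness \emph{deterministically} once the edge counts are known: for each coordinate $k$, collect the strictly $1$-monotone bichromatic edges $X_k^+$ and the strictly $0$-monotone bichromatic edges $X_k^-$ between layers $3n/4$ and $3n/4+1$ on which $\Gamma_T$ selects a single term. Because these edges are vertex-disjoint across all $k$ and both signs, any unate function must agree with one orientation per coordinate, so one must flip at least $\sum_{k}\min(|X_k^+|,|X_k^-|)$ points of $f$; this quantity lower-bounds $\absdist(f,\UNATE)$ for \emph{every} unate target simultaneously, without enumerating orientations. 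The probabilistic work is then only to show, with constant probability over $\bT\sim\calE$ and $\HH\sim\Eno$, that this sum is $\Omega(N)$. Your route requires per-$s$ failure probability $2^{-\omega(n)}$ and drives you toward an ad hoc ``coordinate-disjoint block'' construction with chaining, which is unlike the paper's $\Dno$ (a clean change of $\Eno$ to: each $\hh_i$ is with probability $1/2$ a random dictatorship and with probability $1/2$ a random anti-dictatorship) and is unlikely to close without substantial extra work.

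A second, related miscalibration: you aim for $\Omega(N)$ strictly $1$-monotone \emph{and} $\Omega(N)$ strictly $0$-monotone edges along \emph{every} coordinate. That is far stronger than necessary. The paper only shows that $\Omega(n)$ coordinates are ``heavy'' with $|X_k|=\Omega(N)$, and that for each heavy $k$ both $|\bX_k^+|$ and $|\bX_k^-|$ are $\Omega(N/n)$ with high probability; summing the mins over heavy coordinates already gives $\Omega(N)$, which is all that's needed for $\reldist(\boldf,\UNATE)=\Omega(1)$. Insisting on $\Omega(N)$ per coordinate over-constrains the construction and would make the Hoeffding step needlessly delicate. I recommend replacing the union-bound-over-$s$ plan with the vertex-disjoint-edge counting argument, after which the dictatorship/anti-dictatorship modification of $\Eno$ plus a concentration bound (Hoeffding over the $L$ terms, using that each $T_i$ has a large range so no single $m_i$ dominates) completes the far-from-unate lemma, and the rest of the CDHLNSY indistinguishability argument carries over with only the small change to the ratio in their Claim~27.
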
 

\cite{CDHLNSY2024} explains how, using a simple embedding argument, the monotonicity analogue of \Cref{thm:two-sided-non-adaptive-lb-hehe} implies the monotonicity analogue of \Cref{thm:firstlowerbound}.  Those arguments go through unchanged for unateness; we refer the reader to Remark~16 and the discussion following it in \cite{CDHLNSY2024} for details.
Thus, in what follows, we focus on establishing \Cref{thm:two-sided-non-adaptive-lb-hehe}.

\subsubsection{A useful class of functions for lower bounds: Two-layer functions} \label{sec:two-layer}

We say $f:\{0,1\}^n\rightarrow \{0,1\}$ is a \emph{two-layer} function if
\begin{equation} \label{eq:two-layer}
    f(x) = \begin{cases} 1 &\textrm{if } \Vert x \Vert_1 > 3n/4 + 1 \\ 
    0 &\textrm{if } \Vert x \Vert_1 < 3n/4 
    \end{cases}
\end{equation}
All functions used in our lower bound proofs are two-layer functions.

As is explained in \cite{CDHLNSY2024}, one reason why two-layer functions are helpful for lower bound arguments because the $\SAMP(f)$ oracle is not needed for two-layer testing --- it can be simulated, at low overhead, with the $\MQ$ oracle.  We refer the reader to Claim~15 of \cite{CDHLNSY2024} for details.

\subsubsection{Distributions $\Dyes$ and $\Dno$}\label{subsec:dist1}

The main difference between the current lower bound argument and the non-adaptive two-sided lower bound argument for relative-error monotonicity testing given in \cite{CDHLNSY2024} is in the $\Dn$ distribution.
To describe the construction of the yes- and no- distributions $\Dyes$ and $\Dno$, we begin by describing a distribution $\calE$ that plays an important role in both the $\Dyes$ and $\Dno$ constructions. 
$\calE$ is uniform over all tuples  $T = (T_i: i\in [L])$, where $L:=(4/3)^n$ and $T_i:[n]\rightarrow [n]$.
Equivalently, to draw a tuple $\TT\sim \calE$,
  for each $i\in [L]$, we sample a random $\TT_i$ by sampling 
  each $\bT_i(k)$ independently and
uniformly (with replacement) from $[n]$ for each $k\in [n]$.
We will refer to $T_i$ as the $i$-th term in $T$ and 
  $T_i(k)$ as the $k$-th variable of $T_i$.
Given a term $T_i:[n]\rightarrow [n]$, we abuse the notation to 
  use $T_i$ to denote the Boolean function over $\{0,1\}^n$ with
  $T_i(x)=1$ if $x_{T_i(k)}=1$ for all $k\in [n]$ and $T_i(x)=0$ otherwise.
  (So $T_i$ is a conjunction, which is why we refer to it as a term as mentioned above.)

A function $\ff \sim \Dy$ is drawn as described in \cite{CDHLNSY2024}. More explicitly, this is done using the following procedure:
\begin{flushleft}\begin{enumerate}
\item Sample $\TT\sim\calE$ and use it to define the multiplexer map $\bGamma = \bGamma_{\TT} \colon \{0, 1\}^n \to [L] \cup \{ 0^*, 1^* \}$:
$$
\bGamma_\TT(x) =\begin{cases}
0^* & T_i(x)=0\ \text{for all $i\in [L]$} \\
1^* & T_i(x)=1\ \text{for at least two different $i\in [L]$}\\
i & T_i(x)=1\ \text{for a unique $i\in [L]$}.
\end{cases}
$$ 
\item Sample  $\HH = (\hh_{i} \colon i \in [L])$ from $\Ey$, where each $\hh_{i} \colon \{0, 1\}^n \to \{0,1\}$ is independently (1) with probability $2/3$, a random dictatorship Boolean function, i.e., $\hh_{i}(x) = x_k$ with $k$ sampled  uniformly at random from $[n]$; and (2) with probability $1/3$, the all-$0$ 
    function.
\item Finally, with $\TT$ and $\HH$  in hand, $\ff =f_{\TT,\HH}\colon \{0, 1\}^n \to \{0, 1\}$ is defined as follows: $\ff(x) = 1$ if $\Vert x \Vert_1 > 3n/4 + 1$; $\ff(x) = 0$ if $\Vert x \Vert_1 < 3n/4 $; if $ \Vert x \Vert_1\in \{3n/4,3n/4+1\}$, we have
\[ 
\ff(x) = \begin{cases} 0 & \bGamma_{\TT}(x) = 0^* \\
						1 & \bGamma_{\TT}(x) = 1^* \\
						\hh_{\bGamma(x)}(x) & \text{otherwise (i.e., $\bGamma_{\TT}(x) \in [L]$)} \end{cases} 
\]
\end{enumerate}\end{flushleft}

While the $\Dy$ distribution is as in \cite{CDHLNSY2024}, the $\Dn$ distribution that we now describe is new.
A function $\ff \sim\Dn$ is drawn using the same procedure, with the exception that $\HH = (\hh_{i} \colon i \in [L])$ is drawn from $\En$ (instead of $\Ey$): Each $\hh_i:\{0,1\}^n\rightarrow \{0,1\}$ is (1) with probability $1/2$ a random dictatorship Boolean function $\hh_i(x) = x_{\bk}$ with $\bk$ drawn uniformly from $[n]$; and (2) with 
    probability $1/2$, a random anti-dictatorship Boolean function $\hh_i(x) = \overline{x_{\bk}}$ with $\bk$ drawn uniformly from $[n]$.  Note that every function $f$ in the support of either $\Dy$ or $\Dn$ is a two-layer function as defined in \Cref{eq:two-layer}.

The following lemma is proved in \cite{CDHLNSY2024} (see Lemma~17 of that paper):

\begin{lemma}\label{lem:mono}
Every function in the support of $\Dy$ is monotone (and hence unate).
\end{lemma}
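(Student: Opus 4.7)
The plan is to show that every $\ff$ in the support of $\Dy$ is (coordinate-wise) monotone, which immediately implies unateness. It suffices to verify, for every hypercube edge $(x,y)$ with $y = x^{(i)}$ and $x_i = 0$, that $\ff(x) \le \ff(y)$. By the layered definition of $\ff$ in step~(3) of the construction, only edges with $\|x\|_1 = 3n/4$ and $\|y\|_1 = 3n/4 + 1$ require real work; on every other edge, at least one of $\ff(x), \ff(y)$ is pinned to $0$ or $1$ by its Hamming weight alone, so the inequality is immediate.

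The key lemma I would first establish is a monotonicity property of $\bGamma_\TT$: since each term $T_j$ is a conjunction (hence a monotone Boolean function), for $y \succeq x$ we have $\bGamma_\TT(x) = 1^* \Rightarrow \bGamma_\TT(y) = 1^*$ (any two terms satisfied at $x$ remain satisfied at $y$), and $\bGamma_\TT(x) = j \in [L] \Rightarrow \bGamma_\TT(y) \in \{j, 1^*\}$ (the term $T_j$ stays satisfied at $y$, and if some additional term also becomes satisfied then the value is $1^*$). I would then case-split on $\bGamma_\TT(x)$ for $x$ in layer $3n/4$ and $y = x^{(i)}$ in layer $3n/4+1$. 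If $\bGamma_\TT(x) = 0^*$, then $\ff(x) = 0 \le \ff(y)$. If $\bGamma_\TT(x) = 1^*$, then $\bGamma_\TT(y) = 1^*$ and $\ff(x) = \ff(y) = 1$. If $\bGamma_\TT(x) = j \in [L]$ and $\hh_j$ is the all-$0$ function, then $\ff(x) = 0 \le \ff(y)$. The only subtle subcase is $\bGamma_\TT(x) = j \in [L]$ with $\hh_j(x) = x_k$ a dictator: then $\ff(x) = x_k$, and either $\bGamma_\TT(y) = 1^*$ (so $\ff(y) = 1 \ge x_k$) or $\bGamma_\TT(y) = j$ (so $\ff(y) = \hh_j(y) = y_k \ge x_k$, using $y \succeq x$).

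There is no genuine obstacle in the argument; the construction of $\Dy$ is crafted so that every building block---the conjunctions $T_j$, the dictators $x_k$, and the all-$0$ function---is itself monotone, and the multiplexer rules $0^* \mapsto 0$ and $1^* \mapsto 1$ glue these pieces together in a monotonicity-preserving way. The only place requiring a moment's care is the possible transition from $\bGamma_\TT(x) = j$ to $\bGamma_\TT(y) = 1^*$ under a dictator $\hh_j$, which is handled simply by observing that the output $1$ forced by $1^*$ dominates any dictator value.
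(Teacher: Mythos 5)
Your proof is correct. The paper itself does not supply a proof of this lemma --- it simply cites Lemma~17 of \cite{CDHLNSY2024} --- so there is no in-paper argument to compare against, but your direct verification is exactly the kind of argument one would expect that reference to contain. Your reduction to middle-layer edges is right (all other edges are trivially monotone because at least one endpoint's value is forced by its Hamming weight), your monotonicity lemma for $\bGamma_\TT$ correctly exploits that each $T_j$ is a conjunction and hence a monotone function, and your four-way case split on $\bGamma_\TT(x)$ covers everything, with the only delicate transition ($\bGamma_\TT(x)=j$, $\hh_j$ a dictator, $\bGamma_\TT(y)=1^*$) handled correctly since the forced output $1$ dominates any dictator value.
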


We now show that a constant fraction of functions in $\Dn$ are constant-far from unate:
\begin{lemma}\label{lem:nonmono}
A function $\ff \sim \Dn$ satisfies $\reldist(\ff,\calC_{\textsc{unate}})=\Omega(1)$ with probability $\Omega(1)$. 
\end{lemma}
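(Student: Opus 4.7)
The plan is to show that with constant probability over $\ff\sim\Dn$: (i) $|\ff^{-1}(1)|=\Theta(N)$, and (ii) $\dist(\ff,\mathcal{M}_r)\ge cN$ for every orientation $r\in\{0,1\}^n$ and some absolute constant $c>0$, where $\mathcal{M}_r:=\{g:g(x\oplus r)\text{ is monotone}\}$. Together these give $\reldist(\ff,\UNATE)=\Omega(1)$, as every unate function lies in some $\mathcal{M}_r$. For (i) I would use standard binomial estimates: the layers $\|x\|_1>3n/4+1$ contribute a deterministic $\Theta(N)$ satisfying assignments, and the middle-layer contribution concentrates around its $\Theta(N)$ expectation.

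For the main part (ii), I would fix $r\in\{0,1\}^n$ and, writing $k_i\in[n]$ for the variable of $\hh_i$, define the set of \emph{same-term bad edges}
\[
E_r(\ff):=\big\{\{y,y^{(k_i)}\}:i\in[L],\ y_{k_i}=0,\ \bGamma_\TT(y)=\bGamma_\TT(y^{(k_i)})=i,\ \hh_i\text{ is mismatched to }r_{k_i}\big\},
\]
where $\hh_i$ is \emph{mismatched to} $r_{k_i}$ iff either ($\hh_i$ is a dictator and $r_{k_i}=1$) or ($\hh_i$ is an anti-dictator and $r_{k_i}=0$). A direct case analysis on the four possibilities verifies that every edge in $E_r(\ff)$ belongs to $\Edge_{k_i}^{1-r_{k_i}}(\ff)$, so any $g\in\mathcal{M}_r$ must disagree with $\ff$ on at least one endpoint of each such edge. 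The crucial observation is that the edges of $E_r(\ff)$ are pairwise vertex-disjoint: across distinct terms this follows from $\bGamma_\TT^{-1}(i)\cap\bGamma_\TT^{-1}(i')=\emptyset$ for $i\ne i'$, and within a single term the edges involve distinct ``$y$'' endpoints (all satisfying $y_{k_i}=0$). Hence $E_r(\ff)$ is a matching in the hypercube and $\dist(\ff,\mathcal{M}_r)\ge|E_r(\ff)|$.

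The remaining work is to lower bound $|E_r(\ff)|$ by $\Omega(N)$ uniformly over $r$. A linearity-of-expectation calculation using the Poisson-like structure $Lp=1$ (with $p=(3/4)^n$ the probability a random term covers a given middle-layer point) together with $\Pr[\hh_i\text{ mismatched to }r_{k_i}]=1/2$ (independently of $r$), yields $\Exp[|E_r(\ff)|]=\Omega(N)$. The main obstacle is concentrating $|E_r|$ tightly enough to survive a $2^n$-fold union bound over $r$. Conditioning on $\TT$, write $|E_r|=\sum_{i\in[L]}X_i^{(r)}$, where each summand depends only on $(k_i,\mathrm{type}_i)$ and is therefore independent across $i$. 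A Chernoff bound shows the image size $|\mathrm{image}(\TT_i)|$ concentrates around $n(1-1/e)$ and gives $\min_i|\mathrm{image}(\TT_i)|\ge 0.3n$ with probability $1-2^{-\Omega(n)}$; each summand is then bounded by $W_{\max}\le\binom{0.7n}{0.45n}$, which is exponentially smaller than $N=\binom{n}{3n/4}$. Bernstein's inequality then yields $\Pr[|E_r|<\tfrac12\Exp[|E_r|\mid\TT]]\le\exp(-\Omega(N/W_{\max}))=\exp(-2^{\Omega(n)})$, handily beating $2^{-n}$. Combining with (i) concludes that $\reldist(\ff,\UNATE)=\Omega(1)$ with constant probability.
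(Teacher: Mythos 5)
Your proof takes a genuinely different route from the paper's, and the overall strategy is sound, but there is a quantitative gap in the concentration step that you should address.

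\paragraph{Comparison with the paper's approach.}
The paper avoids your $2^n$-fold union bound over orientations $r$ entirely. It defines, for each coordinate $k$, the sets $X_k^+$ and $X_k^-$ of strictly monotone (resp.\ strictly anti-monotone) same-term bad edges in direction $k$, observes that $\{X_k^+,X_k^-\}_{k\in[n]}$ form a vertex-disjoint collection, and notes that \emph{any} unate $g$ must flip at least $\min(|X_k^+|,|X_k^-|)$ points among direction-$k$ edges. This gives the single deterministic bound $\absdist(f,\UNATE)\ge\sum_k\min(|X_k^+|,|X_k^-|)$, which is then shown to be $\Omega(N)$ for $\Omega(1)$ of draws. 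Crucially, the only union bound needed over $\HH$ is over the $\Theta(n)$ ``heavy'' coordinates, for which a Hoeffding bound with tail $\exp(-2^{\Omega(\sqrt n)})$ is vast overkill. Your approach instead fixes $r$, counts disagreements with $\mathcal{M}_r$ directly, and must then survive a union bound over all $2^n$ choices of $r$, which forces a tail of strength $o(2^{-n})$ rather than $o(1/n)$. Both can work, but the paper's formulation sidesteps the heavy lifting.

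\paragraph{The gap.}
Your threshold of $0.3n$ for $\min_i|\mathrm{image}(\TT_i)|$ does not survive the union bound over the $L=(4/3)^n$ terms. Writing $Y_i=n-|\mathrm{image}(\TT_i)|$ for the number of empty bins (a sum of negatively associated indicators with mean $\approx n/e$), the sharpest Chernoff estimate gives
\[
\Pr\big[|\mathrm{image}(\TT_i)|<0.3n\big]=\Pr[Y_i>0.7n]\le e^{-nD(0.7\,\|\,1/e)}\approx e^{-0.227n},
\]
while $L=e^{n\ln(4/3)}\approx e^{0.288n}$. Since $0.227<0.288$, the union bound is vacuous. You need a smaller threshold: either a constant $\alpha n$ with $\alpha$ small enough that $D(1-\alpha\,\|\,1/e)>\ln(4/3)$ (e.g.\ $\alpha=0.1$ works comfortably), or simply use the paper's threshold $\sqrt n$, which gives per-term failure probability $\exp(-\Omega(n\log n))$ and still yields $W_{\max}\le 2^{-\Omega(\sqrt n)}N$, hence a Bernstein tail of $\exp(-2^{\Omega(\sqrt n)})$, easily beating $2^{-n}$.

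\paragraph{A notational slip.}
With your definition $\mathcal{M}_r=\{g:g(x\oplus r)\text{ monotone}\}$, a function $g\in\mathcal{M}_r$ is non-decreasing in coordinate $k$ when $r_k=0$ and non-increasing when $r_k=1$; hence the ``forbidden'' edges for $\mathcal{M}_r$ in direction $k$ are $\Edge_k^{r_k}(\ff)$, not $\Edge_k^{1-r_k}(\ff)$. Your case analysis of ``mismatched'' is consistent with $\Edge_k^{r_k}$, so the substance is right, but the exponent should be $r_{k_i}$.

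With the threshold lowered, the rest of the argument---vertex-disjointness of $E_r(\ff)$, the independence of the conditional expectation $\Exp[|E_r|\mid\TT]$ from $r$, and $|\ff^{-1}(1)|=\Theta(N)$ from the two fixed upper layers---all go through.
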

\begin{proof}
We start with some definitions. 

Fix a $T$.
For each $k\in [n]$ we write $X_k$ to denote the 
  following set of edges $(x,x^{(k)})$ such that 
\begin{enumerate}
\item $\|x\|_1=3n/4$ and $x_k=0$ (so $\|x^{(k)}\|_1=(3n/4)+1$);
\item $\Gamma_T(x)=\Gamma_T(x^{(k)})=i$ for some $i\in [L]$.
\end{enumerate}
In addition, after $H$ is fixed, we define $X_k^+$ to be
  the following subset of $X_k$:
\begin{enumerate}
\item $\|x\|_1=3n/4$  and $x_k=0$;
\item $\Gamma_T(x)=\Gamma_T(x^{(k)})=i$ for some $i\in [L]$;
\item $h_i(x)=x_k$ (and thus, $f_{T,H}(x)=0$ and 
  $f_{T,H}(x^{(k)})=1$).
\end{enumerate}
Similarly we define $X_k^-$ to be the following subset of $X_k$:
\begin{enumerate}
\item $\|x\|_1=3n/4$  and $x_k=0$;
\item $\Gamma_T(x)=\Gamma_T(x^{(k)})=i$ for some $i\in [L]$;
\item $h_i(x)=\overline{x_k}$ (and thus, $f_{T,H}(x)=1$ and 
  $f_{T,H}(x^{(k)})=0$).
\end{enumerate}
We note that after $T$ and $H$ are fixed, $\{X_k^+,X_k^-:k\in [n]\}$ are vertex-disjoint edges.
Given that $X_k^+$ contains monotone bichromatic edges and $X_k^-$ contains anti-monotone bichromatic edges, 
  at least
$$
\sum_{k\in [n]} \min\left(|X_k^+|,|X_k^-|\right)
$$
changes need to be made on $f_{T,H}$ to make it unate.
Given that trivially $|f_{T,H}^{-1}(1)|=O(N)$, it suffices to show with probability 
$\Omega(1)$ over $\bT\sim \calE$ and $\HH\sim \En$,
  we have 
$$
\sum_{k\in [n]} \min\left(|\bX_k^+|,|\bX_k^-|\right)
\ge \Omega(N).
$$

To this end, we start by proving the following claim about $\bT\sim \calE$:

\begin{claim}\label{claim:useagain}
With probability at least $\Omega(1)$, $\bT\sim \calE$ satisfies \begin{enumerate}
\item For every $i\in [L]$, the range of $\bT_i$ has size at least $\sqrt{n}$; and
\item We have 
$\sum_{k\in [n]}|\bX_k|\ge \Omega(nN).$
\end{enumerate}
\end{claim}
\begin{proof}
We work on the two bad events and apply a union bound.

For each $\bT_i$, the probability of $|\bT_i|\le \sqrt{n}$ is at most 
$$
\sum_{j\le \sqrt{n}} {n\choose j}\cdot \left(\frac{j}{n}\right)^n\le \sqrt{n}\cdot {n\choose \sqrt{n}}\cdot \left(\frac{1}{\sqrt{n}}\right)^n\le \exp\left( {-\Omega(n\log n)}\right).
$$
So the probability is $o_n(1)$ after applying a union bound over all $L$ terms.

It suffices to show that the second event happens with 
  probability at least $\Omega(1)$. To this end, it suffices to show that the expectation of $\sum_k |\bX_k|$ is at least $\Omega(nN)$ given that the sum is always bounded from above by $nN$. For this purpose, it suffices to work on any fixed edge $(x,x^{(k)})$ with
  $\|x\|_1=3n/4$ and $x_k=0$ and show that it is in $\bX_k$ with probability $\Omega(1)$ over $\bT\sim \calE$ and then apply linearity of expectation.

This probability is given by
$$
L\cdot \left(\frac{3}{4}\right)^n\cdot 
\left(1-\left(\frac{3n/4+1}{n}\right)^n\right)^{L-1}=\Omega(1),
$$
using the choice of $L=(4/3)^n$.
This finishes the proof of the claim.
\end{proof}

Fix a $T$ that satisfies the claim above.
We say an $X_k$, $k\in [n]$, is \emph{heavy} if $|X_k|=\Omega(N)$.
Using $\sum_k |X_k|\ge \Omega(nN)$ and the fact that every $|X_k|$ is trivially at most $N$, we have that the number of heavy $k\in [n]$ is at least $\Omega(n)$.
The lemma follows from the next claim about
  $\HH\sim \En$:

\begin{claim}
With probability at least $1-o_n(1)$, we have for every heavy $k\in [n]$:  $$|\bX_k^+|,|\bX_k^-|\ge \Omega(N/n).$$
\end{claim}
\begin{proof}
We focus on $\bX_k^+$; the case with $\bX_k^-$ is symmetric. 

 Consider the size of $\bX_k^+$
  as a function over $\bh_1,\ldots,\bh_L$ drawn from $\En$ one by one.
Then there are fixed nonnegative integers $m_1,\ldots,m_L$ such that $\sum_{i\in [L]} m_i=|X_k|=\Omega(N)$ and 
  $|\bX_k^+|$ 
  is a sum of independent random variables, where 
  each random variable is $m_i$ with probability $1/(2n)$
  and $0$ otherwise. 
Given that every  $T_i$ has at least $\sqrt{n}$ distinct variables, we have for every $i\in [L]$ that
$$
m_i\le {n-\sqrt{n}\choose 3n/4-\sqrt{n}}\le 2^{-\Omega(\sqrt{n})}\cdot N,
$$
and thus, 
$$
\sum_{i\in [L]} m_i^2 \le 2^{-\Omega(\sqrt{n})} N\cdot \sum_{i\in [L]} m_i=2^{-\Omega(\sqrt{n})} N\cdot |X_k|.
$$
Given that the  expectation of $|\bX_+^k|$ is $\mu:=|X_k|/2n=\Omega(N/n)$, we apply Hoeffding’s inequality:
$$
\Pr\left[\mu -|\bX_k^+|\ge \mu/2\right]\le \exp\left(-\Omega\left(\frac{\mu^2}{\sum_{i} m_i^2}\right)\right)=\exp\left(-2^{\Omega(\sqrt{n})}\right).
$$
The claim then follows from a union bound over all heavy $k\in [n]$ and both $\bX_k^+$ and $\bX_k^-$.
\end{proof}

The lemma follows directly from these two claims.
\end{proof}

\subsubsection{Sketch of the proof of \Cref{thm:two-sided-non-adaptive-lb-hehe} using \Cref{lem:mono,lem:nonmono}}

Our \Cref{lem:nonmono}, establishing that ``no-functions'' drawn from $\Dno$ have constant probability of being constant-far from unate, plays the role that Lemma~18 plays in \cite{CDHLNSY2024} (establishing that ``no-functions'' drawn from the $\Dno$ of that paper have constant probability of being constant-far from monotone).  The arguments which follow Lemma~18 of \cite{CDHLNSY2024} to establish the monotonicity lower bound apply, virtually unchanged, to establish the unateness lower bound in our context.
The only change which is needed is as follows:  in the proof of Claim~27 of \cite{CDHLNSY2024}, in place of Equation~8 of \cite{CDHLNSY2024} which states that
\[
\frac{\frac{2}{3}\cdot \frac{|A_{i,0}|}{n}+\frac{1}{3}}{\frac{2}{3}\cdot \frac{|A_{i,1}|}{n}}=
\frac{(2/3)\cdot |A_{i,0}|+(n/3)}{(2/3)\cdot |A_{i,1}|},   
\]
we now instead have that
\begin{equation} \label{eq:newratio}
\frac{\frac{1}{2}\cdot \frac{|A_{i,0}|}{n}+\frac{1}{2}\cdot \frac{|A_{i,1}|}{n}}{\frac{2}{3}\cdot \frac{|A_{i,1}|}{n}}=
\frac{(1/2)\cdot |A_{i,0}|+(1/2)\cdot |A_{i,1}|}{(2/3)\cdot |A_{i,1}|}.
\end{equation}
(The changes in the numerator above are easily seen to follow from the changes in how a draw of $\boldf \sim \Dno$ is generated in our context versus in \cite{CDHLNSY2024}.) 
The new expressions in the numerator propagate through to give corresponding new expressions in the analogues of Equations~(9) and (10) of \cite{CDHLNSY2024} in the obvious way, and that is the only change that is required to complete the proof of \Cref{thm:two-sided-non-adaptive-lb-hehe}.

\section*{Acknowledgements} \label{sec:ack}

D.P., K.P., and Y.Z.\ started working on this project for a course on Sublinear Algorithms taught by Sofya Raskhodnikova \cite{Raskhodnikova25}. Part of this work was carried out while visiting the Simons Institute for the Theory of Computing at UC Berkeley.

\begin{flushleft}
\bibliographystyle{alpha}
\bibliography{allrefs}

\newcommand{\etalchar}[1]{$^{#1}$}
\begin{thebibliography}{CDL{\etalchar{+}}24b}

\bibitem[BCP{\etalchar{+}}17]{BCPRS17}
Roksana Baleshzar, Deeparnab Chakrabarty, Ramesh Krishnan~S. Pallavoor, Sofya Raskhodnikova, and C.~Seshadhri.
\newblock A lower bound for nonadaptive, one-sided error testing of unateness of boolean functions over the hypercube.
\newblock abs/1706.00053, 2017.

\bibitem[BCP{\etalchar{+}}20]{BCPRS20}
Roksana Baleshzar, Deeparnab Chakrabarty, Ramesh Krishnan~S. Pallavoor, Sofya Raskhodnikova, and C.~Seshadhri.
\newblock Optimal unateness testers for real-valued functions: Adaptivity helps.
\newblock {\em Theory of Computing}, 16(3):1--36, 2020.

\bibitem[Ber67]{Bernstein67}
Arthur~J. Bernstein.
\newblock Maximally connected arrays on the $n$-cube.
\newblock {\em SIAM J. Appl. Math.}, 15(6):1485--1489, 1967.

\bibitem[BMPR16]{BMPR16}
Roksana Baleshzar, Meiram Murzabulatov, Ramesh Krishnan~S. Pallavoor, and Sofya Raskhodnikova.
\newblock Testing unateness of real-valued functions.
\newblock {\em CoRR}, abs/1608.07652, 2016.

\bibitem[CDH{\etalchar{+}}25]{CDHLNSY2024}
X.~Chen, A.~De, Y.~Huang, S.~Nadimpalli, R.~Servedio, and T.~Yang.
\newblock {Relative error monotonicity testing}.
\newblock In {\em {Proc. ACM-SIAM Symposium on Discrete Algorithms (SODA)}}, 2025.

\bibitem[CDL{\etalchar{+}}24a]{CDLNS24}
Xi~Chen, Anindya De, Yuhao Li, Shivam Nadimpalli, and Rocco Servedio.
\newblock {Mildly Exponential Lower Bounds on Tolerant Testers for Monotonicity, Unateness, and Juntas}.
\newblock In {\em {SODA 2024}}, 2024.

\bibitem[CDL{\etalchar{+}}24b]{CDLNS23a}
Xi~Chen, Anindya De, Yuhao Li, Shivam Nadimpalli, and Rocco~A. Servedio.
\newblock Testing intersecting and union-closed families.
\newblock In {\em Proceedings of the 15th Innovations in Theoretical Computer Science Conference (ITCS)}, 2024.

\bibitem[CPPS25a]{rel-error-conj-DL}
Xi~Chen, William Pires, Toniann Pitassi, and Rocco~A. Servedio.
\newblock {Relative-Error Testing of Conjunctions and Decision Lists}.
\newblock In {\em 52nd International Colloquium on Automata, Languages, and Programming (ICALP 2025)}, pages 52:1--52:18, 2025.

\bibitem[CPPS25b]{rel-error-junta}
Xi~Chen, William Pires, Toniann Pitassi, and Rocco~A. Servedio.
\newblock {Testing Juntas and Junta Subclasses with Relative Error}.
\newblock In {\em Proc. 38th Annual Conference on Learning Theory (COLT 2025)}, 2025.

\bibitem[CS13]{CS13a}
Deeparnab Chakrabarty and C.~Seshadhri.
\newblock A $o(n)$ monotonicity tester for boolean functions over the hypercube.
\newblock In {\em Proceedings of the 45th ACM Symposium on Theory of Computing}, pages 411--418, 2013.

\bibitem[CS16]{CS16}
Deeparnab Chakrabarty and C.~Seshadhri.
\newblock A $\widetilde{O}(n)$ non-adaptive tester for unateness.
\newblock {\em CoRR}, abs/1608.06980, 2016.

\bibitem[CW19]{CW19}
Xi~Chen and Erik Waingarten.
\newblock Testing unateness nearly optimally.
\newblock In {\em Proceedings of the 51th ACM Symposium on Theory of Computing}, 2019.

\bibitem[CWX17a]{CWX17stoc}
Xi~Chen, Erik Waingarten, and Jinyu Xie.
\newblock {Beyond Talagrand functions: new lower bounds for testing monotonicity and unateness}.
\newblock In {\em Proceedings of the 49th Annual {ACM} {SIGACT} Symposium on Theory of Computing (STOC)}, pages 523--536, 2017.

\bibitem[CWX17b]{CWX17focs}
Xi~Chen, Erik Waingarten, and Jinyu Xie.
\newblock Boolean unateness testing with $\tilde{O}(n^{3/4})$ adaptive queries.
\newblock In {\em Proceedings of the 58th Annual {IEEE} Symposium on Foundations of Computer Science (FOCS)}, pages 868--879, 2017.

\bibitem[GGL{\etalchar{+}}00]{GGLRS}
Oded Goldreich, Shafi Goldwasser, Eric Lehman, Dana Ron, and Alex Samordinsky.
\newblock Testing monotonicity.
\newblock {\em Combinatorica}, 20(3):301--337, 2000.

\bibitem[Har64]{Harper64}
Larry~H. Harper.
\newblock Optimal assignments of numbers to vertices.
\newblock {\em SIAM J. Appl. Math.}, 12(1):131--135, 1964.

\bibitem[Har76]{Hart76}
Sergiu Hart.
\newblock A note on the edges of the $n$-cube.
\newblock {\em Disc. Math.}, 14:157--163, 1976.

\bibitem[HP24]{HP24}
Ishay Haviv and Michal Parnas.
\newblock Testing intersectingness of uniform families.
\newblock In {\em Proceedings of RANDOM 2024}, volume 317 of {\em LIPIcs}, pages 35:1--35:15. Schloss Dagstuhl - Leibniz-Zentrum f{\"{u}}r Informatik, 2024.

\bibitem[KS16]{KhotShinkar16}
Subhash Khot and Igor Shinkar.
\newblock An {\textasciitilde}o(n) queries adaptive tester for unateness.
\newblock In {\em Approximation, Randomization, and Combinatorial Optimization. Algorithms and Techniques (APPROX/RANDOM)}, pages 37:1--37:7, 2016.

\bibitem[Lin64]{Lin64}
J.~H. Lindsey.
\newblock Assignment of numbers to vertices.
\newblock {\em Amer. Math. Monthly}, 71:508--516, 1964.

\bibitem[LW18]{LW18}
Amit Levi and Erik Waingarten.
\newblock Lower bounds for tolerant junta and unateness testing via rejection sampling of graphs.
\newblock abs/1805.01074, 2018.

\bibitem[Ras25]{Raskhodnikova25}
Sofya Raskhodnikova.
\newblock {CS 599: Sublinear Algorithms (Spring 2025)}.
\newblock \url{https://cs-people.bu.edu/sofya/sublinear-course/}, 2025.
\newblock Retrieved on 10 October, 2025.

\end{thebibliography}
\end{flushleft}

\appendix


\section{Proofs of \Cref{theo:mono} and 
\Cref{thm:correctnessOfUnbiasedTest}}\label{appendix:testing}

\subsection{The proof of \Cref{theo:mono}}
\label{sec:monotonicity-in-G}

\begin{algorithm}[H]
\caption{BiasedTest}\label{algo: BiasedTest}
\vspace{0.15cm}
 \textbf{Input: }$\MQ(f)$ and $\SAMP(f)$ of some function $f$, a vector $\tilde{d} \in \{0,1,*\}^{n}$ and two parameters $M \in \mathbb{N}, \epsilon' \in (0,1)$.  \\ 
 \begin{tikzpicture}
\draw [thick,dash dot] (0,1) -- (16.5,1);
\end{tikzpicture}
\begin{algorithmic}[1]
\State If $\fixed(\tilde{d})=\emptyset$, return. 
    \RepeatN{ $10\lceil M/\epsilon'\rceil$ } 
    \State Draw $\bz \sim \SAMP(f)$. 
    \If{$\bz \Delta \tilde{d}=\emptyset$ (i.e., there is no $i \in \fixed(\tilde{d})$ with $\bz_i \neq \tilde{d}_i$)}
    \State Go to the next execution of the loop.
    \EndIf
    \State Uniformly sample $\bi \sim \bz \Delta \tilde{d}$ 
    to get an edge $\{\bz, \bz^{(\bi)}\}$.
    \If{$f(\bz^{(\bi)})=0$}  \Comment{$\set{\bz, \bz^{(\bi)}} \in \Edge_{\bi}^{1-\tilde{d_{\bi}}}(f)$}
    \State Run \hyperref[algo: ConfirmDirection]{ConfirmDirection}$(\MQ(f), \SAMP(f), \bi, \tilde{d}_{\bi})$. 
    \State Halt and reject $f$ if it returns $\Yes$. 
    \EndIf 
   
    \End
\State Return.
\end{algorithmic}
\end{algorithm}

\mono*

\begin{proof}
The bound on the query complexity is clear -- \hyperref[algo: BiasedTest]{BiasedTest} has $O(M/\varepsilon')$ iterations and makes constantly many queries in each iteration (the query complexity of \hyperref[algo: ConfirmDirection]{ConfirmDirection} is constant). 

We now prove correctness.
Note that the algorithm can only reject if it finds $\{\bz, \bz^{\bi}\} \in \Edge_{\bi}^{1-\tilde{d_{\bi}}}(f)$ in line~8, and then \hyperref[algo: ConfirmDirection]{ConfirmDirection}($\MQ(f),\SAMP(f), \bi, \tilde{d}_{\bi})$ returns $\Yes$ in line~9.
However by \Cref{lem:ConfirmDirection} the algorithm \hyperref[algo: ConfirmDirection]{ConfirmDirection} returns $\Yes$ if and only if it finds an edge in $\Edge_{\bi}^{\tilde{d_{\bi}}}(f)$. Therefore, the algorithm rejects only when it has found edges in both $\Edge_{\bi}^{1-\tilde{d_{\bi}}}(f)$ and $\Edge_{\bi}^{1-\tilde{1-d_{\bi}}}(f)$ for some $i\in [n]$ with $\tilde{d_i}\in \zo$, which violate the definition of unateness.

Next, we prove that the algorithm rejects with high probability if $\tilde{d}$ is consistent with $d^f$ and
\[
    \sum_{i \in \fixed(\tilde{d})}\left| \Edge^{1-\tilde{d}_i}_i \left( f_{M, \tilde{d}} \right) \right|\ge \eps' |f^{-1}(1)|.
\]

Since $\tilde{d}$ is consistent with $d^f$, for any $i \in \fixed(\tilde{d})$, we have $\tilde{d}_i = d^{f}_i$.
By \Cref{lem:ConfirmDirection}, \hyperref[algo: ConfirmDirection]{ConfirmDirection}$(\MQ(f), \SAMP(f), \bi, \tilde{d}_{\bi})$ returns $\Yes$ with probability at least $0.9$ for such $i$, in which case the algorithm rejects.
So it suffices to prove \hyperref[algo: ConfirmDirection]{ConfirmDirection} is called during some iteration with probability at least $\epsilon'/M$. 
Let $\mathcal{F} = \bigcup_{\substack{i \in \fixed(\tilde{d})}} \Edge^{1-\tilde{d}_i}_i \left( f_{M, \tilde{d}} \right)$.
By assumption, $|\calF| \geq \epsilon'|f^{-1}(1)|$.
Observe that the algorithm calls \hyperref[algo: ConfirmDirection]{ConfirmDirection} in line~9 if it found an edge $\set{\bz, \bz^{(\bi)}} \in \mathcal{F}$ in line~8.
Fix an arbitrary edge $e:=\{x,x^{(\ell)}\} \in \mathcal{F}$ with $f(x)=1$.
In line $4$ we have
\[
    \Pr[\bz =x] \geq 1/|f^{-1}(1)|.
\]
Since $x\in f^{-1}_{M, \tilde{d}}(1)$, there are at most $M$ coordinates $i\in [n]$ such that $i \in \fixed(\tilde{d})$ and $x_i \neq \tilde{d}_i$.
Thus, the probability that $\bi=\ell$ in line~7 is at least $1/M$ and the edge $e=\{x, x^{(\ell)}\}$ is found with probability at least $\frac{1}{M|f^{-1}(1)|}$. 
By summing over every edge in $\mathcal{F}$, for each iteration of the loop, the algorithm finds an edge in $\mathcal{F}$ with probability at least
\[
    \frac{|\mathcal{F}|}{M|f^{-1}(1)|} \geq \epsilon'/M.
\]
Then the probability the algorithms fails to reject is at most
\[
    (1-(\epsilon'/M)\cdot 0.9)^{10\lceil M/\epsilon'\rceil}\leq 0.01.
\]
\end{proof}

\begin{remark}
    As discussed in \Cref{subsubsection-nonadaptivity-known-N}, we could write \hyperref[algo: BiasedTest]{BiasedTest} to be non-adaptive. We present it like so for ease of exposition.
\end{remark}

\subsection{The proof of \Cref{thm:correctnessOfUnbiasedTest}}

\begin{algorithm}[H]
\caption{UnbiasedTest}
\vspace{0.15cm}
 \textbf{Input: }$\MQ(f)$ and and $\SAMP(f)$ of some function $f$, a vector $\tilde{d} \in \{0,1,*\}^{n}$ and parameter $\epsilon' \in (0,1)$.\\
 \begin{tikzpicture}
\draw [thick,dash dot] (0,1) -- (16.5,1);
\end{tikzpicture}
\begin{algorithmic}[1]\label{algo: UnbiasedTest}
\State If $ \unfixed(\tilde{d}) = \emptyset$, return. 
\For {$r=1, 2 \ldots, L:=\left\lceil\log\left( \frac{|\unfixed(\tilde{d})|}{\epsilon'} \right) \right\rceil+2$ }
\RepeatN{$s_r:= \left\lceil\frac{50 |\unfixed(\tilde{d})|}{\epsilon' 2^r}\right\rceil$} 
\State Sample $\bi \sim \unfixed(\tilde{d})$ uniformly at random.
\State Draw a set $\bS$ of $3\times 2^r$ samples from $\SAMP(f)$.
\For{each sample $\bz \in \bS$} 
\State Query $f(\bz^{(\bi)})$.
\EndFor 
\If{there are two samples $\bx,\by \in \bS$ with $\bx_{\bi}=0, \by_{\bi}=1$, and $f(\bx^{(\bi)})=f(\by^{(\bi)})=0$}
\State Halt and reject $f$. \Comment{\tiny $(\bx,\bx^{\bi})$ is monotone for $f$, $(\by,\by^{\bi})$ is anti-monotone for $f$. So $f$ is not unate. \normalsize}
\EndIf 
\End
\EndFor 
\State Return.
\end{algorithmic} 
\end{algorithm}

\unatethm*

We first bound the query complexity of the algorithm.
\begin{lemma}\label{lem:QC_Unateness_test}
    \hyperref[algo: UnbiasedTest]{UnbiasedTest}$(\MQ(f), \SAMP(f), \epsilon', \tilde{d})$ makes $\tilde{O}\left(\frac{|\unfixed(\tilde{d})|}{\epsilon'}\right)$ queries to $\MQ(f)$ and $\SAMP(f)$.
\end{lemma}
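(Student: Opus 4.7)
The plan is to simply add up the queries made in all iterations of the two nested loops, and show that the total is $\tilde O(|\unfixed(\tilde d)|/\eps')$. First, I would analyze the cost of a single inner iteration (lines 4--10 of \Cref{algo: UnbiasedTest}): drawing the set $\bS$ of $3\cdot 2^r$ samples costs $3\cdot 2^r$ calls to $\SAMP(f)$, and looping over $\bS$ and querying $f(\bz^{(\bi)})$ for each $\bz\in \bS$ costs $3\cdot 2^r$ calls to $\MQ(f)$. (Sampling $\bi\sim \unfixed(\tilde d)$ is internal randomness and uses no queries.) So each inner iteration makes $O(2^r)$ oracle calls.

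Next, I would sum over the $s_r$ inner iterations for a fixed $r$. Using the definition $s_r:=\lceil 50|\unfixed(\tilde d)|/(\eps' 2^r)\rceil \le 50|\unfixed(\tilde d)|/(\eps' 2^r) + 1$, the total cost of phase $r$ is
\[
s_r\cdot O(2^r) \;\le\; O\!\left(\frac{|\unfixed(\tilde d)|}{\eps' 2^r}\right)\cdot O(2^r) + O(2^r) \;=\; O\!\left(\frac{|\unfixed(\tilde d)|}{\eps'}\right) + O(2^r).
\]
The first term is independent of $r$; the second term is a geometric series in $r$.

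Finally, I would sum over $r=1,\dots,L$ with $L=\lceil \log(|\unfixed(\tilde d)|/\eps')\rceil + 2$. The first term contributes $L\cdot O(|\unfixed(\tilde d)|/\eps') = O(\log(|\unfixed(\tilde d)|/\eps'))\cdot O(|\unfixed(\tilde d)|/\eps')$, which is $\tilde O(|\unfixed(\tilde d)|/\eps')$. The geometric second term contributes $\sum_{r=1}^L O(2^r)=O(2^{L+1})=O(|\unfixed(\tilde d)|/\eps')$, since $2^L=O(|\unfixed(\tilde d)|/\eps')$ by choice of $L$. Adding these gives the claimed $\tilde O(|\unfixed(\tilde d)|/\eps')$ bound.

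There is essentially no obstacle: the argument is routine arithmetic on a geometric sum, and the only mildly delicate point is handling the ceiling in the definition of $s_r$ (which is what produces the extra $O(2^r)$ summand). Note that the stated bound holds unconditionally (i.e.\ in the worst case rather than only ``with high probability''), since the algorithm's loop structure is deterministic.
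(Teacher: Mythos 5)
Your proof is correct and follows essentially the same approach as the paper's: sum $s_r \cdot O(2^r)$ over $r = 1, \dots, L$ and use $L = O(\log(|\unfixed(\tilde d)|/\eps'))$ to conclude. If anything you are slightly more careful than the paper, which silently drops the ceiling in $s_r$; your explicit handling of the resulting extra $\sum_r O(2^r) = O(2^{L+1})$ term is a valid (and tidy) refinement that does not change the conclusion.
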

\begin{proof}
    Recall that $L$ satisfies $L=O\left(\log\left(\frac{|\unfixed(\tilde{d})|}{\epsilon'}\right)\right)$. Each iteration of lines $4-11$ uses $3\times 2^r$ queries to both $\SAMP(f)$ and $\MQ(f)$. Hence, the total number of queries is:

    $$\sum_{r=1}^L s_r \cdot 2\cdot (3\times 2^r) = \sum_{r=1}^L \frac{300|\unfixed(\tilde{d})|}{\epsilon'} = \tilde{O}\left(\frac{|\unfixed(\tilde{d})|}{\epsilon'}\right).$$
\end{proof}

Before giving the proof of the theorem we will need the following. 
\begin{lemma}\label{lem:viol_draw}
  Let $f:\zo^n\to \zo$, $i \in [n]$ and $b \in \zo$. We have:
   
   $$\Prx_{\bz \sim f^{-1}(1)}[ \set{\bz, \bz^{(i)}}  \in \Edge_{i}^b(f)] \geq \frac{|\Edge_{i}^b(f)|}{|f^{-1}(1)|}.$$ 
\end{lemma}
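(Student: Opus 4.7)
\textbf{Proof plan for \Cref{lem:viol_draw}.} The plan is to prove an equality that is even stronger than the claimed inequality by observing that each $b$-monotone edge in direction $i$ contributes exactly one satisfying assignment of $f$. Unpacking the definition of $\Edge_i^b(f)$, for any edge $\{x, x^{(i)}\} \in \Edge_i^b(f)$, the conditions $f(x) = x_i \oplus (1-b)$ and $f(x^{(i)}) = x_i^{(i)} \oplus (1-b)$ force exactly one endpoint to lie in $f^{-1}(1)$: when $b = 1$ it is the endpoint with $i$-th coordinate equal to $1$, and when $b = 0$ it is the endpoint with $i$-th coordinate equal to $0$. I would verify this dichotomy in one or two lines directly from the definition of ``strictly $b$-monotone'' given earlier in the Preliminaries.

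From this observation, the map $\varphi : \{z \in f^{-1}(1) : \{z, z^{(i)}\} \in \Edge_i^b(f)\} \to \Edge_i^b(f)$ defined by $\varphi(z) = \{z, z^{(i)}\}$ is a bijection (surjectivity is immediate, and injectivity follows because each edge has a unique satisfying endpoint, as just established). Hence
\[
\left|\big\{z \in f^{-1}(1) : \{z, z^{(i)}\} \in \Edge_i^b(f)\big\}\right| = |\Edge_i^b(f)|.
\]
Dividing by $|f^{-1}(1)|$ gives
\[
\Prx_{\bz \sim f^{-1}(1)}\!\big[\{\bz,\bz^{(i)}\} \in \Edge_i^b(f)\big] = \frac{|\Edge_i^b(f)|}{|f^{-1}(1)|},
\]
which is in fact an equality and therefore trivially implies the lemma's inequality. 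There is no real obstacle here; the only subtlety is correctly reading off from the definition of $\Edge_i^b(f)$ that each such edge has exactly one satisfying endpoint, which is what makes the bijection (rather than a $2$-to-$1$ map) go through.
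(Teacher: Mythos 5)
Your proof is correct and takes essentially the same approach as the paper's: count, for each edge in $\Edge_i^b(f)$, the unique endpoint lying in $f^{-1}(1)$. You actually handle the key point more carefully than the paper does -- the paper's proof sketch has a stray factor of $1/(2|f^{-1}(1)|)$ and some typos in the case analysis, while your argument cleanly establishes that the map $z \mapsto \{z, z^{(i)}\}$ is a bijection between the satisfying endpoints and the edges, thereby proving the stronger statement that the inequality in the lemma is in fact an equality.
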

\begin{proof}
   Let $i \in [n]$. Without loss of generality assume $b=0$ (the proof for $b=1$ follows a similar argument). Let $\{x,x^{(i)}\} \in \Edge_i^1{(f)}$, where $x_i=0$, $f(x)=1$ and $f(x^{(i)})=1$ and $f(x)=0$. When we sample $\bz \sim \SAMP(f)$ with probability $1/(2|f^{-1}(1)|)$ we have $\bz=x$.  Hence, it follows that $$\Prx_{\bz \sim f^{-1}(1)}[ \Edge_{i}^1(f)] \geq \frac{|\Edge_{i}^1(f)|}{|f^{-1}(1)|}.$$ 
\end{proof}

The proof of  \Cref{thm:correctnessOfUnbiasedTest} follows the one of \cite{BCPRS20} using Levin’s ``work investment strategy''.
\begin{proof}[Proof of \Cref{thm:correctnessOfUnbiasedTest}]
    The bound on the query complexity of \hyperref[algo: UnbiasedTest]{UnbiasedTest} follows from \Cref{lem:QC_Unateness_test}. 
    The algorithm rejects if during some iteration of the two outer loops, it samples $i \in \unfixed(\tilde{d})$ and finds an edge in $\Edge_{i}^0(f)$ and an edge in $\Edge_{i}^{1}(f)$. 
    For $i \in \unfixed(\tilde{d})$ we let 
    
    $$\mu_i := \min\left(\left|\Edge_{i}^{0}(f)\right|
    , \left|\Edge_{i}^{1}(f)\right|\right)/(2|f^{-1}(1)|).$$
    
    For each $r\geq 1$, we denote by 
    
    $$S_r:=\{ i \in \unfixed(\tilde{d}) : 2^{-r}< \mu_i \leq 2^{-r+1} \}.$$

    Observe that $$\sum_{r\geq 1} |S_r|/2^r\geq \sum_{r\geq 1} \sum_{i \in S_r} \mu_i/2 \geq \sum_{i \in [n]} \min\left(\left|\Edge_{i}^{0}(f)\right|
    , \left|\Edge_{i}^{1}(f)\right|\right)/(4|f^{-1}(1)|) \geq \eps'/4.$$

    We also have that by setting $L:=\left\lceil\log\left(\frac{\unfixed(\tilde{d})}{\epsilon'}\right)\right\rceil+2$,
    
    $$\sum_{r > L} |S_r|/2^r \leq \frac{|\unfixed(\tilde{d})|}{2^{L+1}} \leq \epsilon'/8.$$
    
    As such, \begin{equation}\label{eq:lower_bound_SR}
        \sum_{r=1}^L |S_r|/2^r \geq \epsilon'/8.
    \end{equation}
    
    For any $r\geq 1$, let $p_r$ be the probability that one iteration of lines $4-11$ rejects. The probability that \hyperref[algo: UnbiasedTest]{UnbiasedTest} rejects is:

    $$1-\prod_{i=1}^L(1-p_r)^{s_r} \geq 1-e^{-\sum_{i=1}^L p_r s_r}.$$

    Fix $r\geq 1$ and $i\in \unfixed(\tilde{d})$ drawn on line $4$. 
    The event that lines $5-11$ reject for the chosen $r$ and $i$ means we get at least one edge in $\Edge_{\bi}^{0}(f)$ and one in $\Edge_{\bi}^{1}(f)$. 
    Using \Cref{lem:viol_draw}, we have that the probability that neither of these events happen is at most $2(1-\mu_i)^{3 \cdot 2^r}$. 
    Hence if $i \in S_r$, the algorithm rejects with probability at least $1-2(1 - 2^{-r})^{3\cdot 2^r} \geq 0.9$. 
    Therefore, we have that 
    
    $$p_r \geq 0.9 \cdot \frac{|S_r|}{|\unfixed(\tilde{d})|}.$$
    
    Now since $s_r \geq \frac{50|\unfixed(\tilde{d})|}{\eps' 2^r}$, using \Cref{eq:lower_bound_SR} we have:

    $$ \sum_{i=1}^{L} p_r s_r \geq (40 /\eps') \cdot \sum_{i=1}^{L}\frac{|S_r|}{2^r} \geq 5.$$

    Thus the algorithm rejects with probability at least $1-e^{-5} \geq 0.99$.
\end{proof}

\begin{remark}
    The algorithm \hyperref[algo: UnbiasedTest]{UnbiasedTest} itself is non-adaptive since the queries to $\MQ(f)$ (line 7) only depend on the samples from $\SAMP(f)$ and do not depend on the results of previous queries to $\MQ(f)$.
\end{remark}

\end{document}